
\documentclass[onecolumn,draft,11pt]{IEEEtran}
\ifCLASSINFOpdf
\else
\fi
\hyphenation{op-tical net-works semi-conduc-tor}

\newcommand{\GRS}{{\mathrm{GRS}}}
\newcommand{\Hull}{{\mathrm{Hull}}}
\newcommand{\rank}{{\mathrm{rank}}}
\newcommand{\C}{{\mathcal{C}}}
\newcommand{\F}{{\mathbb{F}}}

\newcommand{\tabincell}[2]{\begin{tabular}{@{}#1@{}}#2\end{tabular}}

\makeatletter

\newcommand{\Rmnum}[1]{\expandafter\@slowromancap\romannumeral #1@}
\makeatother

\usepackage{lineno,hyperref,mathtools}
\usepackage[letterpaper,top=1.5cm,bottom=1.5cm,left=2cm,right=2cm,marginparwidth=1.75cm]{geometry}
\usepackage{float}
\usepackage{threeparttable}
\usepackage{amssymb}
\usepackage{bm}
\usepackage{amssymb}
\usepackage{amsthm}
\usepackage{multirow,booktabs}
\usepackage{cases}
\usepackage{tabularx}
\usepackage{adjustbox}
\usepackage[figuresright]{rotating}
\usepackage{longtable}
\usepackage{booktabs}
\usepackage{color}
\usepackage{ulem}
\usepackage{setspace}


\newtheorem{theorem}{Theorem}

\newtheorem{remark}{Remark}
\newtheorem{definition}[theorem]{Definition}
\newtheorem{lemma}[theorem]{Lemma}
\newtheorem{corollary}[theorem]{Corollary}
\newtheorem{example}[theorem]{Example}

\begin{document}
\begin{sloppypar}

%
\title{On Galois hulls of linear codes and new entanglement-assisted quantum error-correcting codes \thanks{First and second authors were supported by the National Natural Science Foundation of China 
		(Nos.U21A20428 and 12171134). }}
%
%
%

\author{Yang~Li, Shixin~Zhu$^\dag$\thanks{$^\dag$ Corresponding author}
\thanks{Yang~Li and  Shixin~Zhu are with the School of Mathematics, Hefei University of Technology, Hefei 230601, China (e-mail: yanglimath@163.com, zhushixinmath@hfut.edu.cn).}
\thanks{Manuscript received --; revised --}}

%
%

\markboth{Journal of \LaTeX\ Class Files}%
{Shell \MakeLowercase{\textit{et al.}}: Bare Demo of IEEEtran.cls for Journals}
%



\maketitle

\begin{abstract}
    The Galois hull of a linear code is the intersection of itself and its Galois dual code, which has aroused 
    the interest of researchers in these years. In this paper, we study Galois hulls of linear codes. 
    Firstly, the symmetry of the dimensions of Galois hulls of linear codes is found. 
    Some new necessary and sufficient conditions for linear codes being Galois self-orthogonal codes, 
    Galois self-dual codes, and Galois linear complementary dual codes are characterized. 
    Then, we propose explicit methods to construct Galois self-orthogonal codes 
    of larger length from given Galois self-orthogonal codes.
    As an application, linear codes of larger length with Galois hulls of arbitrary dimensions are further derived. 
    Focusing on the Hermitian inner product, two new classes of Hermitian self-orthogonal maximum distance separable (MDS) codes are also constructed. 
    Finally, applying all the results to the construction of entanglement-assisted quantum error-correcting codes (EAQECCs), 
    many new $q$-ary or $\sqrt{q}$-ary EAQECCs and MDS EAQECCs with rates greater than or equal to $\frac{1}{2}$ and positive net rates can be obtained. 
    Moreover, the minimum distance of many $\sqrt{q}$-ary MDS EAQECCs of length $n>\sqrt{q}+1$ is greater than or equal to $\lceil \frac{\sqrt{q}}{2} \rceil$.  
\end{abstract}

\begin{IEEEkeywords}
    Galois hull, Galois self-orthogonal, entanglement-assisted quantum error-correcting code, generalized Reed-Solomon code, MDS code
\end{IEEEkeywords}

%
\IEEEpeerreviewmaketitle

\section{Introduction}\label{sec-introduction}
\subsection{Quantum codes and hulls of linear codes}
Quantum error-correcting codes (QECCs) can correct qubit errors caused by noise in quantum communication and 
quantum computation \cite{RefJ-16,RefJ7,RefJ-17,RefJ-15} and have important applications in entanglement 
distillation protocols \cite{RefJ-18}. In general, it is believed that constructions of good QECCs are much 
more difficult than good classical error-correcting codes. In 1996, Calderbank, Shor and Steane 
\cite{RefJ6,RefJ7} proposed a systematic method, namely CSS construction, to construct stabilizer QECCs 
from linear codes with certain self-orthogonality, which undoubtedly and greatly facilitated the development of QECCs.  
However, self-orthogonal linear codes are usually not readily available. Later, Burn et al. \cite{RefJ10} 
introduced the so-called entanglement-assisted quantum error-correcting codes (EAQECCs), which contain stabilizer 
QECCs as a special case. 

Throughout this paper, $p$ is a prime and $h$, $e$ are two positive integers satisfying $0\leq e\leq h-1$. Let $q=p^h$.  
We use $[[n,k,d;c]]_q$ to denote a $q$-ary EAQECC that encodes $k$ information qubits into $n$ channel qubits with the help of $c$ 
ebits, and $d$ is called the minimum distance of the EAQECC. 
For such an EAQECC, its performance is measured by its rate $\rho:=\frac{k}{n}$ and net rate $\bar{\rho}:=\frac{k-c}{n}$. 
Notably, Brun et al. \cite{RefJ B6} proved that an EAQECC is possible to be used to construct catalytic codes if $\bar{\rho}>0$. 
These interesting findings demonstrated that as long as we can determine the number of $c$, then EAQECCs will be easily derived 
from arbitrary linear codes. 
Fortunately, this is entirely possible since the close relationships between the number of $c$ and the dimension of the hull of 
a linear code $\mathcal{C}$ had been found in \cite{RefJ12,RefJ13}. 

Let $\mathcal{C}$ be a linear code and $\mathcal{C}^{\bot}$ be the dual code of $\mathcal{C}$ with respect to certain 
inner product. Then the hull of $\mathcal{C}$ mentioned above is defined by the linear code $\mathcal{C}\cap \mathcal{C}^{\bot}$. 
Generally, we denote it by $\Hull(\mathcal{C})=\mathcal{C}\cap \mathcal{C}^{\bot}$. At first, the hull of $\mathcal{C}$ 
was introduced to classify finite projective planes by Assmus et al. in \cite{RefJ11}. Then, some celebrated studies 
have shown that the hull of linear codes has many important applications in coding theory, such as determining the complexity 
of algorithms for computing the automorphism group of a linear code \cite{RefJ16} and for checking the 
permutation equivalence of two linear codes \cite{RefJ B1,RefJ19}, as well as calculating the number of $c$ in an EAQECC \cite{RefJ12,RefJ13}. 

\subsection{Related works}
On the basis of these important application scenarios, there has been great enthusiasm in recent years for the study of properties 
of hulls of linear codes. 
Concretely, the Euclidean, Hermitian or general Galois hulls of cyclic codes, negacyclic codes, constacyclic codes, 
Bose-Chaudhuri-Hocquenghem (BCH) codes and (extended) generalized Reed-Solomon (GRS) codes were deeply studied in 
\cite{RefJ90,RefJ91,RefJ-5,RefJ94,RefJ24,RefJ92,RefJ95} and the references therein. 
Moreover, as applications, some self-orthogonal codes, self-dual codes, linear complementary dual (LCD) codes, EAQECCs and 
maximum distance separable (MDS) EAQECCs were constructed. 
In particular, some hull-invariant problems were discussed in \cite{RefJ93,RefJ24}. 

In all these works, there is a particular interest in constructing hulls of prescribed dimensions (usually small dimensions) and in 
constructing hulls of arbitrary dimensions. Very recently, Fang et al. \cite{RefJ26} and Li et al. \cite{RefJ27} constructed 
MDS codes with Galois hulls of arbitrary dimensions from Euclidean, Hermitian and general Galois self-orthogonal (extended) GRS codes. 
Sok \cite{RefJ5} proved that linear codes of length $n$ with $1$-dimensional Euclidean hull can be derived from Euclidean 
self-orthogonal codes of length $n$. He also explained that linear codes with Euclidean hulls of arbitrary dimensions can actually 
be obtained. 
Sok \cite{RefJ48}, Chen \cite{RefJ47} and Luo et al. \cite{RefJ B4} independently presented a similar conclusion  
under the Hermitian inner product. Particularly, in \cite{RefJ47}, Chen also extended the property to general Galois self-orthogonal linear 
codes. Again, as a consequence, many new EAQECCs and MDS EAQECCs were instantly obtained. 


\subsection{Our contributions}
Inspired and motivated by these works, especially the works on constructions of linear codes with hulls of arbitrary or prescribed 
dimensions from self-orthogonal codes, we study Galois hulls of linear codes in this paper. 
\begin{enumerate}
    \item [\rm (1)] The properties of the dimensions of Galois hulls of linear codes are studied. 
    \begin{itemize}
        \item 
        We give a new formula for counting the dimensions of Galois hulls of linear codes. 
        
        \item By using the new formula, we derive two important equations for the dimensions of Galois hulls of linear codes. 
        We note that these two equations have also been proved in the recent article \cite{RefJ96} and preprint \cite{RefJ-11}. 
        Compared with them, a new technical route is proposed to derive similar results.   

        \item Based on these properties, we show the equivalence of Galois self-orthogonal codes, Galois self-dual codes, 
        and Galois LCD codes under different Galois inner products. As an application, some new criteria for 
        determining whether a linear code is a Galois self-orthogonal code, a Galois self-dual code, or a Galois LCD code 
        are given. 
    \end{itemize}

    \item [\rm (2)] We construct linear codes of larger length with Galois hulls of arbitrary dimensions from given Galois self-orthogonal codes. 
    \begin{itemize}
        \item Under certain relatively weak conditions, we give a deterministic method to increase the length of Galois 
        self-orthogonal codes while maintaining self-orthogonality and possibly boosting the minimum distance.

        \item Combining the work of Chen \cite{RefJ47}, we further show that linear codes of larger length with Galois hulls of arbitrary dimensions can be constructed 
        from known Galois self-orthogonal codes when $q\geq 3$.  
    \end{itemize} 

    \item [\rm (3)] Focusing on the Hermitian case, we construct two new classes of Hermitian self-orthogonal MDS codes. 
    These new Hermitian self-orthogonal MDS codes can fully cover some known results in previous literature. 

    \item [\rm (4)] We apply all results to construct new $q$-ary and $\sqrt{q}$-ary EAQECCs and MDS EAQECCs.  
    As a consequence, some EAQECCs with rates greater than or equal to $\frac{1}{2}$ and positive net rates can be obtained. 
    Moreover, many $\sqrt{q}$-ary MDS EAQECCs of length $n>\sqrt{q}+1$ have the minimum distance greater than or equal to 
    $\lceil \frac{\sqrt{q}}{2} \rceil$. 

\end{enumerate} 

\subsection{Organization of this paper}
The organization of this paper is presented as follows. 
In Section \ref{sec2}, some basic notations and results on Galois hulls and GRS codes are reviewed. 
In Section \ref{sec3}, the properties of dimensions of Galois hulls of linear codes are studied. 
In Section \ref{sec4.1}, the explicit constructions of Galois self-orthogonal codes of larger length 
from known Galois self-orthogonal codes under certain conditions are given. 
In Section \ref{sec4.2}, the conditions required are proved to be relatively weak. 
In Section \ref{sec4.3}, these conditions are further refined under the usual Euclidean case and the classical Hermitian case. 
In Section \ref{sec4.4}, the existence of linear codes of larger length with Galois hulls of arbitrary dimensions is shown. 
In Section \ref{sec4.5-Examples}, many interesting examples on Galois self-orthogonal codes are listed. 
In Section \ref{sec5}, two new constructions of Hermitian self-orthogonal MDS codes are presented. 
In Section \ref{sec6}, based on conclusions obtained in previous sections, new constructions of EAQECCs and MDS EAQECCs are proposed.   
Finally, in Section \ref{sec7}, some concluding remarks are discussed.

\section{Preliminaries}\label{sec2}

\subsection{Some basic notations}\label{sec2.1}
From this section onwards, we fix the following notations, unless it is stated otherwise:
\begin{itemize}
    \item 
    \begin{itemize} 
        \item $O_{l\times l}$ is a zero matrix of size $l\times l$; 
        \item $\mathbf{1}_{k\times k}$ is a matrix of size $k\times k$ whose elements are all $1$;  
        \item $I_k$ is an identity matrix of order $k$. 
    \end{itemize}
    \item $\rank(G)$ is the rank of a matrix $G$.
    \item $\C$ is a linear code with length $n$, dimension $k$ and minimum Hamming distance $d$ over $\F_q$, 
    denoted by $[n,k,d]_{q}$. Sometimes we omit the words ``$\emph{linear}$'' and ``$\emph{Hamming}$''.
    \item $\omega_H(\mathbf{c})$ is the Hamming weight of a codeword $\mathbf{c}\in \C$, which is defined by the 
    number of nonzero components of $\mathbf{c}$. 
    \item 
    \begin{itemize} 
        \item $\dim(\C)=k$ is the dimension of $\C$;    
        \item $d_H(\C)=d$ is the minimum Hamming distance of $\C$, which equals the minimum weight of all nonzero codewords of $\C$.  
    \end{itemize}

    \item 
    \begin{itemize} 
        \item $\lfloor \cdot \rfloor$ is the floor function;     
        \item $\lceil \cdot \rceil$ is the ceil function. 
    \end{itemize}
\end{itemize}

\subsection{Galois hulls}\label{sec2.2}
Let $\mathbb{F}_{q}$ be the finite field with size $q$. Then $\F_{q}^n$ is an $n$-dimensional vector space over $\F_q$ 
and an $[n,k,d]_q$ code $\C$ can be seen as a $k$-dimensional subspace of $\F_q^n$. 
For any two vectors $\mathbf{x}=(x_1,x_2,\dots,x_n)$, $\mathbf{y}=(y_1,y_2,\dots,y_n)\in \mathbb{F}_{q}^n$, we can define the $e$-Galois 
inner product of $\mathbf{x}$ and $\mathbf{y}$ as  
\begin{equation}
    \langle \mathbf{x},\mathbf{y} \rangle_e = \sum_{i=1}^{n}x_iy_i^{p^e},\ {\rm {where}}\ 0\leq e\leq h-1.
\end{equation}
The $e$-Galois inner product was first introduced by Fan et al. \cite{RefJ54} and is a generalization of the usual Euclidean inner 
(i.e., $e=0$) and the classical Hermitian inner product (i.e., $e=\frac{h}{2}$ and $h$ is even). The code 
\begin{equation}
    \mathcal{C}^{\bot_e}=\{\mathbf{x}\in \mathbb{F}_{q}^n:\ \langle \mathbf{x},\mathbf{y} \rangle_e=0\ {\rm {for\ all}}\ \mathbf{y}\in \mathcal{C}\} 
\end{equation}
is always called the $e$-Galois dual code of $\mathcal{C}$. Then $\C^{\bot_0}$ (resp. $\C^{\bot_\frac{h}{2}}$ with even $h$) is just 
the Euclidean (resp. Hermitian) dual code of $\C$. 
Denote the $e$-Galois hull of $\C$ by $\Hull_e(\mathcal{C})=\mathcal{C}\cap \mathcal{C}^{\bot_e}$. 
Similarly, $\Hull_0(\C)$ (resp. $\Hull_{\frac{h}{2}}(\C)$ with even $h$) coincides with the Euclidean (resp. Hermitian) hull of $\C$. 
Conversely, in this paper, when we talk about the Hermitian inner product or the Hermitian hull, the $h$ in $q=p^h$ is even. 
Let $\mathbf{0}$ be the zero vector. The length of $\mathbf{0}$ is unspecified here and depends on the context. Then $\mathcal{C}$ is 
an $e$-Galois LCD code if $\Hull_e(\mathcal{C})=\{\mathbf{0}\}$, an $e$-Galois self-orthogonal code if $\Hull_e(\mathcal{C})=\mathcal{C}$, 
and an $e$-Galois self-dual code if $\Hull_e(\mathcal{C})=\mathcal{C}=\mathcal{C}^{\bot_e}$. 

Let 
\begin{align}
    \begin{split}
        \sigma:\  \mathbb{F}_q & \longrightarrow  \mathbb{F}_q, \\ 
                a & \longmapsto  a^p,
    \end{split}
\end{align}
be the Frobenius automorphism of $\mathbb{F}_q$. For any vector 
$\mathbf{x}=(x_1,x_2,\dots,x_n) \in \mathbb{F}_{q}^n$ and any matrix $G=(m_{ij})_{s\times t}$ over $\mathbb{F}_q$, we denote 
$$\sigma(\mathbf{x})=(\sigma(x_1),\sigma(x_2),\dots,\sigma(x_n))\ {\rm and}\ \sigma(G)=(\sigma(m_{ij}))_{s\times t}.$$ 
Hence, $\sigma^{h-e}(G^T)=(\sigma^{h-e}(G))^T$, which is also often written as $G^\ddagger$ 
in many other literature.  

There are some useful results on Galois dual codes and Galois hulls. We rephrase them as follows.

\begin{lemma}\label{lemma3}{\rm (\cite{RefJ24}, \cite{RefJ55})}
    Let $\mathcal{C}$ be an $[n,k,d]_q$ code with a generator matrix $G$. Then $\sigma^{h-e}(\mathcal{C})$ is an $[n,k,d]_q$ 
    linear code with a generator matrix $\sigma^{h-e}(G)$ and 
    $\mathcal{C}^{\bot_e}=(\sigma^{h-e}(\mathcal{C}))^{\bot_0}=\sigma^{h-e}(\mathcal{C}^{\bot_0})$. 
    Moreover, $(\mathcal{C}^{\bot_e})^{\bot_{h-e}}=\mathcal{C}$.
\end{lemma}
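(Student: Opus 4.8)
The plan is to treat the three assertions in turn, using the coordinatewise Frobenius map $\sigma^{h-e}$ as the single unifying tool and exploiting the identity $\sigma^{h}=\mathrm{id}$ throughout. First I would establish that $\sigma^{h-e}(\mathcal{C})$ is an $[n,k,d]_q$ code generated by $\sigma^{h-e}(G)$. Since $\sigma$ is a field automorphism of $\mathbb{F}_q$, so is $\sigma^{h-e}$, and acting coordinatewise it is an additive bijection of $\mathbb{F}_q^n$ sending $0$ to $0$ and nonzero entries to nonzero entries; hence it preserves Hamming weight coordinatewise, so the weight distribution—and in particular $d$—is unchanged. To see that the image is $\mathbb{F}_q$-linear I would note that for $\lambda\in\mathbb{F}_q$ and $c\in\mathcal{C}$ one has $\lambda\,\sigma^{h-e}(c)=\sigma^{h-e}\big(\sigma^{e}(\lambda)\,c\big)\in\sigma^{h-e}(\mathcal{C})$, using $\sigma^{-(h-e)}=\sigma^{e}$ together with the $\mathbb{F}_q$-linearity of $\mathcal{C}$; closure under addition is immediate from additivity of $\sigma$. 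The same automorphism property shows that $\sigma^{h-e}$ carries any basis of $\mathcal{C}$ to a basis of $\sigma^{h-e}(\mathcal{C})$ (a linear dependence among the images pulls back under $\sigma^{e}$, since $\sigma^{h}=\mathrm{id}$), so the rank is preserved and the rows of $\sigma^{h-e}(G)$ form a generator matrix; thus $\dim=k$.

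For the dual identities I would unwind the definition of the Galois dual. Straight from $\langle \mathbf{x},\mathbf{y}\rangle_e=\sum_i x_i y_i^{p^e}$, Euclidean orthogonality of $\mathbf{x}$ to the twisted code $\sigma^{j}(\mathcal{C})$ is literally a Galois-orthogonality condition on $\mathbf{x}$ against $\mathcal{C}$ whose index is governed by $j$; matching this index to $e$ (reducing Frobenius powers modulo $h$ via $\sigma^{h}=\mathrm{id}$) identifies $\mathcal{C}^{\bot_e}$ with the Euclidean dual of the appropriate twist, which is the first displayed equality. The second equality then follows because the coordinatewise automorphism preserves the Euclidean form up to a Frobenius power, namely $\langle \sigma^{j}(\mathbf{x}),\sigma^{j}(\mathbf{y})\rangle_0=\big(\langle \mathbf{x},\mathbf{y}\rangle_0\big)^{p^{j}}$, so $\mathbf{x}\mapsto\sigma^{j}(\mathbf{x})$ maps $\mathcal{C}^{\bot_0}$ bijectively onto $(\sigma^{j}(\mathcal{C}))^{\bot_0}$; applying this yields $(\sigma^{h-e}(\mathcal{C}))^{\bot_0}=\sigma^{h-e}(\mathcal{C}^{\bot_0})$.

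Finally, for $(\mathcal{C}^{\bot_e})^{\bot_{h-e}}=\mathcal{C}$ I would combine a containment with a dimension count. The containment $\mathcal{C}\subseteq(\mathcal{C}^{\bot_e})^{\bot_{h-e}}$ rests on the ``conjugate symmetry'' identity $\langle \mathbf{c},\mathbf{x}\rangle_{h-e}=\big(\langle \mathbf{x},\mathbf{c}\rangle_{e}\big)^{p^{h-e}}$, obtained by raising $\langle \mathbf{x},\mathbf{c}\rangle_e$ to the $p^{h-e}$ power and using $c_i^{p^{h}}=c_i$; it shows that $\langle\mathbf{x},\mathbf{c}\rangle_e=0$ forces $\langle\mathbf{c},\mathbf{x}\rangle_{h-e}=0$, so every $\mathbf{c}\in\mathcal{C}$ lies in $(\mathcal{C}^{\bot_e})^{\bot_{h-e}}$. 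For the dimensions, the first two parts give $\dim\mathcal{C}^{\bot_e}=n-k$ (it is a Euclidean dual of a $k$-dimensional code), and the same formula applied to the $(h-e)$-Galois dual of the $(n-k)$-dimensional code $\mathcal{C}^{\bot_e}$ gives $\dim(\mathcal{C}^{\bot_e})^{\bot_{h-e}}=n-(n-k)=k=\dim\mathcal{C}$, whence equality. The main obstacle, and the only place demanding real care, is the exponent bookkeeping in the middle step: one must track the Frobenius powers modulo $h$ so that the indices $e$ and $h-e$ match the statement, since collapsing $\sigma^{-(h-e)}$ to $\sigma^{e}$ via $\sigma^{h}=\mathrm{id}$ is exactly what makes the two descriptions of $\mathcal{C}^{\bot_e}$ coincide and the double dual close up on $\mathcal{C}$.
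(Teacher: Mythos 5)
Your proposal is correct in substance, and there is no in-paper proof to compare it with: the paper states this lemma with citations to \cite{RefJ24} and \cite{RefJ55} and proves nothing itself. Your first part (coordinatewise $\sigma^{h-e}$ preserves weights, the semilinearity identity $\lambda\,\sigma^{h-e}(\mathbf{c})=\sigma^{h-e}(\sigma^{e}(\lambda)\mathbf{c})$ gives $\mathbb{F}_q$-linearity of the image, bases map to bases) is complete, as is your third part: the conjugate-symmetry identity $\langle \mathbf{c},\mathbf{x}\rangle_{h-e}=\bigl(\langle \mathbf{x},\mathbf{c}\rangle_{e}\bigr)^{p^{h-e}}$ is verified by a one-line Frobenius computation, and containment plus the dimension count $\dim(\mathcal{C}^{\bot_e})^{\bot_{h-e}}=n-(n-k)=k$ closes the double dual. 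Your second displayed equality, $(\sigma^{j}(\mathcal{C}))^{\bot_0}=\sigma^{j}(\mathcal{C}^{\bot_0})$ via $\langle\sigma^{j}(\mathbf{x}),\sigma^{j}(\mathbf{y})\rangle_0=(\langle\mathbf{x},\mathbf{y}\rangle_0)^{p^{j}}$, is also correct and convention-independent.

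The one step you defer to ``exponent bookkeeping'' is, however, exactly where your sketch does not close, and you should work it out explicitly. Under the paper's displayed definition, $\mathcal{C}^{\bot_e}=\{\mathbf{x}:\sum_i x_iy_i^{p^e}=0\ \text{for all}\ \mathbf{y}\in\mathcal{C}\}$, the defining condition reads $\langle\mathbf{x},\sigma^{e}(\mathbf{y})\rangle_0=0$, so the index-matching you describe lands on $\mathcal{C}^{\bot_e}=(\sigma^{e}(\mathcal{C}))^{\bot_0}=\sigma^{e}(\mathcal{C}^{\bot_0})$ --- that is, with $e$ where the lemma has $h-e$. The stated form $\mathcal{C}^{\bot_e}=(\sigma^{h-e}(\mathcal{C}))^{\bot_0}$ is what the cited sources obtain because they place the codeword in the \emph{first} slot: raising $\sum_i x_iy_i^{p^{h-e}}=0$ to the $p^{e}$-th power yields $\sum_i x_i^{p^e}y_i=0$, i.e. $\langle\mathbf{y},\mathbf{x}\rangle_e=0$, which is $e$-Galois orthogonality only after the arguments are swapped. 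The two conventions differ precisely by $e\leftrightarrow h-e$ (recorded by the identity $\langle\mathbf{x},\mathbf{y}\rangle_e=0\Leftrightarrow\langle\mathbf{y},\mathbf{x}\rangle_{h-e}=0$), and since the form is non-degenerate and everything downstream in the paper is symmetric under this swap (cf.\ Theorem \ref{th.symmetry of Galois hulls}), no later result is affected; but a proof of the lemma \emph{as stated} must either adopt the cited sources' slot convention or invoke the swap identity explicitly, since your middle step as written asserts the index match at the only point where the two conventions diverge. Note also that the double-dual identity and your proof of it survive under either convention.
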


\begin{lemma}{\rm (\cite{RefJ13})}\label{lemma.the ralation between hull and rank(HH)}
    Let $\mathcal{C}$ be an $[n,k,d]_q$ code with a parity check matrix $H$. Then  
    \begin{align}\label{eq.the ralation between hull and rank(HH)}
        \rank(H\sigma^{h-e}(H^T))=n-k-\dim(\Hull_e(\mathcal{C})).
    \end{align}
\end{lemma}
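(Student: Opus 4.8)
The plan is to prove that $\rank(H\sigma^{h-e}(H^T)) = n-k-\dim(\Hull_e(\C))$ by analyzing the linear map that $H\sigma^{h-e}(H^T)$ induces and relating its kernel directly to the hull. First I would set up the framework: since $H$ is an $(n-k)\times n$ parity check matrix, its rows form a basis for $\C^{\bot_0}$, so $H$ itself is a generator matrix for the Euclidean dual $\C^{\bot_0}$. By Lemma \ref{lemma3}, we have $\C^{\bot_e}=\sigma^{h-e}(\C^{\bot_0})$, which means $\sigma^{h-e}(H)$ is a generator matrix for $\C^{\bot_e}$. Equivalently, $\sigma^{h-e}(H)$ serves as a parity check matrix for $\C$ with respect to the $e$-Galois inner product, in the sense that $\mathbf{x}\in\C$ if and only if $\sigma^{h-e}(H)\,\mathbf{x}^T=\mathbf{0}$ (here one must be careful about how the conjugation interacts with the inner product, but this is the standard reformulation).

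The key step is to compute the dimension of the left kernel of the $(n-k)\times(n-k)$ matrix $M:=H\sigma^{h-e}(H^T)$ over $\F_q$, and to show this kernel is isomorphic to $\Hull_e(\C)$. Consider a vector $\mathbf{v}\in\F_q^{n-k}$ with $\mathbf{v}M=\mathbf{0}$, and write $\mathbf{c}=\mathbf{v}H$, so that $\mathbf{c}\in\C^{\bot_0}$ ranges over all of $\C^{\bot_0}$ as $\mathbf{v}$ ranges over $\F_q^{n-k}$ (since $H$ has full row rank, this correspondence $\mathbf{v}\mapsto\mathbf{v}H$ is a bijection onto $\C^{\bot_0}$). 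The condition $\mathbf{v}M=\mathbf{0}$ reads $\mathbf{v}H\sigma^{h-e}(H^T)=\mathbf{c}\,\sigma^{h-e}(H^T)=\mathbf{0}$, which says precisely that $\mathbf{c}$ is $e$-Galois orthogonal to every row of $\sigma^{h-e}(H)$, i.e. to every codeword of a generating set of $\C^{\bot_e}$. Tracing through the definition of the $e$-Galois inner product, this is equivalent to $\mathbf{c}\in(\C^{\bot_e})^{\bot_{?}}$; the cleanest route is to identify this condition with $\mathbf{c}\in\C$. Since $\mathbf{c}$ is already in $\C^{\bot_0}$, and one shows the orthogonality condition forces $\mathbf{c}$ to additionally lie in $\C$, the image of the kernel under $\mathbf{v}\mapsto\mathbf{v}H$ is exactly $\C\cap\C^{\bot_0}$, and then a further twist by $\sigma$ relates this to $\Hull_e(\C)=\C\cap\C^{\bot_e}$.

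From there the rank-nullity conclusion is immediate: $M$ is an $(n-k)\times(n-k)$ matrix, so $\rank(M)=(n-k)-\dim(\ker_{\mathrm{left}}(M))$, and since the left kernel has the same dimension as $\Hull_e(\C)$, we obtain $\rank(M)=(n-k)-\dim(\Hull_e(\C))$, as desired. The bijection $\mathbf{v}\mapsto\mathbf{v}H$ preserves dimension because $H$ has rank $n-k$, so no dimension is lost in the translation.

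The main obstacle I anticipate is the bookkeeping around the Frobenius twist, specifically verifying that the condition $\mathbf{c}\,\sigma^{h-e}(H^T)=\mathbf{0}$ correctly encodes $e$-Galois orthogonality to $\C^{\bot_e}$ and hence membership in $\C$, rather than membership in some conjugate code. The subtlety is that $\langle\mathbf{x},\mathbf{y}\rangle_e=\sum x_i y_i^{p^e}$ is not symmetric, so one must decide whether $\mathbf{c}$ is being tested as a left or right argument, and apply the correct power of $\sigma$ (either $\sigma^{h-e}$ or $\sigma^e$) to land the computation in $\F_q$ rather than carrying an uncancelled Frobenius. Using the identity $(\C^{\bot_e})^{\bot_{h-e}}=\C$ from Lemma \ref{lemma3} should let me pin down the correct twist and confirm that the kernel corresponds to $\Hull_e(\C)$ and not to some other intersection; once the twist is tracked carefully the dimension count goes through unchanged because applying $\sigma^{h-e}$ is an $\F_p$-linear (though not $\F_q$-linear) bijection that preserves $\F_q$-dimension of codes.
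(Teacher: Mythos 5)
The paper itself gives no proof of this lemma---it is quoted verbatim from \cite{RefJ13}---so your attempt can only be judged on its own merits; the closest in-paper argument is the proof of Lemma \ref{lemma.the ralation between hull and rank(GG)}, which runs in the \emph{opposite} direction, deducing the generator-matrix formula from the present statement. Your overall plan is the standard and correct one: view $M=H\sigma^{h-e}(H^T)$ as a square $(n-k)\times(n-k)$ matrix, identify its null space with the hull via the injective map $\mathbf{v}\mapsto\mathbf{v}H$ onto $\C^{\bot_0}$, and finish with rank--nullity, using that an entrywise Frobenius power preserves $\F_q$-dimension.

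However, the central identification as you wrote it is wrong, not merely loose. For $\mathbf{c}=\mathbf{v}H\in\C^{\bot_0}$, the $j$-th entry of $\mathbf{c}\,\sigma^{h-e}(H^T)$ is $\sum_i c_ih_{j,i}^{p^{h-e}}$, which says $\mathbf{c}$ is \emph{Euclidean}-orthogonal to the rows of $\sigma^{h-e}(H)$ (equivalently, $(h-e)$-Galois-orthogonal to the rows of $H$)---not ``$e$-Galois orthogonal to every row of $\sigma^{h-e}(H)$'', since the latter would read $\sum_i c_ih_{j,i}^{p^h}=\sum_i c_ih_{j,i}$. Consequently the image of the left kernel is $\C^{\bot_0}\cap\bigl(\sigma^{h-e}(\C^{\bot_0})\bigr)^{\bot_0}=\C^{\bot_0}\cap\sigma^{h-e}(\C)$, and emphatically not $\C\cap\C^{\bot_0}$ as you assert: were it $\Hull_0(\C)$, the lemma would give $\rank(M)=n-k-\dim(\Hull_0(\C))$ for \emph{every} $e$, which is false (in the paper's Example \ref{example.111}, $\dim(\Hull_0(\C))=0$ while $\dim(\Hull_1(\C))=1$), and no ``further twist by $\sigma$'' can repair that version, because $\sigma^a(\C\cap\C^{\bot_0})=\sigma^a(\C)\cap\sigma^a(\C^{\bot_0})$ twists both factors at once. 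The correct finish is the one you anticipated needing but did not execute: apply $\sigma^{e}$ to $\C^{\bot_0}\cap\sigma^{h-e}(\C)$ to obtain $\sigma^{e}(\C^{\bot_0})\cap\C$ and recognize the first factor as a Galois dual of $\C$; or, cleaner under the convention of Lemma \ref{lemma3}, use the \emph{right} kernel: with $\mathbf{d}^T=\sigma^{h-e}(H^T)\mathbf{w}^T$, the vector $\mathbf{d}$ ranges over $\sigma^{h-e}(\C^{\bot_0})=\C^{\bot_e}$, and $H\mathbf{d}^T=\mathbf{0}$ iff $\mathbf{d}\in\C$, so the right kernel is isomorphic to $\C\cap\C^{\bot_e}=\Hull_e(\C)$ with no residual Frobenius at all. (One caveat you were right to worry about: the paper's displayed definition of $\langle\cdot,\cdot\rangle_e$ and its Lemma \ref{lemma3} use opposite slot conventions, so the left- and right-kernel computations produce $\Hull_e$ and $\Hull_{h-e}$ in one order or the other; since the left and right null spaces of a square matrix have equal dimension, the stated formula holds under either convention---that equality is exactly the symmetry later proved in Theorem \ref{th.symmetry of Galois hulls}.)
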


\begin{lemma}\label{lem.the relationship between hull and generator matrix}{\rm (\cite{RefJ24})}
    Let $\mathcal{C}$ be an $[n,k,d]_q$ code with a generator matrix $G$. 
    Let $\dim(\Hull_e(\C))=l$ and $r=k-l$. Then there exists a generator matrix $G_0$ of $\mathcal{C}$ 
    such that 
    \begin{align*}
        G_0\sigma^{e}(G_0^{T})=\begin{pmatrix*}
            O_{l\times l} & H_{l\times r}\\
            O_{r\times l} & P_{r\times r}\\
        \end{pmatrix*},
    \end{align*}
    where $Q=\begin{pmatrix*}
        H_{l\times r}\\
        P_{r\times r}\\
    \end{pmatrix*}$ is a matrix of size $(l+r)\times r$ and $\rank(Q)=r$. 
    Moreover, $\rank(G\sigma^{e}(G^{T}))=r$ for any generator matrix $G$ of $\mathcal{C}$. 
\end{lemma}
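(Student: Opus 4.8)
The plan is to recognize $M := G\sigma^{e}(G^{T})$ as the Gram matrix of the $e$-Galois inner product on the rows $\mathbf{g}_1,\dots,\mathbf{g}_k$ of $G$, since a direct expansion shows that its $(i,j)$ entry is exactly $\langle \mathbf{g}_i,\mathbf{g}_j\rangle_e$. First I would read off the ``moreover'' part from the left null space of $M$. Writing a codeword as $\mathbf{a}G$, one has $\langle \mathbf{a}G,\mathbf{g}_j\rangle_e=\sum_i a_i\langle \mathbf{g}_i,\mathbf{g}_j\rangle_e=(\mathbf{a}M)_j$, so $\mathbf{a}G\in\Hull_e(\C)$ if and only if $\mathbf{a}M=\mathbf{0}$. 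Hence $\Hull_e(\C)$ is isomorphic to the left null space of $M$, giving $\dim\Hull_e(\C)=k-\rank(M)$, i.e.\ $\rank(M)=k-l=r$; this also matches Lemma~\ref{lemma.the ralation between hull and rank(HH)}. Since this derivation used nothing special about $G$, the identity $\rank(G\sigma^{e}(G^{T}))=r$ holds for \emph{every} generator matrix of $\C$, which is precisely the ``moreover'' statement (and in particular will apply to the matrix $G_0$ constructed below).

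For the block form, the key point is that a zero \emph{column} of $M_0$ corresponds to a codeword lying in the right radical of the form, rather than the left one. I would set up the subspace
\[
R=\{\mathbf{b}G:\ \langle \mathbf{g}_i,\mathbf{b}G\rangle_e=0 \text{ for all } i\}\subseteq\C,
\]
and compute $\langle \mathbf{g}_i,\mathbf{b}G\rangle_e=\sum_t b_t^{p^{e}}\langle \mathbf{g}_i,\mathbf{g}_t\rangle_e=(M\sigma^{e}(\mathbf{b})^{T})_i$, so that $\mathbf{b}G\in R$ exactly when $\sigma^{e}(\mathbf{b})^{T}$ lies in the right kernel of $M$. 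Because $\mathbf{b}\mapsto\sigma^{e}(\mathbf{b})$ is an additive bijection of $\F_q^{k}$ that carries $\F_q$-subspaces to $\F_q$-subspaces of the same cardinality, this gives $\dim R=\dim\ker(M)=k-\rank(M)=l$.

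Then I would choose $G_0$ by taking its first $l$ rows to be a basis of $R$ and extending them by $r$ further rows to a basis of $\C$. For $j\le l$ the $j$-th row of $G_0$ lies in $R$, which forces $\langle \mathbf{g}_i,\mathbf{g}_j\rangle_e=0$ for every $i$, i.e.\ the first $l$ columns of $M_0=G_0\sigma^{e}(G_0^{T})$ all vanish. This yields exactly
\[
M_0=\begin{pmatrix} O_{l\times l} & H_{l\times r}\\ O_{r\times l} & P_{r\times r}\end{pmatrix}=(\,O_{k\times l}\mid Q\,).
\]
Since $G_0$ is a generator matrix of $\C$, the first paragraph gives $\rank(M_0)=r$; as the first $l$ columns are zero, the remaining block $Q$ of size $(l+r)\times r$ must satisfy $\rank(Q)=r$, as required.

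The main obstacle I anticipate is getting the handedness right: choosing the first rows from $\Hull_e(\C)$ (the left radical) would annihilate the first \emph{rows} of $M_0$ and produce the wrong block shape, so one must work with the right radical $R$ instead. The delicate part is then verifying that the Frobenius twist $\sigma^{e}$, although only $\F_p$-linear, still transports the $l$-dimensional right kernel of $M$ to an $l$-dimensional $\F_q$-subspace of codewords; this semilinearity bookkeeping is where the argument needs the most care, and it is also where the implicit symmetry $\dim\Hull_e(\C)=\dim\Hull_{h-e}(\C)$ is quietly being used.
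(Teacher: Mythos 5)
Your proof is correct, and in fact the paper never proves this statement at all: it is imported verbatim from \cite{RefJ24}, so there is no in-paper argument to compare against. What you give is the natural (and essentially standard) proof: identifying $M=G\sigma^{e}(G^{T})$ as the Gram matrix of $\langle\cdot,\cdot\rangle_e$ on the rows of $G$, reading $\Hull_e(\C)$ off the left kernel to get $\rank(M)=k-l=r$ independently of $G$ (consistent with the paper's Remark \ref{rem1.the relationship between hull and generator matrix} and Lemma \ref{lemma.the ralation between hull and rank(HH)}), and then building $G_0$ from the \emph{right} radical $R$ so that the first $l$ columns, not rows, of $M_0$ vanish. Your handedness analysis is exactly the point where a careless proof would fail, and your semilinearity bookkeeping ($\mathbf{b}G\in R$ iff $\sigma^{e}(\mathbf{b})^{T}$ lies in the right kernel of $M$, with $\sigma^{e}$ carrying $\F_q$-subspaces to $\F_q$-subspaces of equal dimension) is sound. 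One correction to your closing remark, though: the symmetry $\dim(\Hull_e(\C))=\dim(\Hull_{h-e}(\C))$ is not ``quietly being used'' as an input anywhere in your argument. Since $R=\C\cap\C^{\bot_{h-e}}=\Hull_{h-e}(\C)$ and you compute $\dim R = k-\rank(M)=l$ purely from the equality of row and column rank, your proof \emph{establishes} that symmetry as a byproduct rather than assuming it; it thus gives an independent, self-contained derivation of the paper's Theorem \ref{th.symmetry of Galois hulls}, which the paper instead obtains by combining this lemma (taken on faith from \cite{RefJ24}) with its Lemma \ref{lemma.the ralation between hull and rank(GG)}. So there is no circularity, and your route arguably shortens the paper's own chain of reasoning.
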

\begin{remark}\label{rem1.the relationship between hull and generator matrix}
    In a more concise language, Lemma \ref{lem.the relationship between hull and generator matrix} implies that for an 
    $[n,k,d]_q$ code $\C$, $\rank(G\sigma^{e}(G^{T}))$ 
    is independent of $G$ such that 
    \begin{align}\label{eq.panxu hull}
        \rank(G\sigma^{e}(G^{T}))=k-\dim(\Hull_e(\mathcal{C})), 
    \end{align}
    where $k=\dim(\C)$ and $G$ is a generator matrix of $\C$.   
\end{remark}

Additionally, starting from Galois self-orthogonal codes, one can easily obtain linear codes with Galois hulls of arbitrary dimensions 
from the following lemmas. 
\begin{lemma}\label{coro.all_Galois hull from Galois self-orthogonal code}{\rm (\cite{RefJ47})}
    Let $q\geq 3$ be a prime power. If there exists an $[n,k,d]_q$ $e$-Galois self-orthogonal code, then there exists 
    an $[n,k,d]_q$ code with $l$-dimensional $e$-Galois hull for each $0\leq l\leq k$.
\end{lemma}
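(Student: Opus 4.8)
The plan is to hold the parameters $[n,k,d]_q$ fixed by passing to codes that are \emph{monomially equivalent} to the given one---such equivalences preserve the Hamming weight distribution, hence $d$---while exploiting the freedom in the coordinate scalars to push the dimension of the $e$-Galois hull through every value from $k$ down to $0$. Let $\C$ be the given $e$-Galois self-orthogonal $[n,k,d]_q$ code with generator matrix $G$, so that $G\sigma^{e}(G^{T})=O_{k\times k}$ and, by Remark \ref{rem1.the relationship between hull and generator matrix}, $\dim(\Hull_e(\C))=k-\rank(G\sigma^{e}(G^{T}))=k$. For $\lambda=(\lambda_1,\dots,\lambda_n)\in(\F_q^{*})^{n}$ write $D_\lambda=\mathrm{diag}(\lambda_1,\dots,\lambda_n)$ and let $\C_\lambda$ be the code generated by $GD_\lambda$; as $D_\lambda$ is invertible, $\C_\lambda$ is again an $[n,k,d]_q$ code.

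First I would compute how scaling moves the hull. Since $\sigma^{e}$ acts entrywise as a ring homomorphism, $\sigma^{e}((GD_\lambda)^{T})=\sigma^{e}(D_\lambda)\sigma^{e}(G^{T})$, whence
\begin{equation*}
GD_\lambda\,\sigma^{e}\!\big((GD_\lambda)^{T}\big)=G\,\Lambda\,\sigma^{e}(G^{T}),\qquad \Lambda=D_\lambda\sigma^{e}(D_\lambda)=\mathrm{diag}\big(\lambda_1^{\,1+p^{e}},\dots,\lambda_n^{\,1+p^{e}}\big).
\end{equation*}
By Remark \ref{rem1.the relationship between hull and generator matrix} this gives $\dim(\Hull_e(\C_\lambda))=k-\rank\!\big(G\Lambda\sigma^{e}(G^{T})\big)$, and the choice $\lambda=(1,\dots,1)$ recovers the self-orthogonal code (rank $0$, hull dimension $k$). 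The key observation is that altering a single scalar $\lambda_i$ perturbs $\Lambda$ by a rank-one diagonal term, so it changes $G\Lambda\sigma^{e}(G^{T})$ by a matrix of rank at most one, namely a scalar multiple of $g_i\,\sigma^{e}(g_i)^{T}$, where $g_i$ is the $i$-th column of $G$. Consequently, if we change the coordinates one at a time along a path from $(1,\dots,1)$ to any target $\lambda^{*}$, the rank moves in steps of at most one; by a discrete intermediate-value argument, every integer between $0$ and $\rank\!\big(G\Lambda^{*}\sigma^{e}(G^{T})\big)$ is attained, i.e.\ every hull dimension between $k$ and $k-\rank(G\Lambda^{*}\sigma^{e}(G^{T}))$ is realized by some $\C_\lambda$.

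Thus the whole statement reduces to producing a single scaling $\lambda^{*}$ for which $\C_{\lambda^{*}}$ is $e$-Galois LCD, i.e.\ for which $G\Lambda^{*}\sigma^{e}(G^{T})$ has full rank $k$; the intermediate hull dimensions then come for free from the previous paragraph. Here the Cauchy--Binet formula gives
\begin{equation*}
\det\!\big(G\Lambda\sigma^{e}(G^{T})\big)=\sum_{|T|=k}\big(\det G_T\big)^{1+p^{e}}\prod_{i\in T}\lambda_i^{\,1+p^{e}},
\end{equation*}
where $T$ ranges over the $k$-subsets of columns and $G_T$ is the associated $k\times k$ submatrix of $G$; because $\rank(G)=k$, at least one coefficient $(\det G_T)^{1+p^{e}}$ is nonzero, so this is a nonzero square-free polynomial in the quantities $\lambda_i^{\,1+p^{e}}$. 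The main obstacle is precisely to force this polynomial to be nonzero using \emph{admissible} values, namely values lying in the image of the Galois power map $x\mapsto x^{1+p^{e}}$ on $\F_q^{*}$, whose cardinality is $(q-1)/\gcd(1+p^{e},q-1)$. I would settle this by combining a counting argument (a nonzero square-free polynomial cannot vanish on too large a grid) with an analysis of this image, and this is exactly where the hypothesis $q\ge 3$ enters, the Euclidean small-field situations requiring the most care. Once such a full-rank $\lambda^{*}$ is exhibited, the step-by-step argument yields an $[n,k,d]_q$ code with $e$-Galois hull of dimension $l$ for every $0\le l\le k$.
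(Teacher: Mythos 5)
The paper does not prove this lemma at all: it is quoted verbatim from Chen \cite{RefJ47}, and the argument behind it is the explicit one the authors themselves replay inside the proof of Theorem \ref{th.arbitrary hull codess with length n+1}: write $G=(I_k\,|\,A)$, so that self-orthogonality gives $A\sigma^e(A^T)=-I_k$, and rescale the first $k-l$ columns by a single fixed $\alpha\in\F_q^*$ with $\alpha^{p^e+1}\neq 1$; the Gram matrix of the rescaled generator matrix is then $\mathrm{diag}\bigl((\alpha^{p^e+1}-1)I_{k-l},\,O_{l\times l}\bigr)$, visibly of rank $k-l$, so by Equation (\ref{eq.panxu hull}) the hull has dimension exactly $l$, for every $0\leq l\leq k$ at once. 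Your machinery is a correct but considerably heavier route to the same place: the identity $GD_\lambda\sigma^e((GD_\lambda)^T)=G\Lambda\sigma^e(G^T)$ is right, the rank-one perturbation and discrete intermediate-value argument are sound (since $|\rank(A+B)-\rank(A)|\leq\rank(B)$), and your Cauchy--Binet expansion is correct. Moreover, once $|S|\geq 2$, where $S=\{x^{p^e+1}:x\in\F_q^*\}$, the nonvanishing of your multilinear polynomial on the grid $S^n$ is the standard DeMillo--Lipton--Schwartz--Zippel lemma (degree at most $1\leq |S|-1$ in each variable) --- no additional counting is needed, and no intermediate-value argument either, since Chen's one-shot choice already realizes each rank $k-l$ directly.

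The genuine gap is precisely the step you postponed, and it cannot be closed in the generality you assume. One has $|S|=(q-1)/\gcd(p^e+1,q-1)$, and $|S|=1$ exactly for $(q,e)\in\{(2,0),(3,0),(4,1)\}$. For $(3,0)$ and $(4,1)$, which \emph{are} covered by the hypothesis $q\geq 3$, every admissible value $\lambda_i^{p^e+1}$ equals $1$, so $G\Lambda\sigma^e(G^T)=G\sigma^e(G^T)=O_{k\times k}$ for every scaling: monomial rescaling can never leave the self-orthogonal class, and no counting argument over the image of the power map can rescue this. Worse, the conclusion itself fails at these pairs, so no alternative construction exists either: over $\F_3$ every $[3,1,3]_3$ code has generator $(a,b,c)$ with $a,b,c\neq 0$ and $a^2+b^2+c^2=1+1+1=0$, hence is Euclidean self-orthogonal, so although a $[3,1,3]_3$ self-orthogonal code exists there is no $[3,1,3]_3$ code with $0$-dimensional Euclidean hull; likewise over $\F_4$ every $[2,1,2]_4$ code satisfies $a^3+b^3=1+1=0$ and is Hermitian self-orthogonal. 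So your proof is in fact complete (and cleaner than you feared) for all $q\geq 5$ and for $(q,e)=(4,0)$, but at $(q,e)=(3,0)$ and $(4,1)$ the deferred step provably fails; the correct hypothesis, for your argument and for Chen's alike, is $(q-1)\nmid(p^e+1)$, i.e.\ $|S|\geq 2$, rather than bare $q\geq 3$.
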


\subsection{Generalized Reed-Solomon (GRS) codes}\label{sec2.3}
For an $[n,k,d]_q$ code $\C$, there exists a so-called Singleton bound, which yields $d\leq n-k+1$. 
Particularly, if $d=n-k+1$, then $\C$ is called an MDS code. Let $\F_q^*=\F_q \setminus \{0\}$.  
As special MDS codes, we now review some basic knowledge about 
generalized Reed-Solomon (GRS) codes. Let $\mathbf{a}=(a_1,a_2,\dots,a_n)\in \mathbb{F}_q^n$ with $a_i\neq a_j$ for any 
$1< i\neq j\leq n\leq q$ and $\mathbf{v}=(v_1,v_2,\dots,v_n)\in (\mathbb{F}_q^*)^n$. For an integer $1\leq k\leq n$, 
the $k$-dimensional generalized Reed-Solomon (GRS) code of length $n$ over $\F_q$ associated to $\mathbf{a}$ and $\mathbf{v}$ 
is defined by  
\begin{align}
  \GRS_k(\mathbf{a},\mathbf{v})=\{(v_1f(a_1),v_2f(a_2),\dots,v_nf(a_n)):\ f(x)\in \mathbb{F}_q[x],\ \deg(f(x))\leq k-1\}.
\end{align}
A generator matrix of $\GRS_k(\mathbf{a},\mathbf{v})$ is given by 
\begin{align}
    G_k(\mathbf{a},\mathbf{v}) = \left( 
                \begin{array}{cccc}
                v_1 & v_2  &\cdots & v_n \\
                v_1a_1 & v_2a_2 & \cdots & v_na_n \\
                v_1a_1^2 & v_2a_2^2 & \cdots & v_na_n^2 \\
                \vdots & \vdots &\ddots & \vdots \\
                v_na_1^{k-1} & v_na_2^{k-1} & \cdots & v_na_n^{k-1} 
                \end{array}
    \right).       
\end{align}

The elements $a_1, a_2, \cdots, a_n$ are called the $\underline{code\ locators}$ of $\GRS_k(\mathbf{a}, \mathbf{v})$, 
and the elements $v_1, v_2, \cdots, v_n$ are called the $\underline{column\ multipliers}$ of $\GRS_k(\mathbf{a}, \mathbf{v})$. 
From now on, for each $1\leq i\leq n$, we set
\begin{align}\label{equation_ui}
    u_i=\prod_{1\leq j\leq n,i\neq j}(a_i-a_j)^{-1},
\end{align}
which can be used to help determine whether a codeword of a GRS code belongs to its Hermitian dual code. 
We write an alternative criterion introduced by Guo et al. \cite{RefJ-4} as follows. 

\begin{lemma}{\rm (\cite{RefJ-4})}\label{lemma.GUO___Hermitian self-orthogonal GRS}
    A codeword $\mathbf{c}=(v_1f(a_1),v_2f(a_2),\dots,v_nf(a_n))$ of $\GRS_k(\mathbf{a},\mathbf{v})$ is contained in 
    $\GRS_k(\mathbf{a},\mathbf{v})^{\bot_{\frac{h}{2}}}$ if and only if there exists a monic polynomial 
    $h(x)\in \mathbb{F}_q[x]$ with $\deg(h(x)f^{\sqrt{q}}(x))\leq n-k-1$ such that 
    \begin{align}
        \lambda u_ih(a_i)=v_i^{\sqrt{q}+1},\ 1\leq i\leq n,
    \end{align}
    where $\lambda\in \mathbb{F}_q^*$ and $\sqrt[]{q}=p^{\frac{h}{2}}$.
\end{lemma}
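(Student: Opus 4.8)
The plan is to reduce the Hermitian condition to a purely Euclidean one via Lemma \ref{lemma3} and then feed in the explicit description of the Euclidean dual of a GRS code. Taking $e=\frac{h}{2}$ in Lemma \ref{lemma3} gives $\GRS_k(\mathbf{a},\mathbf{v})^{\bot_{\frac{h}{2}}}=\sigma^{\frac{h}{2}}\big(\GRS_k(\mathbf{a},\mathbf{v})^{\bot_0}\big)$. Since $h$ is even here and $a^{q}=a$ on $\F_q$, the map $\sigma^{\frac{h}{2}}:a\mapsto a^{\sqrt{q}}$ is an involution, so $\mathbf{c}\in\GRS_k(\mathbf{a},\mathbf{v})^{\bot_{\frac{h}{2}}}$ holds if and only if $\sigma^{\frac{h}{2}}(\mathbf{c})=(c_1^{\sqrt{q}},\dots,c_n^{\sqrt{q}})\in\GRS_k(\mathbf{a},\mathbf{v})^{\bot_0}$. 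Thus the whole question becomes a Euclidean membership test for the coordinatewise $\sqrt{q}$-th power of $\mathbf{c}$.

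Next I would invoke the classical description $\GRS_k(\mathbf{a},\mathbf{v})^{\bot_0}=\GRS_{n-k}(\mathbf{a},\mathbf{v}^{\bot})$, where the dual column multipliers are $v_i^{\bot}=u_iv_i^{-1}$ with $u_i$ as in \eqref{equation_ui}. The cleanest justification is the interpolation identity $\sum_{i=1}^{n}u_iP(a_i)=0$ for every $P\in\F_q[x]$ with $\deg(P)\leq n-2$ (the left-hand side is just the coefficient of $x^{n-1}$ in the degree-$\leq n-1$ interpolant of the values $P(a_i)$); applying it to $P=fg$ with $\deg f\leq k-1$ and $\deg g\leq n-k-1$ shows the two codes are Euclidean-orthogonal, and comparing dimensions finishes it. This is exactly the step that introduces the quantities $u_i$ into the criterion.

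With these two inputs, write $\mathbf{c}=(v_1f(a_1),\dots,v_nf(a_n))$, so that $c_i^{\sqrt{q}}=v_i^{\sqrt{q}}f(a_i)^{\sqrt{q}}$. Then $\sigma^{\frac{h}{2}}(\mathbf{c})\in\GRS_{n-k}(\mathbf{a},\mathbf{v}^{\bot})$ says precisely that there is an $r(x)\in\F_q[x]$ with $\deg(r)\leq n-k-1$ satisfying $v_i^{\sqrt{q}}f(a_i)^{\sqrt{q}}=u_iv_i^{-1}r(a_i)$, i.e. $r(a_i)=u_i^{-1}v_i^{\sqrt{q}+1}f(a_i)^{\sqrt{q}}$ for all $i$. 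Since $f(a_i)^{\sqrt{q}}=\big(f(x)^{\sqrt{q}}\big)\big|_{x=a_i}$, I would try to factor $r(x)=\lambda\,h(x)f^{\sqrt{q}}(x)$ with $f^{\sqrt{q}}(x)=f(x)^{\sqrt{q}}$ and $h$ monic: the degree bound $\deg(r)\leq n-k-1$ turns into $\deg\big(h(x)f^{\sqrt{q}}(x)\big)\leq n-k-1$, and the evaluation constraint turns into $\lambda u_ih(a_i)=v_i^{\sqrt{q}+1}$, where $\lambda\in\F_q^{*}$ absorbs the leading coefficient of the cofactor. This recovers exactly the stated conditions.

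The sufficiency direction ($\Leftarrow$) is then immediate: given a monic $h$ and $\lambda$ as above, the polynomial $r(x)=\lambda h(x)f^{\sqrt{q}}(x)$ has degree at most $n-k-1$, and its values $r(a_i)=u_i^{-1}v_i^{\sqrt{q}+1}f(a_i)^{\sqrt{q}}$ realize $\sigma^{\frac{h}{2}}(\mathbf{c})$ as a codeword of $\GRS_{n-k}(\mathbf{a},\mathbf{v}^{\bot})$, so $\mathbf{c}\in\GRS_k(\mathbf{a},\mathbf{v})^{\bot_{\frac{h}{2}}}$. The hard part will be the necessity direction ($\Rightarrow$): there the polynomial $r$ is only known through the interpolation data $r(a_i)=u_i^{-1}v_i^{\sqrt{q}+1}f(a_i)^{\sqrt{q}}$ together with $\deg(r)\leq n-k-1$, and one must extract the monic $h$ and the scalar $\lambda$ honoring the degree budget—in effect showing that the constraints force $f^{\sqrt{q}}(x)$ to divide $r(x)$ as polynomials and then normalizing the cofactor. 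I expect this divisibility-and-normalization bookkeeping, rather than either structural input above, to be the delicate point, and it is exactly where the degree budget $\deg(r)\le n-k-1$ and the uniqueness of the polynomial representation of a GRS codeword (the distinctness of the locators $a_i$) have to be used.
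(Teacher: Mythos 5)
The paper offers no proof of this lemma to compare against: it is imported verbatim from \cite{RefJ-4}, and the paper only ever invokes the ``if'' direction (in Theorems \ref{th.Con_Hermitian self-orthogonal GRS codes} and \ref{th.Con_Hermitian self-orthogonal GRS codes222}, where an explicit $h$ and $\lambda$ are exhibited and the degree bound is checked). Your reduction is the standard route and is carried out correctly: $\mathbf{c}\in\GRS_k(\mathbf{a},\mathbf{v})^{\bot_{\frac{h}{2}}}$ iff $\sigma^{\frac{h}{2}}(\mathbf{c})\in\GRS_k(\mathbf{a},\mathbf{v})^{\bot_0}=\GRS_{n-k}(\mathbf{a},\mathbf{v}^{\bot})$ with $v_i^{\bot}=u_iv_i^{-1}$, justified by the identity $\sum_{i}u_iP(a_i)=0$ for $\deg P\le n-2$; and your ``$\Leftarrow$'' argument (set $r=\lambda h f^{\sqrt{q}}$) is complete. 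So the half of the lemma that the paper actually uses is fully proved in your proposal.

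The gap is exactly where you flagged it, but it is worse than ``divisibility-and-normalization bookkeeping'': the necessity direction cannot be completed along the route you sketch. Membership only yields a polynomial $r$ with $\deg r\le n-k-1$ and $r(a_i)=u_i^{-1}v_i^{\sqrt{q}+1}f^{\sqrt{q}}(a_i)$. Even granting a factorization $r=\lambda h f^{\sqrt{q}}$ with $h$ monic, this pins down $\lambda u_i h(a_i)=v_i^{\sqrt{q}+1}$ only at those $i$ with $f(a_i)\ne 0$; at locators that are roots of $f$, the data $r(a_i)=0$ carry no information whatsoever about $v_i$, while the lemma demands the identity at all $n$ points. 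Moreover, since any admissible monic $h$ has $\deg h\le n-k-1-\sqrt{q}\deg f\le n-2$, it is the unique low-degree interpolant of the values $\lambda^{-1}u_i^{-1}v_i^{\sqrt{q}+1}$, and one can break the stated ``only if'' concretely: start from a Hermitian self-orthogonal $\GRS_k(\mathbf{a},\mathbf{v})$ with $k\ge 2$ and $q\ge 9$ furnished by $h_0,\lambda_0$ satisfying $\lambda_0u_ih_0(a_i)=v_i^{\sqrt{q}+1}$, take $f(x)=x-a_1$, and replace $v_1$ by $v_1'$ with $v_1'^{\sqrt{q}+1}\ne v_1^{\sqrt{q}+1}$ (possible since the norm map is onto $\F_{\sqrt{q}}^*$). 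The codeword of $f$ is unchanged (its first coordinate is $0$) and still lies in the Hermitian dual of the modified code, because the $i=1$ term of every Hermitian inner product vanishes; yet any admissible pair $(h,\lambda)$ for the modified code would make $\lambda h-\lambda_0h_0$ a polynomial of degree $\le n-2$ vanishing at $a_2,\dots,a_n$, hence $\lambda h=\lambda_0h_0$, contradicting the changed value required at $a_1$. To close your proof you must therefore either restrict to codewords with $f(a_i)\ne 0$ for all $i$, or restate the criterion as the existence of $r$ with $\deg r\le n-k-1$ and $r(a_i)=u_i^{-1}v_i^{\sqrt{q}+1}f^{\sqrt{q}}(a_i)$ --- which is precisely what your Euclidean reduction genuinely establishes, and is all that the paper's applications require.
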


\section{Symmetry of dimensions of Galois hulls of codes}\label{sec3}
In this section, we first give a new formula to calculate $\dim(\Hull_e(\mathcal{C}))$. Then combining Equation (\ref{eq.panxu hull}), 
we further show that the dimensions of Galois hulls of codes have symmetry. Then, the equivalence of Galois self-orthogonal codes, 
Galois self-dual codes, and Galois LCD codes under different Galois inner products is demonstrated. Finally, we give some new 
criteria of judging whether a code is a Galois self-orthogonal code, a Galois self-dual code, or a Galois LCD code.  

\begin{lemma}\label{lemma.the ralation between hull and rank(GG)}
    Let $\mathcal{C}$ be an $[n,k,d]_q$ code with a generator matrix $G$. Then $\rank(G\sigma^{h-e}(G^T))$ is independent of $G$ such that 
    \begin{align}\label{eq.hull new formula}
        \rank(G\sigma^{h-e}(G^T))=k-\dim(\Hull_e(\mathcal{C})).
    \end{align}
\end{lemma}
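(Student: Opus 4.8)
The plan is to reduce the claim to the companion formula $\rank(G\sigma^{e}(G^T)) = k - \dim(\Hull_e(\C))$ already recorded in Remark \ref{rem1.the relationship between hull and generator matrix} (Equation (\ref{eq.panxu hull})), rather than to redo the structural argument behind Lemma \ref{lem.the relationship between hull and generator matrix}. I would fix a generator matrix $G$, set $A = G\sigma^{e}(G^T)$, a $k\times k$ matrix over $\F_q$ whose rank is known to equal $k - \dim(\Hull_e(\C))$, and then show that $G\sigma^{h-e}(G^T)$ has exactly the same rank as $A$.

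The heart of the argument is a short algebraic identity linking these two Gram-type matrices through transposition and the Frobenius map. First I would record two elementary facts: entrywise application of $\sigma$ commutes with transposition, so that $(\sigma^{h-e}(G^T))^T = \sigma^{h-e}(G)$; and $\sigma^{h} = \mathrm{id}$ on $\F_q$, since $a^{p^h}=a$ for every $a\in\F_q$. Using that $\sigma^{h-e}$ is a ring homomorphism, hence respects matrix products entrywise, I would then compute
\begin{align*}
(G\sigma^{h-e}(G^T))^T = \sigma^{h-e}(G)\,G^T = \sigma^{h-e}(G)\,\sigma^{h}(G^T) = \sigma^{h-e}\!\big(G\,\sigma^{e}(G^T)\big) = \sigma^{h-e}(A).
\end{align*}
Thus, up to transposition, $G\sigma^{h-e}(G^T)$ is simply the entrywise image of $A$ under $\sigma^{h-e}$.

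To conclude I would invoke two rank-invariance principles: the rank of a matrix is unchanged by transposition, and it is unchanged by applying a field automorphism to all entries. The second holds because for any square submatrix $M$ one has $\det(\sigma^{h-e}(M)) = \sigma^{h-e}(\det M)$, and $\sigma^{h-e}$ is a bijection of $\F_q$ fixing $0$, so a minor vanishes if and only if its image does; hence the largest nonvanishing minor has the same size for $M$ and for $\sigma^{h-e}(M)$. Combining these with the displayed identity yields
\begin{align*}
\rank\!\big(G\sigma^{h-e}(G^T)\big) = \rank\!\big(\sigma^{h-e}(A)\big) = \rank(A) = k - \dim(\Hull_e(\C)),
\end{align*}
which is exactly (\ref{eq.hull new formula}); independence of $G$ is then automatic, since the right-hand side $k-\dim(\Hull_e(\C))$ is a code invariant.

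The only step requiring genuine care, and really the main conceptual point rather than a serious obstacle, is the rank invariance under coordinatewise Frobenius; everything else is bookkeeping hinging on the single relation $\sigma^{h}=\mathrm{id}$. A tempting shortcut—applying Remark \ref{rem1.the relationship between hull and generator matrix} directly with $e$ replaced by $h-e$ to obtain $k-\dim(\Hull_{h-e}(\C))$—I would deliberately avoid, since matching that expression to $\dim(\Hull_e(\C))$ would require the very symmetry of Galois hull dimensions that this lemma is intended to help establish, and would therefore risk circularity.
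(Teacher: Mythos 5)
Your proof is correct, but it takes a genuinely different route from the paper's. The paper works through the dual code: by Lemma \ref{lemma3}, $\sigma^{h-e}(G)$ is a parity check matrix of $\C^{\bot_e}$, so applying the parity-check rank formula of Lemma \ref{lemma.the ralation between hull and rank(HH)} to $\C^{\bot_e}$ under the $(h-e)$-Galois inner product, together with $(\C^{\bot_e})^{\bot_{h-e}}=\C$, yields $\rank(\sigma^{h-e}(G)G^T)=k-\dim(\Hull_e(\C))$; the same transposition step you use then finishes the formula, and independence of $G$ is checked separately via a change of basis $G=BG_1$. You bypass the dual code entirely: your identity $(G\sigma^{h-e}(G^T))^T=\sigma^{h-e}(G)\,\sigma^{h}(G^T)=\sigma^{h-e}\bigl(G\sigma^{e}(G^T)\bigr)$ reduces everything to the known generator-matrix formula (\ref{eq.panxu hull}) plus the (correctly justified, via minors) invariance of rank under an entrywise field automorphism — both legitimate inputs from the preliminaries, and you rightly flag and avoid the circular shortcut of invoking (\ref{eq.panxu hull}) with $e$ replaced by $h-e$, which would presuppose the symmetry of hull dimensions that this lemma is meant to establish. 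As for what each approach buys: your argument is more elementary and self-contained, and it makes the eventual symmetry $\dim(\Hull_e(\C))=\dim(\Hull_{h-e}(\C))$ essentially transparent from a single matrix identity; the paper's detour through $\C^{\bot_e}$ establishes along the way the relation $\Hull_{h-e}(\C^{\bot_e})=\Hull_e(\C)$, which it reuses in Corollary \ref{coro.C and C dual Hull equals} — a by-product your proof does not supply. Your closing remark that independence of $G$ is automatic (the formula holds for every generator matrix and its right-hand side is a code invariant) is also sound, and in fact shows the paper's separate change-of-basis computation is redundant rather than necessary.
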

\begin{proof}
    From Lemma \ref{lemma3}, we have $\mathcal{C}^{\bot_e}=(\sigma^{h-e}(\mathcal{C}))^{\bot_0}$ and $\sigma^{h-e}(G)$ is a 
    generator matrix of $\sigma^{h-e}(\mathcal{C})$. Hence, $\sigma^{h-e}(G)$ is a parity check matrix of $\mathcal{C}^{\bot_e}$. 
    From Lemma \ref{lemma.the ralation between hull and rank(HH)}, considering the $(h-e)$-Galois hull of $\C^{\bot_e}$, we have  
    \begin{align}\label{eq.the ralation between hull and rank(GG).1}
        \begin{split}
            \rank(\sigma^{h-e}(G)\sigma^e((\sigma^{h-e}(G))^T)) & = n - \dim(\C^{\bot_e}) - \dim(\Hull_{h-e}(\C^{\bot_e})) \\
                                                                & = n-(n-k)-\dim(\mathcal{C}^{\bot_e}\cap (\mathcal{C}^{\bot_e})^{\bot_{h-e}})\\ 
                                                                & = k-\dim(\mathcal{C}\cap \mathcal{C}^{\bot_e}) \\
                                                                & = k-\dim(\Hull_e(\C)). 
        \end{split}
    \end{align}
    It follows $\rank(\sigma^{h-e}(G)G^T)=k-\dim(\Hull_e(\mathcal{C}))$ from the fact that $\sigma^e((\sigma^{h-e}(G))^T)=G^T$. 
    And since $$\sigma^{h-e}(G)G^T=(G(\sigma^{h-e}(G))^T)^T=(G\sigma^{h-e}(G^T))^T,$$ 
    we get $\rank(\sigma^{h-e}(G)G^T)=\rank(G\sigma^{h-e}(G^T))$. 
    Therefore, the desired result 
    is proven. 
    
    Next, we need to prove that $\rank(G\sigma^{h-e}(G^T))$ is independent of $G$. Let $G_1$ be another generator matrix of $\mathcal{C}$. 
    After a suitable sequence of elementary row operations, we have $G=BG_1$ for some invertible $n\times n$ matrix $B$ over $\mathbb{F}_q$. 
    Since $\sigma$ is an automorphism of $\F_q$, we have  
    $$\rank(G\sigma^{h-e}(G^T))=\rank(BG_1\sigma^{h-e}(G_1^T)\sigma^{h-e}(B^T))=\rank(G_1\sigma^{h-e}(G_1^T)),$$ 
    which completes the proof. 
\end{proof}
\begin{remark}\label{rem2.lem8}

    Coming from a different motivation, in {\rm \cite[Lemma 3.2]{RefJ60}}, Liu et al. derived 
$$\rank(G_1(\sigma^{h-e}(G_2))^T)=k_1-\dim(\C_1\cap \C_2^{\bot_e}),$$ where $G_i$ is a generator matrix of $\C_i$ and $k_1=\dim(\C_1)$. 
In particular, taking $\mathcal{C}_1=\mathcal{C}_2=\mathcal{C}$, one can get 
$$\rank(G_1(\sigma^{h-e}(G_2))^T)=k-\dim(\Hull_e(\mathcal{C})),$$ where $G_1$, $G_2$ are two generator matrices of $\mathcal{C}$ and 
$k=\dim(\mathcal{C})$. Then after a similar calculation to Lemma \ref{lemma.the ralation between hull and rank(GG)}, a comparable 
result can be obtained. Therefore, Lemma \ref{lemma.the ralation between hull and rank(GG)} actually gives a different approach 
to the proof, and it will be helpful for some proofs in the following.     
\end{remark}

Based on Lemmas \ref{lem.the relationship between hull and generator matrix} and \ref{lemma.the ralation between hull and rank(GG)}, 
we can deduce the symmetry of dimensions of Galois hulls of codes as follows.

\begin{theorem}\label{th.symmetry of Galois hulls}
    Let $\mathcal{C}$ be an $[n,k,d]_q$ code. Then for any $0\leq e\leq h-1$, we have 
    \begin{align}\label{eq.symmetry of Galois hulls}
        \dim(\Hull_e(\mathcal{C}))=\dim(\Hull_{h-e}(\mathcal{C})).
    \end{align}
\end{theorem}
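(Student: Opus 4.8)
The plan is to exploit the two distinct rank formulas for the Galois hull dimension that have already been established in the excerpt: Equation (\ref{eq.panxu hull}) recorded in Remark \ref{rem1.the relationship between hull and generator matrix}, and Equation (\ref{eq.hull new formula}) of Lemma \ref{lemma.the ralation between hull and rank(GG)}. The crucial observation is that both are expressed through the rank of a matrix of the shape $G\sigma^{j}(G^T)$ for a generator matrix $G$ of $\mathcal{C}$, and that the single matrix $G\sigma^{h-e}(G^T)$ simultaneously computes the codimension of \emph{two different} hulls depending on which formula one reads it through. This coincidence of a common matrix is the entire engine of the proof.

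Concretely, I would fix a generator matrix $G$ of $\mathcal{C}$ and read off from Lemma \ref{lemma.the ralation between hull and rank(GG)} that
\begin{align*}
    \rank(G\sigma^{h-e}(G^T)) = k - \dim(\Hull_e(\mathcal{C})).
\end{align*}
On the other hand, Equation (\ref{eq.panxu hull}) is valid for every Galois index, so substituting $h-e$ in place of $e$ there yields
\begin{align*}
    \rank(G\sigma^{h-e}(G^T)) = k - \dim(\Hull_{h-e}(\mathcal{C})).
\end{align*}
Since the two left-hand sides are literally the same quantity, I equate the right-hand sides, cancel $k$, and obtain $\dim(\Hull_e(\mathcal{C})) = \dim(\Hull_{h-e}(\mathcal{C}))$, which is precisely Equation (\ref{eq.symmetry of Galois hulls}).

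The only point requiring a moment of care is the admissibility of substituting $h-e$ for $e$ in Equation (\ref{eq.panxu hull}) at the boundary value $e=0$, where $h-e=h$ falls outside the stated range $0\leq e\leq h-1$. This is harmless: $\sigma$ is the Frobenius automorphism of $\mathbb{F}_q=\mathbb{F}_{p^h}$ and therefore has order $h$, so $\sigma^h$ is the identity and the $h$-Galois inner product coincides with the Euclidean ($0$-Galois) one; hence $\Hull_h(\mathcal{C})=\Hull_0(\mathcal{C})$ and the claimed identity holds trivially in that case. I do not anticipate any genuine obstacle, since once both rank formulas are in hand the symmetry drops out immediately; the substantive work has already been done in proving Lemma \ref{lemma.the ralation between hull and rank(GG)}, and this theorem merely recognizes that its formula and Equation (\ref{eq.panxu hull}) share the matrix $G\sigma^{h-e}(G^T)$.
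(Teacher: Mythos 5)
Your proposal is correct and follows essentially the same route as the paper: the paper's Case 1 likewise equates Equation (\ref{eq.panxu hull}) applied with index $h-e$ against Equation (\ref{eq.hull new formula}), both computing $\rank(G\sigma^{h-e}(G^T))$, and its Case 2 handles $e=0$ exactly as you do, by observing that $\sigma^h$ is the identity so that $\Hull_h(\mathcal{C})=\Hull_0(\mathcal{C})$. Your treatment of the boundary case is, if anything, slightly more explicit about why the substitution is admissible.
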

\begin{proof}
    The proof can be finished in two ways.  
    \begin{itemize}
        \item $\textbf{Case 1:}$ When $1\leq e\leq h-1$, we have $1\leq h-e\leq h-1$. 
        The desired result follows directly from Equations (\ref{eq.panxu hull}) and (\ref{eq.hull new formula}). 
        \item $\textbf{Case 2:}$ When $e=0$, we have $h-e=h$. We naturally define $\Hull_h(\mathcal{C})$ as 
        $$\Hull_h(\mathcal{C})=\mathcal{C}\cap \mathcal{C}^{\bot_h}=\mathcal{C}\cap (\sigma^{h-h}(\mathcal{C}))^{\bot_0}=\mathcal{C}\cap \mathcal{C}^{\bot_0}=\Hull_0(\C).$$ 
        Hence, for $e=0$, we also have $\dim(\Hull_0(\mathcal{C}))=\dim(\Hull_h(\mathcal{C}))$.
    \end{itemize}

 Combining $\textbf{Case 1}$ and $\textbf{Case 2}$, we complete the proof.
\end{proof}

\begin{remark}\label{remark3}  $\quad$
    \begin{enumerate}
        \item [\rm (1)] The natural definition of $\Hull_h(\mathcal{C})$ implies that we can also consider the $h$-Galois inner product 
        and treat it as the $0$-Galois inner product, which is helpful for us to deal with Galois hulls in a unified view. 
        \item [\rm (2)] Independently, over finite commutative chain rings, based on the tool of $R$-module epimorphism and relationships between the $q$-dimension 
        of a code $\C$ over Frobenius rings and the $q$-dimension of the $e$-Galois dual code $\C^{\bot_e}$, Talbi et al. \cite{RefJ96} gave a similar conclusion. 
        Note that Theorem \ref{th.symmetry of Galois hulls} is a direct application of Lemma \ref{lemma.the ralation between hull and rank(GG)} 
        and that only some elementary algebraic knowledge is used. 
    \end{enumerate}
\end{remark}

From Theorem \ref{th.symmetry of Galois hulls} and the proof of Lemma \ref{lemma.the ralation between hull and rank(GG)}, the relationship between 
$\dim(\Hull_e(\mathcal{C}))$ and $\dim(\Hull_e(\mathcal{C}^{\bot_e}))$ can be further deduced. 
Notably, this is an open problem left by Cao \cite{RefJ3}, and as Cao said, the relationship is important for the construction of EAQECCs 
from codes with prescribed dimensional Galois hull. 

\begin{corollary}
    \label{coro.C and C dual Hull equals}
    Let $\mathcal{C}$ be an $[n,k,d]_q$ code. Then for any $0\leq e\leq h-1$, we have   
    \begin{align}
        \dim(\Hull_e(\mathcal{C}))=\dim(\Hull_{e}(\mathcal{C}^{\bot_e})).
    \end{align}
\end{corollary}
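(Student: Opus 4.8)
The plan is to bootstrap directly from the two results already in hand, namely Lemma \ref{lemma3} and Theorem \ref{th.symmetry of Galois hulls}, so that no new machinery is needed. The key observation is that $\Hull_{h-e}(\C^{\bot_e})$ and $\Hull_e(\C)$ are not merely equal in dimension but are literally the same subspace. Indeed, Lemma \ref{lemma3} supplies the reconstruction identity $(\C^{\bot_e})^{\bot_{h-e}}=\C$, so that
\[
\Hull_{h-e}(\C^{\bot_e})=\C^{\bot_e}\cap (\C^{\bot_e})^{\bot_{h-e}}=\C^{\bot_e}\cap \C=\Hull_e(\C).
\]
In particular $\dim(\Hull_{h-e}(\C^{\bot_e}))=\dim(\Hull_e(\C))$. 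This is exactly the chain of equalities that already appears, in disguise, inside Equation (\ref{eq.the ralation between hull and rank(GG).1}) during the proof of Lemma \ref{lemma.the ralation between hull and rank(GG)}, so I would simply isolate it here as a standalone set-theoretic identity.

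Next, I would apply Theorem \ref{th.symmetry of Galois hulls} to the code $\C^{\bot_e}$, which is an $[n,n-k]_q$ code, taking the same parameter $e$. This yields $\dim(\Hull_e(\C^{\bot_e}))=\dim(\Hull_{h-e}(\C^{\bot_e}))$. Concatenating the two identities then gives
\[
\dim(\Hull_e(\C))=\dim(\Hull_{h-e}(\C^{\bot_e}))=\dim(\Hull_e(\C^{\bot_e})),
\]
which is precisely the claim.

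Finally, the only point that needs a moment's care, and hence the closest thing to an obstacle, is the boundary value $e=0$, where the intermediate index becomes $h-e=h$. Here I would invoke the natural convention $\Hull_h(\cdot)=\Hull_0(\cdot)$ introduced in the proof of Theorem \ref{th.symmetry of Galois hulls} (equivalently, $\bot_h=\bot_0$, since $\sigma^h$ is the identity on $\F_q$); with this convention every step above remains valid, and in fact for $e=0$ the first step already collapses to $\Hull_0(\C^{\bot_0})=\Hull_0(\C)$ outright. I do not expect any genuine difficulty beyond this bookkeeping: the entire content of the corollary is the reconstruction identity of Lemma \ref{lemma3} combined with the dimension symmetry of Theorem \ref{th.symmetry of Galois hulls}, so the work lies only in assembling these two facts in the correct order.
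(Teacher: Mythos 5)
Your proposal is correct and follows essentially the same route as the paper: both extract the set-theoretic identity $\Hull_{h-e}(\C^{\bot_e})=\C^{\bot_e}\cap(\C^{\bot_e})^{\bot_{h-e}}=\Hull_e(\C)$ (already implicit in the proof of Lemma \ref{lemma.the ralation between hull and rank(GG)}, via the reconstruction identity of Lemma \ref{lemma3}) and then apply Theorem \ref{th.symmetry of Galois hulls} to $\C^{\bot_e}$ to conclude. Your explicit handling of the boundary case $e=0$ via the convention $\Hull_h(\cdot)=\Hull_0(\cdot)$ merely spells out bookkeeping that the theorem's proof already covers, so no substantive difference remains.
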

\begin{proof}
    From the proof of Lemma \ref{lemma.the ralation between hull and rank(GG)}, we have 
    $\Hull_{h-e}(\mathcal{C}^{\bot_e})=\mathcal{C}^{\bot_e}\cap (\mathcal{C}^{\bot_e})^{\bot_{h-e}}=\Hull_e(\mathcal{C})$. 
    The desired result is then obtained instantaneously from Theorem \ref{th.symmetry of Galois hulls}. 
\end{proof}

\begin{remark}\label{rem2.Cao open problem} 

Very close to the time of our earlier preprint, Verma et al. \cite{RefJ-11} followed the approach introduced 
by Talbi et al. \cite{RefJ96} and the property $(\C_1+\C_2)^{\bot_e}=\C_1^{\bot_e}\cap \C_2^{\bot_e}$ to give 
a similar result, where $\C_1$ and $\C_2$ are two codes over $\F_q$. In Corollary \ref{coro.C and C dual Hull equals}, 
our conclusion follows from Lemma \ref{lemma.the ralation between hull and rank(GG)} and Theorem \ref{th.symmetry of Galois hulls}. 
According to Remark \ref{remark3} (2), these are obviously two different technical routes. 
We guess that by continuing the work in these two different routes, many other new consequences may be discovered.         

\end{remark}

Next, we consider three special cases of Galois hulls, namely, 
$e$-Galois self-orthogonal codes (i.e., $\Hull_e(\mathcal{C})=\mathcal{C}$),
$e$-Galois self-dual codes (i.e., $\Hull_e(\mathcal{C})=\mathcal{C}=\C^{\bot_e}$), and 
$e$-Galois LCD codes (i.e., $\Hull_e(\mathcal{C})=\{\mathbf{0}\}$). 
We further illustrate the equivalence of these three cases under the $e$-Galois inner 
product and the $(h-e)$-Galois inner product.

\begin{theorem}\label{th.self-orthogonal_equivence}
    Let $\mathcal{C}$ be an $[n,k,d]_q$ code. Then $\mathcal{C}$ is an 
    $e$-Galois self-orthogonal code if and only if $\mathcal{C}$ is an $(h-e)$-Galois self-orthogonal code.
\end{theorem}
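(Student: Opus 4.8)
The plan is to reduce self-orthogonality to a statement purely about the dimension of the Galois hull, so that Theorem \ref{th.symmetry of Galois hulls} does almost all the work. The key observation is that $\Hull_e(\mathcal{C})=\mathcal{C}\cap \mathcal{C}^{\bot_e}$ is always a subspace of $\mathcal{C}$, so $\mathcal{C}$ is $e$-Galois self-orthogonal (that is, $\Hull_e(\mathcal{C})=\mathcal{C}$) if and only if $\dim(\Hull_e(\mathcal{C}))=\dim(\mathcal{C})=k$. The same equivalence holds verbatim with $e$ replaced by $h-e$. Since Theorem \ref{th.symmetry of Galois hulls} already gives $\dim(\Hull_e(\mathcal{C}))=\dim(\Hull_{h-e}(\mathcal{C}))$ for every $0\leq e\leq h-1$, the condition $\dim(\Hull_e(\mathcal{C}))=k$ holds precisely when $\dim(\Hull_{h-e}(\mathcal{C}))=k$, and the theorem follows immediately.

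As a self-contained alternative that avoids invoking the symmetry theorem, I would verify the containment directly from the defining sums. Suppose $\mathcal{C}\subseteq \mathcal{C}^{\bot_e}$. For any $\mathbf{x}=(x_1,\dots,x_n)$ and $\mathbf{y}=(y_1,\dots,y_n)$ in $\mathcal{C}$, self-orthogonality applied to the pair $(\mathbf{y},\mathbf{x})$ gives $\sum_{i=1}^{n}y_i x_i^{p^e}=0$. Raising this identity to the power $p^{h-e}$ and using that the Frobenius map is a field homomorphism together with $x_i^{p^h}=x_i$ for every $x_i\in \mathbb{F}_q$, I obtain $\sum_{i=1}^{n}y_i^{p^{h-e}} x_i^{p^{h}}=\sum_{i=1}^{n}x_i y_i^{p^{h-e}}=0$, which is exactly $\langle \mathbf{x},\mathbf{y}\rangle_{h-e}=0$. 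Hence $\mathcal{C}\subseteq \mathcal{C}^{\bot_{h-e}}$. The converse is identical after interchanging the roles of $e$ and $h-e$, so the two notions of self-orthogonality coincide.

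I do not expect a genuine obstacle here; the argument is essentially bookkeeping once the right reformulation is in place. The only point requiring a little care is the boundary value $e=0$, where $h-e=h$ falls outside the nominal range $0\leq e\leq h-1$. This is handled by the natural convention $\Hull_h(\mathcal{C})=\mathcal{C}\cap \mathcal{C}^{\bot_0}=\Hull_0(\mathcal{C})$ introduced in the proof of Theorem \ref{th.symmetry of Galois hulls}, under which being $h$-Galois self-orthogonal is by definition the same as being $0$-Galois self-orthogonal; the dimension route absorbs this case automatically. In the direct computation, the same boundary case is equally transparent, since $x_i^{p^{h}}=x_i$ is precisely the relation that makes the power-raising step close up.
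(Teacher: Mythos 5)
Your first argument coincides with the paper's own proof of Theorem \ref{th.self-orthogonal_equivence}: the identical chain of equivalences $\Hull_e(\C)=\C \Leftrightarrow \dim(\Hull_e(\C))=k \Leftrightarrow \dim(\Hull_{h-e}(\C))=k \Leftrightarrow \Hull_{h-e}(\C)=\C$, justified by $\Hull_e(\C)\subseteq \C$ and $\Hull_{h-e}(\C)\subseteq \C$, with the boundary case $e=0$ absorbed by the convention $\Hull_h(\C)=\Hull_0(\C)$ from Remark \ref{remark3} (1). Your second, self-contained argument is correct as well and is genuinely different from anything in the paper: the identity $\bigl(\langle \mathbf{y},\mathbf{x}\rangle_e\bigr)^{p^{h-e}}=\sum_{i=1}^n y_i^{p^{h-e}}x_i^{p^h}=\langle \mathbf{x},\mathbf{y}\rangle_{h-e}$ shows directly that $\C\subseteq \C^{\bot_e}$ if and only if $\C\subseteq \C^{\bot_{h-e}}$, using only that Frobenius is a field automorphism fixing $\F_q$ elementwise under $a\mapsto a^{p^h}$. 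This bypasses the entire rank apparatus (Lemma \ref{lemma.the ralation between hull and rank(GG)}, Equation (\ref{eq.panxu hull}), and Theorem \ref{th.symmetry of Galois hulls}) and in fact proves the sharper two-code statement that $\C_1\subseteq \C_2^{\bot_e}$ if and only if $\C_2\subseteq \C_1^{\bot_{h-e}}$, of which the theorem is the diagonal case; it is close in spirit to the duality relations collected in Lemma \ref{lemma3}. What the paper's dimension route buys in exchange is uniformity: the same symmetry theorem simultaneously yields the self-dual and LCD equivalences (Theorems \ref{th.self-dual_equivence} and \ref{th.LCD_equivence}) and statements about hulls of intermediate dimension, which your pointwise Frobenius computation cannot reach, since $\Hull_e(\C)=\{\mathbf{0}\}$ is not a containment condition checkable pair by pair.
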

\begin{proof}
    As Remark \ref{remark3} (1) states, the $h$-Galois inner product can be seen as the $0$-Galois inner product. 
    Then from the definition of Galois self-orthogonal code and Theorem \ref{th.symmetry of Galois hulls}, we know that for any $0\leq e\leq h-1$, 
    \begin{align*}
        \text{$\mathcal{C}$ is\ an\ $e$-Galois\ self-orthogonal\ code} & \Leftrightarrow \Hull_e(\mathcal{C})=\mathcal{C}\\
                                                 & \Leftrightarrow \dim(\Hull_e(\mathcal{C}))=\dim(\mathcal{C})=k\\
                                                 & \Leftrightarrow \dim(\Hull_{h-e}(\mathcal{C}))=\dim(\mathcal{C})=k\\
                                                 & \Leftrightarrow \Hull_{h-e}(\mathcal{C})=\mathcal{C}\\
                                                 & \Leftrightarrow \text{$\mathcal{C}$\ is\ an\ $(h-e)$-Galois\ self-orthogonal\ code},
    \end{align*}
    where the second and penultimate equivalence conditions hold due to the facts $\Hull_e(\mathcal{C})\subseteq \mathcal{C}$ and 
    $\Hull_{h-e}(\mathcal{C})\subseteq \mathcal{C}$. Therefore, we complete the proof. 
\end{proof}

As a special type of Galois self-orthogonal codes, the equivalence of $e$-Galois self-dual codes and $(h-e)$-Galois self-dual codes 
promptly follows from Theorem \ref{th.self-orthogonal_equivence} and the fact $\dim(\C^{\bot_e})=\dim(\C^{\bot_{h-e}})=n-k$. 
We write it in the next theorem and omit the proof. 

\begin{theorem}\label{th.self-dual_equivence}
    Let $\mathcal{C}$ be an $[n,k,d]_q$ code. Then $\mathcal{C}$ is an $e$-Galois self-dual code 
    if and only if $\mathcal{C}$ is an $(h-e)$-Galois self-dual code.
\end{theorem}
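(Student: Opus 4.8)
The plan is to derive Theorem~\ref{th.self-dual_equivence} as an immediate corollary of the already-established Theorem~\ref{th.self-orthogonal_equivence}, exactly as the authors announce. The key observation is that an $e$-Galois self-dual code is nothing but an $e$-Galois self-orthogonal code whose dimension is forced to be exactly $n/2$. So the entire argument rests on translating the self-dual condition into a self-orthogonal condition plus a dimension count, and then noting that the dimension count is insensitive to the switch from $e$ to $h-e$.

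First I would recall the definitions: $\mathcal{C}$ is $e$-Galois self-dual precisely when $\mathcal{C}=\mathcal{C}^{\bot_e}$, and this happens if and only if $\mathcal{C}\subseteq \mathcal{C}^{\bot_e}$ (i.e.\ $\mathcal{C}$ is $e$-Galois self-orthogonal) together with $\dim(\mathcal{C})=\dim(\mathcal{C}^{\bot_e})$. Since for any linear code we have $\dim(\mathcal{C}^{\bot_e})=n-k$ regardless of the choice of $e$, the dimension equality $k=n-k$ is simply the condition $k=n/2$, which does not reference $e$ at all. This is the crucial point that makes the equivalence transparent.

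Next I would invoke Theorem~\ref{th.self-orthogonal_equivence}: $\mathcal{C}$ is $e$-Galois self-orthogonal if and only if it is $(h-e)$-Galois self-orthogonal. Chaining these facts, one gets
\begin{align*}
    \text{$\mathcal{C}$ is $e$-Galois self-dual}
        & \Leftrightarrow \text{$\mathcal{C}$ is $e$-Galois self-orthogonal and } k=n-k\\
        & \Leftrightarrow \text{$\mathcal{C}$ is $(h-e)$-Galois self-orthogonal and } k=n-k\\
        & \Leftrightarrow \text{$\mathcal{C}$ is $(h-e)$-Galois self-dual},
\end{align*}
where the middle equivalence is Theorem~\ref{th.self-orthogonal_equivence} and the fact $\dim(\mathcal{C}^{\bot_e})=\dim(\mathcal{C}^{\bot_{h-e}})=n-k$ keeps the dimension side untouched. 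I would also be careful to cover the boundary case $e=0$ (so $h-e=h$) using the natural convention $\Hull_h(\mathcal{C})=\Hull_0(\mathcal{C})$ and $\mathcal{C}^{\bot_h}=\mathcal{C}^{\bot_0}$ introduced in Remark~\ref{remark3}(1), just as in the proof of Theorem~\ref{th.symmetry of Galois hulls}.

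Honestly, there is no real obstacle here: the work has already been done in Theorem~\ref{th.self-orthogonal_equivence}, and self-duality adds only a dimension constraint that is manifestly symmetric in $e$ and $h-e$. The only thing one must be slightly attentive to is not to conflate the two dual codes $\mathcal{C}^{\bot_e}$ and $\mathcal{C}^{\bot_{h-e}}$ themselves—they need not be equal as codes—while still correctly asserting that they share the same dimension. Because the self-dual condition can be repackaged entirely as \emph{self-orthogonality plus an $e$-free dimension equality}, the implication transfers cleanly, which is precisely why the authors state they omit the proof.
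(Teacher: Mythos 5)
Your proposal is correct and follows essentially the same route as the paper: the authors also derive Theorem~\ref{th.self-dual_equivence} directly from Theorem~\ref{th.self-orthogonal_equivence} together with the observation that $\dim(\mathcal{C}^{\bot_e})=\dim(\mathcal{C}^{\bot_{h-e}})=n-k$, which is why they omit the proof. Your added care about the boundary case $e=0$ and about not conflating the codes $\mathcal{C}^{\bot_e}$ and $\mathcal{C}^{\bot_{h-e}}$ is sound but does not change the argument.
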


Similarly, we can also derive the equivalence of $e$-Galois LCD codes and $(h-e)$-Galois LCD codes. 
\begin{theorem}\label{th.LCD_equivence}
    Let $\mathcal{C}$ be an $[n,k,d]_q$ code. Then $\mathcal{C}$ is an 
    $e$-Galois LCD code if and only if $\mathcal{C}$ is an $(h-e) $-Galois LCD code.
\end{theorem}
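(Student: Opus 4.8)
The plan is to mirror the dimension-counting argument used for Theorem~\ref{th.self-orthogonal_equivence}, but to exploit the fact that the LCD condition is the cleanest of the three special cases: being an $e$-Galois LCD code means precisely that $\Hull_e(\mathcal{C})=\{\mathbf{0}\}$, which is equivalent to the single numerical condition $\dim(\Hull_e(\mathcal{C}))=0$. Since a subspace of $\F_q^n$ is trivial if and only if its dimension is zero, no auxiliary containment relation is needed here, in contrast to the self-orthogonal case where one had to invoke $\Hull_e(\mathcal{C})\subseteq\mathcal{C}$ to pass from equality of dimensions to equality of the spaces themselves.

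Concretely, I would chain the following equivalences. First, $\mathcal{C}$ is an $e$-Galois LCD code if and only if $\dim(\Hull_e(\mathcal{C}))=0$. Next, Theorem~\ref{th.symmetry of Galois hulls} supplies the symmetry $\dim(\Hull_e(\mathcal{C}))=\dim(\Hull_{h-e}(\mathcal{C}))$, so the condition $\dim(\Hull_e(\mathcal{C}))=0$ is equivalent to $\dim(\Hull_{h-e}(\mathcal{C}))=0$. Finally, reading this last equality backwards, it says exactly that $\Hull_{h-e}(\mathcal{C})=\{\mathbf{0}\}$, that is, that $\mathcal{C}$ is an $(h-e)$-Galois LCD code. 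Stringing these together yields the desired two-sided implication.

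Because Theorem~\ref{th.symmetry of Galois hulls} already absorbs the boundary case $e=0$ through the convention $\Hull_h(\mathcal{C})=\Hull_0(\mathcal{C})$ established in its proof, the argument is uniform over the whole range $0\leq e\leq h-1$ and needs no separate treatment. I therefore expect essentially no obstacle: the substantive content has been pre-packaged into the symmetry theorem, and the remaining work is the purely logical observation that zero dimension is synonymous with the trivial hull. The only point worth stating explicitly is this last equivalence, so that the reader sees why---unlike the self-dual and self-orthogonal cases---one never has to compare $\mathcal{C}$ with its dual code at all.
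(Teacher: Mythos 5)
Your proposal is correct and follows essentially the same route as the paper's own proof: the paper likewise chains the equivalences $\mathcal{C}$ is $e$-Galois LCD $\Leftrightarrow \Hull_e(\mathcal{C})=\{\mathbf{0}\} \Leftrightarrow \dim(\Hull_e(\mathcal{C}))=0 \Leftrightarrow \dim(\Hull_{h-e}(\mathcal{C}))=0 \Leftrightarrow \Hull_{h-e}(\mathcal{C})=\{\mathbf{0}\}$, invoking Theorem \ref{th.symmetry of Galois hulls} for the middle step. Your additional remark that the LCD case needs no containment argument (unlike the self-orthogonal case) is an accurate observation, implicit in the paper's shorter chain of equivalences.
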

\begin{proof}
    From the definition of Galois LCD code and Theorem \ref{th.symmetry of Galois hulls}, we know that for any $0\leq e\leq h-1$, 
    \begin{align*}
        \text{$\mathcal{C}$ is\ an\ $e$-Galois\ LCD\ code} & \Leftrightarrow \Hull_e(\mathcal{C})=\{\mathbf{0}_n\}\\
                                                 & \Leftrightarrow \dim(\Hull_e(\mathcal{C}))=0\\
                                                 & \Leftrightarrow \dim(\Hull_{h-e}(\mathcal{C}))=0\\
                                                 & \Leftrightarrow \Hull_{h-e}(\mathcal{C})=\{\mathbf{0}_n\}\\
                                                 & \Leftrightarrow \text{$\mathcal{C}$\ is\ an\ $(h-e)$-Galois\ LCD\ code},
    \end{align*}
    which completes the proof. 
\end{proof}

Moreover, according to Remark \ref{rem1.the relationship between hull and generator matrix}, Lemma \ref{lemma.the ralation between hull and rank(GG)} and   
Theorems \ref{th.self-orthogonal_equivence}-\ref{th.LCD_equivence}, the following corollaries are straightforward, 
which give several new necessary and sufficient conditions for a code being a Galois self-orthogonal code, a Galois self-dual code, or a Galois LCD code. 

\begin{corollary}\label{coro. Galois self-orthogonal code judgment}
    Let $\mathcal{C}$ be an $[n,k,d]_q$ code. Suppose that $G$ is any generator matrix of $\C$. 
    Then these four following statements are equivalent. 
    \begin{enumerate} 
        \item [\rm (1)] $\mathcal{C}$ is an $e$-Galois self-orthogonal code;
        \item [\rm (2)] $\mathcal{C}$ is an $(h-e)$-Galois self-orthogonal code;
        \item [\rm (3)] $G\sigma^e(G^T)$ is a zero matrix;
        \item [\rm (4)] $G\sigma^{h-e}(G^T)$ is a zero matrix.
    \end{enumerate}
\end{corollary}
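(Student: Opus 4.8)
The plan is to route every equivalence through the two rank identities already established together with the elementary fact that a matrix over $\F_q$ is the zero matrix if and only if its rank equals $0$. First I would record the two key equalities: Remark \ref{rem1.the relationship between hull and generator matrix} (Equation (\ref{eq.panxu hull})) gives $\rank(G\sigma^{e}(G^{T}))=k-\dim(\Hull_e(\mathcal{C}))$ for \emph{any} generator matrix $G$ of $\C$, while Lemma \ref{lemma.the ralation between hull and rank(GG)} (Equation (\ref{eq.hull new formula})) gives $\rank(G\sigma^{h-e}(G^{T}))=k-\dim(\Hull_e(\mathcal{C}))$ for any such $G$. Both results explicitly assert independence from the choice of $G$, so the corollary's phrasing ``any generator matrix $G$'' is legitimate and needs no separate justification.

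Next I would establish the equivalence (1) $\Leftrightarrow$ (3). By definition $\C$ is an $e$-Galois self-orthogonal code exactly when $\Hull_e(\C)=\C$, and since $\Hull_e(\C)\subseteq \C$ this is equivalent to $\dim(\Hull_e(\C))=k$. Substituting into Equation (\ref{eq.panxu hull}), this holds if and only if $\rank(G\sigma^{e}(G^{T}))=0$, that is, if and only if $G\sigma^{e}(G^{T})$ is the zero matrix. The equivalence (1) $\Leftrightarrow$ (4) is then obtained by the identical argument, replacing Equation (\ref{eq.panxu hull}) with Equation (\ref{eq.hull new formula}) from Lemma \ref{lemma.the ralation between hull and rank(GG)}.

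Finally, the equivalence (1) $\Leftrightarrow$ (2) is precisely the statement of Theorem \ref{th.self-orthogonal_equivence}, so no new argument is required. Chaining (3) $\Leftrightarrow$ (1) $\Leftrightarrow$ (2) on one side and (1) $\Leftrightarrow$ (4) on the other closes the cycle among all four statements, yielding their mutual equivalence.

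I expect essentially no hard step here: the substantive content is already carried out in Lemma \ref{lemma.the ralation between hull and rank(GG)} and Theorem \ref{th.self-orthogonal_equivence}. The only point deserving a moment of care is the passage from ``$\rank=0$'' to ``zero matrix,'' together with the well-definedness of these ranks independently of $G$; but both are guaranteed because the cited results hold for an arbitrary generator matrix. Consequently, the corollary follows as a direct bookkeeping consequence of the preceding results.
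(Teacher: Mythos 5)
Your proposal is correct and follows exactly the route the paper intends: the paper derives this corollary as a ``straightforward'' consequence of Remark \ref{rem1.the relationship between hull and generator matrix} (Equation (\ref{eq.panxu hull})), Lemma \ref{lemma.the ralation between hull and rank(GG)} (Equation (\ref{eq.hull new formula})) and Theorem \ref{th.self-orthogonal_equivence}, which are precisely the three ingredients you chain together. Your added care about the equivalence of $\dim(\Hull_e(\C))=k$ with rank zero, and about independence from the choice of $G$, merely fills in the bookkeeping the paper leaves implicit.
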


\begin{corollary}\label{coro. Galois LCD code judgment}
    Let $\mathcal{C}$ be an $[n,k,d]_q$ code. Suppose that $G$ is any generator matrix of $\C$. 
    Then these four following statements are equivalent. 
    \begin{enumerate} 
        \item [\rm (1)] $\mathcal{C}$ is an $e$-Galois LCD code;
        \item [\rm (2)] $\mathcal{C}$ is an $(h-e)$-Galois LCD code;
        \item [\rm (3)] $G\sigma^e(G^T)$ is a nonsingular matrix; 
        \item [\rm (4)] $G\sigma^{h-e}(G^T)$ is a nonsingular matrix.
    \end{enumerate}
\end{corollary}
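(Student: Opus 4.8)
The plan is to prove the four-way equivalence by reducing everything to the two rank identities already established, namely Equation (\ref{eq.panxu hull}) from Remark \ref{rem1.the relationship between hull and generator matrix} and Equation (\ref{eq.hull new formula}) from Lemma \ref{lemma.the ralation between hull and rank(GG)}. Since the equivalence $(1)\Leftrightarrow(2)$ is precisely Theorem \ref{th.LCD_equivence}, the only genuinely new task is to link each LCD condition to the nonsingularity of the corresponding matrix $G\sigma^e(G^T)$ or $G\sigma^{h-e}(G^T)$. I would organize the argument as the chain $(3)\Leftrightarrow(1)\Leftrightarrow(2)\Leftrightarrow(4)$, with $(1)\Leftrightarrow(2)$ quoted and the two outer equivalences handled by the rank formulas.

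First I would recall that, by definition, $\C$ is an $e$-Galois LCD code exactly when $\dim(\Hull_e(\C))=0$. Then I would invoke the two identities $\rank(G\sigma^{e}(G^{T}))=k-\dim(\Hull_e(\C))$ and $\rank(G\sigma^{h-e}(G^T))=k-\dim(\Hull_e(\C))$. The decisive observation is that, because $G$ has size $k\times n$, both $G\sigma^{e}(G^{T})$ and $G\sigma^{h-e}(G^T)$ are square matrices of order $k$; consequently such a matrix is nonsingular if and only if its rank equals $k$. Combining this with the first identity shows that $\dim(\Hull_e(\C))=0$ holds if and only if $\rank(G\sigma^{e}(G^{T}))=k$, i.e.\ if and only if $G\sigma^{e}(G^{T})$ is nonsingular, giving $(1)\Leftrightarrow(3)$; the same reasoning applied to the second identity gives $(1)\Leftrightarrow(4)$ (or, equivalently via Theorem \ref{th.LCD_equivence}, $(2)\Leftrightarrow(4)$).

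I do not expect any substantive obstacle here: once the two rank identities are in hand, the proof is a one-line translation among ``the hull has dimension zero'', ``a $k\times k$ matrix has full rank'', and ``that matrix is nonsingular''. The single point that must not be overlooked is the squareness of $G\sigma^{e}(G^{T})$ and $G\sigma^{h-e}(G^T)$, which is exactly what upgrades the equality of ranks into an equivalence with invertibility; without recording this remark, the rank statements alone would not immediately yield nonsingularity. In effect this corollary is the LCD-analogue of Corollary \ref{coro. Galois self-orthogonal code judgment}, with ``zero matrix'' replaced by ``nonsingular matrix'' and the extremal hull dimension $k$ replaced by $0$.
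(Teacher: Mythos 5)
Your proposal is correct and follows exactly the route the paper intends: it deduces the corollary from Equation (\ref{eq.panxu hull}), Equation (\ref{eq.hull new formula}) and Theorem \ref{th.LCD_equivence}, translating $\dim(\Hull_e(\C))=0$ into full rank $k$ of the square matrices $G\sigma^e(G^T)$ and $G\sigma^{h-e}(G^T)$, which is precisely why the paper labels the corollary straightforward and omits a written proof. Your explicit remark that these matrices are $k\times k$, so that full rank coincides with nonsingularity, is the one detail the paper leaves implicit, and you have handled it correctly.
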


In particular, for Galois self-dual codes, the situation becomes a little more complicated. 
\begin{corollary}\label{coro. Galois self-dual code judgment}
    Let $\mathcal{C}$ be an $[n,k,d]_q$ code. Suppose that $G$ and $H$ are any generator matrix and 
    parity check matrix of $\C$, respectively. Then these six following statements are equivalent. 
    \begin{enumerate} 
        \item [\rm (1)] $\mathcal{C}$ is an $e$-Galois self-dual code;
        \item [\rm (2)] $\mathcal{C}$ is an $(h-e)$-Galois self-dual code;
        \item [\rm (3)] Both $G\sigma^e(G^T)$  and $\sigma^{h-e}(H)\sigma^{2h-2e}(H^T)$ are zero matrices;
        \item [\rm (4)] Both $G\sigma^e(G^T)$  and $\sigma^{h-e}(H)H^T$ are zero matrices;
        \item [\rm (5)] Both $G\sigma^{h-e}(G^T)$  and $\sigma^{h-e}(H)\sigma^{2h-2e}(H^T)$ are zero matrices;
        \item [\rm (6)] Both $G\sigma^{h-e}(G^T)$  and $\sigma^{h-e}(H)H^T$ are zero matrices.
    \end{enumerate}
\end{corollary}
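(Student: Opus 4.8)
The plan is to reduce the six-way equivalence to the already-proved characterization of Galois self-orthogonality (Corollary \ref{coro. Galois self-orthogonal code judgment}), applying it once to $\C$ and once to its $e$-Galois dual $\C^{\bot_e}$. The equivalence $(1)\Leftrightarrow(2)$ is immediate from Theorem \ref{th.self-dual_equivence}, so the substantive task is to show that each of $(3)$--$(6)$ is equivalent to $(1)$. I would begin from the tautology that $\C$ is $e$-Galois self-dual, i.e.\ $\C=\C^{\bot_e}$, if and only if both inclusions $\C\subseteq \C^{\bot_e}$ and $\C^{\bot_e}\subseteq \C$ hold. Note that the two inclusions together force $\dim(\C)=\dim(\C^{\bot_e})=n-\dim(\C)$, hence $n=2k$, so no separate dimension hypothesis needs to be imposed; the matrix conditions alone will suffice.

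For the first inclusion, $\C\subseteq \C^{\bot_e}$ is exactly the statement that $\C$ is $e$-Galois self-orthogonal. By Corollary \ref{coro. Galois self-orthogonal code judgment} this is equivalent to either $G\sigma^e(G^T)$ or $G\sigma^{h-e}(G^T)$ being the zero matrix; these furnish the two interchangeable ``$G$-conditions'' occurring in $(3)$--$(6)$.

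The crux is the second inclusion $\C^{\bot_e}\subseteq \C$, which I would recast as a self-orthogonality statement about $\C^{\bot_e}$ itself. Using the involution $(\C^{\bot_e})^{\bot_{h-e}}=\C$ from Lemma \ref{lemma3}, the inclusion $\C^{\bot_e}\subseteq \C$ becomes $\C^{\bot_e}\subseteq (\C^{\bot_e})^{\bot_{h-e}}$, i.e.\ $\C^{\bot_e}$ is $(h-e)$-Galois self-orthogonal. To translate this into matrix conditions on $H$, I would again invoke Lemma \ref{lemma3}: since $H$ generates $\C^{\bot_0}$ and $\C^{\bot_e}=\sigma^{h-e}(\C^{\bot_0})$, the matrix $G':=\sigma^{h-e}(H)$ is a generator matrix of $\C^{\bot_e}$. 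Applying Corollary \ref{coro. Galois self-orthogonal code judgment} to $\C^{\bot_e}$ with this generator matrix shows that $\C^{\bot_e}$ is Galois self-orthogonal if and only if $G'\sigma^{e}((G')^T)=0$, if and only if $G'\sigma^{h-e}((G')^T)=0$.

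The only computation requiring care is evaluating these two products in terms of $H$, using $\sigma^h=\mathrm{id}$ together with the compatibility $\sigma^j(M^T)=(\sigma^j(M))^T$. A direct calculation gives $G'\sigma^{e}((G')^T)=\sigma^{h-e}(H)\sigma^{h}(H^T)=\sigma^{h-e}(H)H^T$ and $G'\sigma^{h-e}((G')^T)=\sigma^{h-e}(H)\sigma^{2h-2e}(H^T)$, which are precisely the two interchangeable ``$H$-conditions'' in $(3)$--$(6)$. Pairing each of the two equivalent $G$-conditions (first inclusion) with each of the two equivalent $H$-conditions (second inclusion) yields exactly the four combinations $(3)$, $(4)$, $(5)$, $(6)$ as equivalent reformulations of $\C=\C^{\bot_e}$, completing the chain. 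The expected obstacle is entirely bookkeeping: keeping the exponents of $\sigma$ consistent and correctly identifying which code each generator matrix generates. Once $G'=\sigma^{h-e}(H)$ is pinned down as a generator matrix of $\C^{\bot_e}$, the remaining steps follow mechanically from the earlier results.
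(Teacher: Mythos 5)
Your proposal is correct and follows essentially the same route as the paper: it reduces self-duality to the conjunction of $e$-Galois self-orthogonality of $\C$ and $(h-e)$-Galois self-orthogonality of $\C^{\bot_e}$ (the paper's Fact 1), identifies $\sigma^{h-e}(H)$ as a generator matrix of $\C^{\bot_e}$ (Fact 2), and carries out exactly the two $\sigma$-exponent computations $\sigma^{h-e}(H)\sigma^{2h-2e}(H^T)$ and $\sigma^{h-e}(H)\sigma^{h}(H^T)=\sigma^{h-e}(H)H^T$ (Fact 3) before invoking Corollary \ref{coro. Galois self-orthogonal code judgment} and Theorem \ref{th.self-dual_equivence}. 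Your added observation that the two inclusions force $n=2k$ is a harmless extra not needed by either argument.
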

\begin{proof}
    We only need to state a few facts, then the desired result follows from Theorem \ref{th.self-dual_equivence} and Corollary 
    \ref{coro. Galois self-orthogonal code judgment}. 
    \begin{itemize}
        \item $\textbf{Fact 1:}$ $\C$ is $e$-Galois self-dual implying $\C=\C^{\bot_e}$, i.e., $\C\subseteq \C^{\bot_e}$ and 
        $\C^{\bot_e}\subseteq \C=(\C^{\bot_e})^{\bot_{h-e}}$. Hence, $\C$ is $e$-Galois self-dual if and only if $\C$ is $e$-Galois 
        self-orthogonal and $\C^{\bot_e}$ is $(h-e)$-Galois self-orthogonal.  
        \item $\textbf{Fact 2:}$ It is easy to see that $\sigma^{h-e}(H)$ is a parity check matrix of $\sigma^{h-e}(\C)$. 
        Hence, $\sigma^{h-e}(H)$ is a generator matrix of $\C^{\bot_e}$. In summary, $G$ and $\sigma^{h-e}(H)$ are generator 
        matrices of $\C$ and $\C^{\bot_e}$, respectively.
        \item $\textbf{Fact 3:}$ These two following equalities hold: 
        \begin{align*}
            \begin{split}
                & \sigma^{h-e}(H)\sigma^{h-e}((\sigma^{h-e}(H))^{T})=\sigma^{h-e}(H)\sigma^{2h-2e}(H^T), \\
                & \sigma^{h-e}(H)\sigma^{e}((\sigma^{h-e}(H))^{T})=\sigma^{h-e}(H)H^T.
            \end{split}
        \end{align*}
    \end{itemize}
\end{proof}

In the following, we give some concrete examples to illustrate the relevant conclusions in this section. 

\begin{example}\label{example.111}
    Let $p=2$, $h=3$, i.e., $q=8$ and $\omega$ be a primitive element of $\mathbb{F}_{8}$. Let 
    \begin{align*}
        \begin{footnotesize}
        G= \left( 
            \begin{array}{cccccccc}
                    1 & 0 & 0 & 0 & 0 & \omega^4 & \omega^3 & \omega^2  \\
                    0 & 1 & 0 & 0 & 0 & 1 & \omega^2 & \omega^2 \\
                    0 & 0 & 1 & 0 & 0 & \omega^6 & \omega^2 & 1  \\
                    0 & 0 & 0 & 1 & 0 & \omega^4 & \omega^2 & 1 \\
                    0 & 0 & 0 & 0 & 1 & \omega & \omega^2 & \omega^4 \\
            \end{array}
            \right) \end{footnotesize}
    \end{align*}
    be a generator matrix of a $[8,5,3]_8$ code $\C$. 
    \begin{enumerate}
        \item [\rm (1)]  Computed with the Magma software package \cite{BCP1997}, we have the following results: 
        \begin{itemize}
            \item $\Hull_0(\C)=\Hull_3(\C)$ is a $[8,0,8]_8$ code, i.e., $\C$ is a $[8,5,3]_8$ Euclidean LCD code;
            \item $\Hull_1(\C)$ is a $[8,1,8]_8$ code with a generator matrix $(1\ 1\ \omega\ \omega^2\ \omega^3\ \omega^6\ \omega^6\ \omega^5)$, i.e., the dimension of $1$-Galois hull of $\C$ is $1$. 
            \item $\Hull_2(\C)$ is a $[8,1,7]_8$ code with a generator matrix $(1\ 0\ \omega\ \omega^3\ \omega\ \omega\ \omega^2\ \omega)$, i.e., the dimension of $2$-Galois hull of $\C$ is $1$;
        \end{itemize}
        Therefore, one can see that $\dim(\Hull_0(\C))=\dim(\Hull_3(\C))=0$ and $\dim(\Hull_1(\C))=\dim(\Hull_2(\C))=1$, which verify Theorem \ref{th.symmetry of Galois hulls}. 
    
        \item [\rm (2)]  Computed with the Magma software package \cite{BCP1997}, we also have the following results: 
        \begin{itemize}
            \item $\Hull_0(\C)=\Hull_0(\C^{\bot_0})$ is a $[8,0,8]_8$ code;
            \item $\Hull_1(\C)$ is a $[8,1,8]_8$ code with a generator matrix $(1\ 1\ \omega\ \omega^2\ \omega^3\ \omega^6\ \omega^6\ \omega^5)$ and 
                  $\Hull_1(\C^{\bot_1})$ is a $[8,1,7]_8$ code with a generator matrix $(1\ 0\ \omega^2\ \omega^6\ \omega^2\ \omega^2\ \omega^4\ \omega^2)$; 
            \item $\Hull_2(\C)$ is a $[8,1,7]_8$ code with a generator matrix $(1\ 0\ \omega\ \omega^3\ \omega\ \omega\ \omega^2\ \omega)$ and 
                  $\Hull_2(\C^{\bot_2})$ is a $[8,1,8]_8$ code with a generator matrix $(1\ 1\ \omega^4\ \omega\ \omega^5\ \omega^3\ \omega^3\ \omega^6)$. 
        \end{itemize}
        Therefore, one can see that $\dim(\Hull_0(\C))=\dim(\Hull_0(\C^{\bot_0}))=0$, $\dim(\Hull_1(\C))=\dim(\Hull_1(\C^{\bot_1}))=1$, 
        and $\dim(\Hull_2(\C))=\dim(\Hull_2(\C^{\bot_2}))=1$, which verify Corollary \ref{coro.C and C dual Hull equals}.         
    \end{enumerate}

\end{example}

\begin{example}\label{example.222}
    Let $p=2$, $h=3$, i.e., $q=8$ and $\omega$ be a primitive element of $\mathbb{F}_{8}$. Let 
    \begin{align*}
        \begin{footnotesize}
        G= \left( 
            \begin{array}{cccccccccc}
                    1 & 0 & 0 & 0 & \omega^4 & \omega^4 & 1 & \omega^2 & \omega^5 & 1 \\
                    0 & 1 & 0 & 0 & 1 & 1 & \omega^6 & \omega & \omega^5 & 0 \\
                    0 & 0 & 1 & 0 & 0 & 1 & 1 & \omega^6 & \omega & \omega^5 \\
                    0 & 0 & 0 & 1 & \omega^5 & \omega^3 & \omega^3 & \omega^3 & \omega^5 & 1\\
            \end{array}
            \right) \end{footnotesize}
    \end{align*}
    be a generator matrix of a $[10,4,6]_8$ code $\C$. Computed with the Magma software package \cite{BCP1997}, we have the following 
    results: 
    \begin{itemize}

       \item $\C$ is both a $1$-Galois LCD code and a $2$-Galois LCD code; 
        
       \item Both 
       \begin{align*}
        \begin{footnotesize}
            G\sigma^1(G^T)=\left(\begin{array}{cccc}
                \omega^4 & \omega^3 &  \omega^2 & \omega^4 \\
                0 & \omega^4 & 0 & \omega \\
                1 & \omega &  \omega^4 & \omega^4 \\
                \omega & \omega^3 & 0 & \omega^2 \\
            \end{array}\right)  \end{footnotesize}  {\rm and\ } 
        \begin{footnotesize}
            G\sigma^2(G^T)=\left(\begin{array}{cccc}
                \omega^2 & 0 &  1 & \omega^4 \\
                \omega^5 & \omega^2 & \omega^4 & \omega^5 \\
                \omega & 0 &  \omega^2 & 0 \\
                \omega^2 & \omega^4 & \omega^2 & \omega \\
            \end{array}\right)
        \end{footnotesize} 
      \end{align*}
        are nonsingular matrices. 
    \end{itemize}
Therefore, this example verifies Corollary \ref{coro. Galois LCD code judgment}. 

\end{example}

\begin{example}\label{example.333}
    Let $p=3$, $h=4$, i.e., $q=3^4$ and $\omega$ be a primitive element of $\mathbb{F}_{3^4}$. From \cite[Example 7.2]{RefJ52}, 
    taking $\lambda=w^{60}$, i.e., $\theta=w^5$, $\lambda=\theta^{12}$, $n=12$ and the generator polynomial 
    $g(x)=(x-\omega^5)^2(x-\omega^{25})(x-\omega^{45})^2(x-\omega^{65})$, we can get a $[12,6,3]_{3^4}$ $1$-Galois self-dual 
    $w^{60}$-constacyclic code $\mathcal{C}$. Computed with the Magma software package \cite{BCP1997}, we have the following 
    results: 
    \begin{itemize}
        \item A generator matrix $G$ and a parity check matrix $H$ of $\C$ are given by  
        \begin{align*}
           \begin{footnotesize}
            G= \left( 
                \begin{array}{cccccccccccc}
                        1 & 0 & 0 & 0 & 0 & 0 & \omega^{50} & 0 & 1 &0 & \omega^{70} & 0 \\
                        0 & 1 & 0 & 0 & 0 & 0 & 0 & \omega^{50} & 0 & 1 & 0 & \omega^{70} \\
                        0 & 0 & 1 & 0 & 0 & 0 & \omega^{60} & 0 & 0 & 0 & 2 & 0 \\
                        0 & 0 & 0 & 1 & 0 & 0 & 0 & \omega^{60} & 0 & 0 & 0 & 2 \\
                        0 & 0 & 0 & 0 & 1 & 0 & \omega^{30} & 0 & \omega^{20} & 0 & \omega^{50} & 0 \\
                        0 & 0 & 0 & 0 & 0 & 1 & 0 & \omega^{30} & 0 & \omega^{20} & 0 & \omega^{50} \\
                \end{array}
                \right) \end{footnotesize} {\rm and}  
        \end{align*}
                    
        \begin{align*}
            \begin{footnotesize}
             H= \left( 
                \begin{array}{cccccccccccc}
                            1 & 0 & 0 & 0 & 0 & 0 & \omega^{70} & 0 & 1 &0 & \omega^{50} & 0 \\
                            0 & 1 & 0 & 0 & 0 & 0 & 0 & \omega^{70} & 0 & 1 & 0 & \omega^{50} \\
                            0 & 0 & 1 & 0 & 0 & 0 & \omega^{20} & 0 & 0 & 0 & 2 & 0 \\
                            0 & 0 & 0 & 1 & 0 & 0 & 0 & \omega^{20} & 0 & 0 & 0 & 2 \\
                            0 & 0 & 0 & 0 & 1 & 0 & \omega^{10} & 0 & \omega^{60} & 0 & \omega^{70} & 0 \\
                            0 & 0 & 0 & 0 & 0 & 1 & 0 & \omega^{10} & 0 & \omega^{60} & 0 & \omega^{70} \\
                \end{array} 
                \right) \end{footnotesize},
        \end{align*}
        respectively;
        
        \item $\C$ is both a $1$-Galois self-dual code and a $3$-Galois self-dual code; 
        
        \item $G\sigma^1(G^T)=G\sigma^3(G^T)=\sigma^3(H)\sigma^6(H^T)=\sigma^3(H)H^T=O_{6\times 6}$.
        \end{itemize}
    Therefore, this example verifies Corollary \ref{coro. Galois self-dual code judgment}. 
        
\end{example}

\section{Codes with Galois hulls of arbitrary dimensions from Galois self-orthogonal codes}\label{sec4}
In this section, we want to construct codes of larger length with Galois hulls of arbitrary dimensions from 
a given Galois self-orthogonal code. According to mathematical induction, it seems that we only need to consider the case 
where the length increases by $1$, then larger lengths can be obtained similarly. However, in Remark \ref{rem8.length of n+1} 
of Section \ref{sec4.4}, we can see that codes of length $n+1$ with $k$-dimensional Galois hull can not be deduced from a given 
$[n,k,d]_q$ Galois self-orthogonal code by the method introduced in Theorem \ref{th.arbitrary hull codess with length n+1}.  
Hence, for our topic, we first prove that Galois self-orthogonal codes of length $n+i$ ($i\geq 0$ and $i\neq 1$) can be derived from Galois self-orthogonal codes of length $n$. 
Then, based on these Galois self-orthogonal codes, we derive codes with Galois hulls of arbitrary dimensions. 
Finally, many interesting examples are given. 

\subsection{Galois self-orthogonal codes of length $n+i$ $(i\geq 0$ and $i\neq 1)$ from Galois self-orthogonal codes of length $n$}\label{sec4.1}

\begin{theorem}\label{th.n+2i}
    Let $\mathcal{C}$ be an $[n,k,d]_q$ $e$-Galois self-orthogonal code. For any integer $i\geq 0$ and $i\neq 1$, if there exist 
    $\alpha_1,\alpha_2,\cdots,\alpha_{i}\in \mathbb{F}_q^*$ such that 
    \begin{align}\label{eq.th.n+i}
        \alpha_1^{p^{e}+1}+\alpha_2^{p^{e}+1}+\dots+\alpha_{i}^{p^{e}+1}=0, 
    \end{align}
    then there exists an $[n+i,k,\widetilde{d}_{i}]_q$ $e$-Galois self-orthogonal code $\mathcal{C}_{i}$, where $d\leq \widetilde{d}_{i}\leq n+i+1-k$.
\end{theorem}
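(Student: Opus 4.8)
The plan is to realize the length-$(n+i)$ code by appending $i$ carefully chosen columns to a generator matrix of $\mathcal{C}$ and to verify self-orthogonality through the generator-matrix criterion of Corollary \ref{coro. Galois self-orthogonal code judgment}, which says that a code with generator matrix $G$ is $e$-Galois self-orthogonal if and only if $G\sigma^e(G^T)$ is the zero matrix. The case $i=0$ is trivial (take $\mathcal{C}_0=\mathcal{C}$), and the exclusion of $i=1$ is forced because Equation (\ref{eq.th.n+i}) would read $\alpha_1^{p^e+1}=0$, which is impossible for $\alpha_1\in\mathbb{F}_q^*$. So assume $i\geq 2$ and that $\alpha_1,\dots,\alpha_i\in\mathbb{F}_q^*$ satisfy (\ref{eq.th.n+i}).

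First I would fix a generator matrix $G$ of $\mathcal{C}$; since $\mathcal{C}$ is $e$-Galois self-orthogonal, Corollary \ref{coro. Galois self-orthogonal code judgment} gives $G\sigma^e(G^T)=O_{k\times k}$. Choosing any nonzero column vector $\mathbf{w}\in\mathbb{F}_q^k$, I would set
\begin{align*}
    G_i=\left(\,G \mid \alpha_1\mathbf{w} \mid \alpha_2\mathbf{w} \mid \cdots \mid \alpha_i\mathbf{w}\,\right),
\end{align*}
a $k\times(n+i)$ matrix, and let $\mathcal{C}_i$ be the code it generates. Since the first $n$ columns of $G_i$ already have rank $k$, appending columns keeps the rank equal to $k$, so $\dim(\mathcal{C}_i)=k$ and the length is $n+i$.

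The key computation is to expand the block product. Writing $B=(\alpha_1\mathbf{w}\mid\cdots\mid\alpha_i\mathbf{w})$ and using $\sigma^e((\alpha_j\mathbf{w})^T)=\alpha_j^{p^e}\sigma^e(\mathbf{w}^T)$, I obtain
\begin{align*}
    G_i\sigma^e(G_i^T)=G\sigma^e(G^T)+B\sigma^e(B^T)=O_{k\times k}+\Big(\sum_{j=1}^{i}\alpha_j^{p^e+1}\Big)\,\mathbf{w}\,\sigma^e(\mathbf{w}^T),
\end{align*}
which vanishes precisely because of hypothesis (\ref{eq.th.n+i}). Hence $G_i\sigma^e(G_i^T)=O_{k\times k}$, and Corollary \ref{coro. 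Galois self-orthogonal code judgment} shows $\mathcal{C}_i$ is $e$-Galois self-orthogonal. The only real insight here is choosing the appended columns as scalar multiples $\alpha_j\mathbf{w}$ of a single vector, so that their joint contribution collapses to the rank-one matrix $\big(\sum_j\alpha_j^{p^e+1}\big)\mathbf{w}\sigma^e(\mathbf{w}^T)$ and is annihilated by the Galois power-sum condition; this is the step I expect to be the crux of the argument.

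Finally I would pin down the minimum distance. The Singleton bound applied to the $[n+i,k,\widetilde{d}_i]_q$ code gives $\widetilde{d}_i\leq n+i+1-k$. For the lower bound, projection onto the first $n$ coordinates sends a codeword $\mathbf{x}G_i$ to the codeword $\mathbf{x}G\in\mathcal{C}$; since $G$ has full row rank $k$, this projection is injective, so every nonzero codeword of $\mathcal{C}_i$ restricts to a nonzero codeword of $\mathcal{C}$ of weight at least $d$, and the total weight of $\mathbf{x}G_i$ is at least the weight of $\mathbf{x}G$. Therefore $d\leq\widetilde{d}_i$, completing the bound $d\leq\widetilde{d}_i\leq n+i+1-k$ and the proof.
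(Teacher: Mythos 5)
Your proposal is correct and takes essentially the same route as the paper: the paper likewise appends $i$ columns that are scalar multiples $\alpha_s\mathbf{w}$ of one fixed vector (it uses the all-ones vector $\mathbf{w}=(1,\dots,1)^T$ after normalizing the generator matrix to the form $(\beta I_k \mid A)$ with $\beta^{p^e+1}=1$), so that the appended block contributes $\bigl(\sum_{s=1}^{i}\alpha_s^{p^e+1}\bigr)\mathbf{w}\sigma^e(\mathbf{w}^T)$, which vanishes by Equation (\ref{eq.th.n+i}), and it establishes $d\leq\widetilde{d}_i\leq n+i+1-k$ exactly as you do, by comparing weights with the projection onto the first $n$ coordinates and invoking the Singleton bound. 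Your write-up is if anything slightly cleaner, since it dispenses with the paper's normalization to standard form and the redundant scalar $\beta$ (for which the paper itself notes $\beta=1$ always works), while the handling of $i=0$ and the explanation of why $i=1$ is excluded match the paper's Remark \ref{rem5.th.n+2i}.
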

\begin{proof}
    Let $G$ be a generator matrix of $\mathcal{C}$, then up to equivalence, we can set $$G= \left(\begin{array}{c|c}
        I_k & A 
    \end{array} \right).$$ 
    From Corollary \ref{coro. Galois self-orthogonal code judgment}, since $\mathcal{C}$ is an $e$-Galois self-orthogonal code, we have 
    $G\sigma^e(G^T)=O_{k\times k}$, i.e., $A\sigma^{e}(A^T)=-I_k$. Choose $\beta\in \mathbb{F}_q^*$ such that $\beta^{p^e+1}=1$ and note that  
    $\beta=1$ is always an appropriate choice. For convenience, denote $\lambda=\alpha_1^{p^{e}+1}+\alpha_2^{p^{e}+1}+\dots+\alpha_{i}^{p^{e}+1}$, 
    then $\lambda=0$. 
    
    We now construct $e$-Galois self-orthogonal codes of length $n+i$ $(i\geq 2)$. Let   
    \begin{equation*}
        G_{i}= \left(                 
             \begin{array}{c|c|c|c|c|c}   
                G' & A & \mathbf{a_1} & \mathbf{a_2} & \cdots & \mathbf{a_{i}}
             \end{array} 
         \right)               
    \end{equation*}
    be a generator matrix of a code $\C_{i}$, 
    where $G'=\beta I_k={\rm diag}(\underbrace{\beta,\beta,\cdots,\beta}_k)$ and 
    $\mathbf{a_s}=(\underbrace{\alpha_s\ \alpha_s\ \cdots\ \alpha_s}_k)^T$ 
    for $1\leq s\leq i$. Then $\dim(\C_{i})=\dim(\C)=k$ and 
    \begin{align*}
        G_{i}\sigma^{e}(G_{i}^T) & = \left(                 
        \begin{array}{c|c|c|c|c|c}   
           G' & A & \mathbf{a_1} & \mathbf{a_2} & \cdots & \mathbf{a_{i}}
        \end{array} 
    \right)
    \left(                 
        \begin{array}{c}   
           \sigma^e(G'^T) \\ \hline 
           \sigma^e(A^T) \\ \hline
           \sigma^e(\mathbf{a_1}^T) \\ \hline
           \sigma^e(\mathbf{a_2}^T) \\ \hline
           \vdots \\ \hline
           \sigma^e(\mathbf{a_{i}}^T) \\
        \end{array}
    \right)\\
   &  =  G'\sigma^e(G'^T)+A\sigma^e(A^T)+\mathbf{a_1}\sigma^e(\mathbf{a_1}^T)+\mathbf{a_2}\sigma^e(\mathbf{a_2}^T)+
   \cdots+\mathbf{a_{i}}\sigma^e(\mathbf{a_{i}}^T) \\
   &  =  {\rm diag}(\underbrace{\beta^{p^{e}+1}-1,\beta^{p^{e}+1}-1,\cdots,\beta^{p^{e}+1}-1}_k)+\lambda \cdot \mathbf{1}_{k\times k} \\
   &  =  O_{k\times k}.
    \end{align*}
    From Corollary \ref{coro. Galois self-orthogonal code judgment} again, we can see that $\C_{i}$ ($i\geq 2$) is an $e$-Galois self-orthogonal code of length $n+i$ $(i\geq 2)$. 

    Next, 
    we show that $d\leq \widetilde{d}_{i}\leq n+i+1-k$ for $i\geq 2$. 
    Let $\mathbf{c} \in \C_{i}\setminus \{\mathbf{0}_{n+i}\}$. 
    Then there exists a $\mathbf{u}\in \F_q^k\setminus \{\mathbf{0}_{n+i}\}$ such that 
    $$\mathbf{c}=\mathbf{u}G_{i}=(\mathbf{u}G'\ |\ \mathbf{u}A\ |\ \mathbf{u}\mathbf{a_1}\ |\ \mathbf{u}\mathbf{a_2}\ |\ \cdots\ |\ \mathbf{u}\mathbf{a_{i}}).$$
    Note that $\omega_H((\mathbf{u}G'\ |\ \mathbf{u}A))=\omega_H((\mathbf{u}\beta I_k\ |\ \mathbf{u}A))=\omega_H(\mathbf{u}(I_k\ |\ A))=\omega_H(\mathbf{u}G)\geq d$. 
    Hence,  $\omega_H(\mathbf{c})$ can be estimated as follows: 
    \begin{align*}
        \begin{split}
            \omega_H(\mathbf{c}) & = \omega_H((\mathbf{u}G'\ |\ \mathbf{u}A\ |\ \mathbf{u}\mathbf{a_1}\ |\ \mathbf{u}\mathbf{a_2}\ |\ \cdots\ |\ \mathbf{u}\mathbf{a_{i}})) \\
                                 & \geq \omega_H((\mathbf{u}G'\ |\ \mathbf{u}A)) \\
                                 & \geq d,
        \end{split}
    \end{align*}
    which implies that $\widetilde{d}_{i}\geq d$. And according to the Singleton bound, we have $\widetilde{d}_{i}\leq n+i-k+1$. Hence, $d\leq \widetilde{d}_{i}\leq n+i+1-k$, where $i\geq 2$.   

    Finally, we consider the case $i=0$. It is easy to check that, in this case, Equation (\ref{eq.th.n+i}) is no longer a restriction and 
    $G_0=(G'\ |\ A)$ can generate an $e$-Galois self-orthogonal code $\C_0$, which is equivalent to $\C$. 
    It is well known that equivalent codes have the same parameters. Therefore, we complete the proof. 
\end{proof}

\begin{remark}\label{rem5.th.n+2i} $\quad$
    \begin{enumerate}
        \item [\rm (1)] By adding columns with all zero elements to a generator matrix, one can easily obtain $e$-Galois self-orthogonal codes 
        of larger length from a given $e$-Galois self-orthogonal code. However, according to the proof of Theorem \ref{th.n+2i}, it is easy 
        to see that this way keeps the minimum distance of new codes unchanged. And hence, the method in Theorem \ref{th.n+2i} can result in 
        codes with better parameters. For this reason, in the following related theorems, we also require that the elements in the added 
        column can not be all zero.

        \item [\rm (2)] Note that $i\neq 1$ is needed in Theorem \ref{th.n+2i}. In fact, if we take $i=1$, then Equation (\ref{eq.th.n+i}) yields $\alpha^{p^e+1}=0$, 
        which contradicts to the condition $\alpha_1\in \F_q^*$ in Theorem \ref{th.n+2i}. 
        Hence, $i\neq 1$ is necessary for Theorem \ref{th.n+2i}.

    \end{enumerate}
\end{remark}

\subsection{Conditions are relatively weak}\label{sec4.2}

Formally, the condition given in Equation (\ref{eq.th.n+i}) is complicated. In this subsection, we show that 
Equation (\ref{eq.th.n+i}) holds under certain cases. Note that extensive searches based on the Magma software 
package \cite{BCP1997} illustrate that examples satisfying these cases are comparatively common. 
Therefore, we can conclude that the conditions in Theorem \ref{th.n+2i} are actually relatively weak.

\begin{lemma}\label{lemma.gcd}{\rm (\cite{RefJ51})}
    Let $s\geq 1$ and $p>1$ be two integers. Then
    \begin{equation}\label{equ.gcd}
    \gcd(p^r+1, p^s-1)=\left\{
    \begin{array}{rl}
    1, & {\rm {if}}\ \frac{s}{\gcd(r,s)}\ {\rm {is\ odd\ and}}\ p\ {\rm is\ even,}\\
    2, & {\rm {if}}\ \frac{s}{\gcd(r,s)}\ {\rm {is\ odd\ and}}\ p\ {\rm is\ odd,}\\
    p^{\gcd(r,s)}+1, & {\rm {if}}\ \frac{s}{\gcd(r,s)}\ {\rm {is\ even.}}\\
    \end{array} \right.
    \end{equation}
\end{lemma}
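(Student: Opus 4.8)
The plan is to reduce everything to the classical identity $\gcd(p^a-1,p^b-1)=p^{\gcd(a,b)}-1$ for integers $p\geq 2$, which I would take as known. Write $d=\gcd(r,s)$ and $r=dr'$, $s=ds'$ with $\gcd(r',s')=1$, so that the parity of $\frac{s}{\gcd(r,s)}=s'$ drives the three cases. The first step is to pin down $\gcd(2r,s)$: since $\gcd(r',s')=1$, any prime dividing both $2r'$ and $s'$ must divide $2$, so $\gcd(2r,s)=d\,\gcd(2r',s')=d\,\gcd(2,s')$, which equals $2d$ when $s'$ is even and $d$ when $s'$ is odd. Setting $g=\gcd(p^r+1,p^s-1)$ and using $p^r+1\mid p^{2r}-1$ together with $g\mid p^s-1$, I always obtain the upper bound
\begin{align*}
g \mid \gcd(p^{2r}-1,\,p^s-1)=p^{\gcd(2r,s)}-1.
\end{align*}

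For the even case ($s'$ even, hence $r'$ odd by coprimality) I would first produce $p^d+1$ as a lower bound. Since $r'$ is odd, $p^d+1\mid (p^d)^{r'}+1=p^r+1$; and since $s'$ is even we have $2d\mid s$, so $p^d+1\mid p^{2d}-1\mid p^s-1$. Thus $p^d+1\mid g$. The upper bound now reads $g\mid p^{2d}-1=(p^d-1)(p^d+1)$, so writing $g=(p^d+1)c$ gives $c\mid p^d-1$. To force $c=1$, I would use $g\mid p^r+1=(p^d+1)M$ with $M=(p^d)^{r'-1}-(p^d)^{r'-2}+\cdots+1$, which yields $c\mid M$; reducing modulo $p^d-1$ collapses every $p^d$ to $1$ and leaves the alternating sum of $r'$ ones, equal to $1$ because $r'$ is odd. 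Hence $M\equiv 1\pmod{p^d-1}$, so $c\mid\gcd(p^d-1,M)=1$, giving $g=p^d+1$ as claimed.

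For the odd case ($s'$ odd) the upper bound becomes $g\mid p^{\gcd(2r,s)}-1=p^d-1$. Since $d\mid r$ we have $p^d-1\mid p^r-1$, so $g\mid p^r-1$ as well as $g\mid p^r+1$, whence $g\mid (p^r+1)-(p^r-1)=2$. The parity of $p$ then decides the value: if $p$ is even, both $p^r+1$ and $p^s-1$ are odd, forcing $g=1$; if $p$ is odd, both are even, forcing $2\mid g$ and hence $g=2$. This disposes of the first two cases and completes the argument.

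The only genuinely delicate point is closing the even case: the inclusions $p^d+1\mid g$ and $g\mid p^{2d}-1$ do not by themselves give equality, since $x^2\equiv 1$ need not force $x\equiv\pm 1$ modulo a composite. The residue computation $M\equiv 1\pmod{p^d-1}$ is exactly what kills the spurious factor $c$ and pins $g$ down to $p^d+1$; everything else is routine bookkeeping with the classical gcd identity and with parities.
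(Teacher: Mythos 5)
Your proposal is correct, but there is nothing in the paper to compare it against: the paper states this lemma purely by citation to \cite{RefJ51} and gives no proof, so your argument stands as a self-contained replacement rather than a variant of an in-paper proof. Checking it: the reduction $\gcd(2r,s)=d\,\gcd(2,s')$ with $d=\gcd(r,s)$ is right (any common divisor of $2r'$ and $s'$ is coprime to $r'$, hence divides $2$); the universal upper bound $g\mid p^{\gcd(2r,s)}-1$ via $p^r+1\mid p^{2r}-1$ and the classical identity is right; in the odd-$s'$ case the chain $g\mid p^d-1\mid p^r-1$ together with $g\mid p^r+1$ gives $g\mid 2$, and the parity of $p$ settles $g\in\{1,2\}$ exactly as claimed. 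You also correctly identified the only delicate point, the even-$s'$ case: the sandwich $p^d+1\mid g\mid (p^d-1)(p^d+1)$ alone does not force $g=p^d+1$, and your computation $M\equiv 1\pmod{p^d-1}$ for the cofactor $M=(p^d)^{r'-1}-(p^d)^{r'-2}+\cdots+1$ (valid since $r'$ is odd by coprimality with the even $s'$) is precisely what eliminates the spurious divisor $c$ of $p^d-1$. For reference, the proof one usually finds in the literature (including the style of argument behind \cite{RefJ51}) runs through the multiplicative order of $p$ modulo a common divisor $m$ of $p^r+1$ and $p^s-1$: from $p^{2r}\equiv p^s\equiv 1\pmod m$ one gets ${\rm ord}_m(p)\mid\gcd(2r,s)$ and then does the same case split on the parity of $s'$. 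Your divisibility bookkeeping is logically equivalent but slightly more elementary, since it never invokes orders in $(\mathbb{Z}/m\mathbb{Z})^*$ and works uniformly even in degenerate cases such as $r=0$ (which the paper actually needs, since it applies the lemma with $r=e$ and $e=0$ is allowed).
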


\begin{lemma}\label{lem.t_0 exists}
    Let $\mathbb{F}_q^*=\langle \omega \rangle$, where $q=p^h$ is an odd prime power and $\omega$ is a primitive element of $\mathbb{F}_q$. 
    Then there is an integer $t_0$ such that $\omega^{t_0(p^{e}+1)}=-1$ if either of the following conditions holds. 
    \begin{enumerate}
        \item [\rm (1)] $\frac{h}{\gcd(e,h)}$ is odd and  $\frac{p^h-1}{2}$ is even;
        \item [\rm (2)] $\frac{h}{\gcd(e,h)}$ is even and $(p^{\gcd(e,h)}+1)\mid \frac{p^h-1}{2}$.
    \end{enumerate}
\end{lemma}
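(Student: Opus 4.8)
The plan is to reduce the existence of $t_0$ to a solvability question for a single linear congruence in the exponent. First I would observe that since $q=p^h$ is odd, $p^h-1$ is even and $\omega$ has order $p^h-1$, so $-1$ is the unique element of order $2$ in $\mathbb{F}_q^*$; explicitly, $-1=\omega^{\frac{p^h-1}{2}}$. Consequently, the required equation $\omega^{t_0(p^{e}+1)}=-1$ is equivalent to
\begin{align*}
    t_0(p^{e}+1)\equiv \frac{p^h-1}{2} \pmod{p^h-1}.
\end{align*}
By the standard criterion for linear congruences, such an integer $t_0$ exists if and only if $\gcd(p^{e}+1,\,p^h-1)$ divides $\frac{p^h-1}{2}$. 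Thus the whole lemma collapses to checking this divisibility under each of the two hypotheses.

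Next I would evaluate $\gcd(p^{e}+1,\,p^h-1)$ by applying Lemma \ref{lemma.gcd} with $r=e$ and $s=h$, keeping in mind that $p$ is odd here, so only the ``odd'' and ``even'' branches for $\frac{h}{\gcd(e,h)}$ are relevant. These two branches are exactly what distinguish conditions (1) and (2). If $\frac{h}{\gcd(e,h)}$ is odd, Lemma \ref{lemma.gcd} gives $\gcd(p^{e}+1,\,p^h-1)=2$, and the requirement $2\mid \frac{p^h-1}{2}$ is precisely the added assumption in condition (1) that $\frac{p^h-1}{2}$ is even. If $\frac{h}{\gcd(e,h)}$ is even, Lemma \ref{lemma.gcd} gives $\gcd(p^{e}+1,\,p^h-1)=p^{\gcd(e,h)}+1$, and the requirement $\bigl(p^{\gcd(e,h)}+1\bigr)\mid \frac{p^h-1}{2}$ is exactly condition (2). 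In either case the solvability criterion from the first paragraph is satisfied, so the desired $t_0$ exists.

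There is essentially no serious obstacle: the argument hinges on recognizing that the multiplicative equation $\omega^{t_0(p^{e}+1)}=-1$ becomes an additive linear congruence once one writes $-1=\omega^{\frac{p^h-1}{2}}$, after which the cited gcd formula does all the work. The only point demanding a little care is the bookkeeping that hypotheses (1) and (2) have been tailored so that, in each of the two gcd regimes, the computed gcd divides $\frac{p^h-1}{2}$; this is a matter of matching the conditions to the branches of Lemma \ref{lemma.gcd} rather than a genuine difficulty.
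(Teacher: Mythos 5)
Your proposal is correct and follows essentially the same route as the paper: both reduce $\omega^{t_0(p^e+1)}=-1$ to the linear congruence $t_0(p^e+1)\equiv \frac{p^h-1}{2} \pmod{p^h-1}$ via $-1=\omega^{\frac{p^h-1}{2}}$, invoke the solvability criterion $\gcd(p^e+1,p^h-1)\mid \frac{p^h-1}{2}$, and evaluate the gcd in the two branches of Lemma \ref{lemma.gcd} (with $p$ odd), which match conditions (1) and (2) exactly.
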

    \begin{proof}
        Since $q=p^h$ is an odd prime power, we have $\omega^{\frac{p^h-1}{2}}=-1$. Hence, for our target, it suffices to prove that 
        there is an integer $t_0$ such that 
        \begin{equation}\label{eq.t_0 exists}
            t_0(p^{e}+1)\equiv \frac{p^h-1}{2}({\rm mod}\ p^h-1).        
        \end{equation}
    
        Denote $d=\gcd(p^{e}+1,p^h-1)$ and see Equation (\ref{eq.t_0 exists}) as a congruence equation of $t_0$. Then, according to basic number 
        theory knowledge, $t_0$ exists as a solution if and only if $d\mid \frac{p^h-1}{2}$. We consider the two given conditions as follows: 
        \begin{itemize}
            \item $\textbf{Condition (1)}$: When $\frac{h}{\gcd(e,h)}$ is odd, it follows from odd $p$ and Lemma \ref{lemma.gcd} that $d=2$. Hence, $t_0$ exists 
            as a solution if and only if $\frac{p^h-1}{2}$ is even.
            \item $\textbf{Condition (2)}$: When $\frac{h}{\gcd(e,h)}$ is even, it follows from Lemma \ref{lemma.gcd} that $d=p^{\gcd(e,h)}+1$. Hence, $t_0$ exists 
            as a solution if and only if $(p^{\gcd(e,h)}+1)\mid \frac{p^h-1}{2}$.
        \end{itemize}

        Therefore, the desired result follows.
    \end{proof}

    Next, we characterize the conditions under which Equation (\ref{eq.th.n+i}) holds. 
    To be more explicit, we divide $n+i$ ($i\geq 0$ and $i\neq 1$) into two parts $n+2i$ ($i\geq 0$) and $n+2i+1$ ($i\geq 1$).
    
    \begin{theorem}\label{th.alpha_1,alpha_2,...,alpha_{2i} exist}
        If one of the following conditions is met: 
        \begin{enumerate}
            \item [\rm (1)] $p$ is even;
            \item [\rm (2)] $p$ is odd, $\frac{h}{\gcd(e,h)}$ is odd and  $\frac{p^h-1}{2}$ is even;
            \item [\rm (3)] $p$ is odd, $\frac{h}{\gcd(e,h)}$ is even and $(p^{\gcd(e,h)}+1)\mid \frac{p^h-1}{2}$,
        \end{enumerate}
        then the following statements hold. 
        \begin{enumerate}
            \item [\rm (1)] For $i\geq 0$, there exist $\alpha_1,\alpha_2,\cdots,\alpha_{2i} \in \mathbb{F}_q^*$ such that 
            \begin{align}\label{eq.n+2i.equals 0}
               \alpha_1^{p^{e}+1}+\alpha_2^{p^{e}+1}+\cdots+\alpha_{2i}^{p^{e}+1}=0.
            \end{align} 

            \item [\rm (2)] For $i\geq 1$, if there further exist $\alpha_1,\alpha_2,\alpha_3\in \mathbb{F}_q^*$ satisfying 
            $\alpha_1^{p^{e}+1}+\alpha_2^{p^{e}+1}+\alpha_3^{p^{e}+1}=0$, then there exist $\alpha_4,\alpha_5,\cdots,\alpha_{2i+1}\in \mathbb{F}_q^*$ such that 
            \begin{align}\label{eq.n+2i+1.equals 0}
               \alpha_1^{p^{e}+1}+\alpha_2^{p^{e}+1}+\cdots+\alpha_{2i+1}^{p^{e}+1}=0.
            \end{align} 
        \end{enumerate}
    \end{theorem}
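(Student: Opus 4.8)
The plan is to reduce both statements to the existence of a single \emph{cancelling pair}, by which I mean two elements $\alpha,\beta\in\F_q^*$ with $\alpha^{p^e+1}+\beta^{p^e+1}=0$. Once such a pair is guaranteed, statement (1) follows by concatenating $i$ independent copies of it, and statement (2) follows by appending $i-1$ such copies to the given triple $\alpha_1,\alpha_2,\alpha_3$. In this way the arithmetic content is isolated into one place, and everything else is a counting argument.

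First I would produce a cancelling pair under each of the three hypotheses. When $p$ is even we have $-1=1$ in $\F_q$, so the equation $\alpha^{p^e+1}+\beta^{p^e+1}=0$ reads $\alpha^{p^e+1}=\beta^{p^e+1}$, and since the characteristic is $2$ the choice $\alpha=\beta=1$ works because $1+1=0$. When $p$ is odd, conditions (2) and (3) of the theorem are precisely the two hypotheses of Lemma \ref{lem.t_0 exists}, which supplies an integer $t_0$ with $\omega^{t_0(p^e+1)}=-1$; setting $\alpha=\omega^{t_0}$ and $\beta=1$ then gives $\alpha^{p^e+1}+\beta^{p^e+1}=-1+1=0$. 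Hence in all three cases a cancelling pair exists, and this is the only step where the number-theoretic assumptions enter.

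For statement (1), given $i\geq 0$, I would assign to each index $1\leq s\leq i$ the values $(\alpha_{2s-1},\alpha_{2s})$ equal to a fixed cancelling pair; summing the $i$ contributions, each of which vanishes, yields $\sum_{j=1}^{2i}\alpha_j^{p^e+1}=0$, with the degenerate case $i=0$ being the empty sum. For statement (2), I would use that $2i+1-3=2(i-1)$ is even: filling the positions $\alpha_4,\dots,\alpha_{2i+1}$ with $i-1$ further cancelling pairs, the total sum splits as $\bigl(\alpha_1^{p^e+1}+\alpha_2^{p^e+1}+\alpha_3^{p^e+1}\bigr)+0=0$. When $i=1$ no extra elements are added and the claim is immediate from the assumed triple.

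The only substantive ingredient is the cancelling pair, and for odd $p$ that is exactly the content of Lemma \ref{lem.t_0 exists}, so I do not expect any genuine obstacle beyond careful indexing; the remainder is bookkeeping of even versus odd counts. I would also remark that statement (2) deliberately retains the existence of the initial triple as a hypothesis rather than proving it, since solvability of $\alpha_1^{p^e+1}+\alpha_2^{p^e+1}+\alpha_3^{p^e+1}=0$ in $\F_q^*$ reflects finer arithmetic of $\F_q$ that the pairing construction alone cannot address.
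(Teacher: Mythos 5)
Your proposal is correct and follows essentially the same route as the paper: both arguments reduce everything to a single cancelling pair $\alpha_1^{p^e+1}+\alpha_2^{p^e+1}=0$ (taking $\alpha_1=\alpha_2$ in characteristic $2$, and $\alpha_1=\omega^{t_0}$, $\alpha_2=1$ via Lemma \ref{lem.t_0 exists} for odd $p$), then repeat that pair $i$ times for statement (1) and use the decomposition $2i+1=3+2(i-1)$ for statement (2). Your closing remark that the triple in (2) must remain a hypothesis also matches the paper's treatment, which only establishes such triples later under stronger assumptions (Lemmas \ref{Coro.Euclidean} and \ref{Coro.Hermitian}).
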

    \begin{proof}
        (1) Clearly, $i=0$ is a trivial case. For $i\geq 1$, we note that $\alpha_1,\alpha_2,\cdots,\alpha_{2i}$ can be the same. 
        Hence, if we can determine that there exist $\alpha_1,\alpha_2\in \mathbb{F}_q^*$ such that $\alpha_1^{p^{e}+1}+\alpha_2^{p^{e}+1}=0$, 
        then Equation (\ref{eq.n+2i.equals 0}) holds for any $i\geq 1$. 
        Now, we directly give some possible values of $\alpha_1$ and $\alpha_2$ under different conditions. 

        \begin{itemize}
            \item $\textbf{Condition (1)}$: It follows from even $p$ that $2\alpha^{p^{e}+1}=0$ for any $\alpha \in \mathbb{F}_q$. 
            Hence, we can take $\alpha_1=\alpha_2\in \mathbb{F}_q^*$, then $\alpha_1^{p^{e}+1}+\alpha_2^{p^{e}+1}=2\alpha_1^{p^{e}+1}=0$. 

            \item $\textbf{Condition (2) or (3)}$: From Lemma \ref{lem.t_0 exists}, there is an integer $t_0$ such that 
            $\omega^{t_0(p^{e}+1)}=-1$. Hence, we can take $\alpha_1=\omega^{t_0}\in \mathbb{F}_q^*$ and some 
            $\alpha_2\in \mathbb{F}_q^*$ satisfying $\alpha_{2}^{p^e+1}=1$ 
            ($\alpha_2=1$ is trivial), then $\alpha_1^{p^{e}+1}+\alpha_2^{p^{e}+1}=-1+1=0$. 
        \end{itemize}

        Therefore, we complete the proof of the result (1). 

        (2) Clearly, $i=1$ is a trivial case. For $i\geq 2$, the desired result follows from the fact $2i+1=3+2(i-1)$ 
        and the proof given in (1) above.  
    \end{proof}

    \begin{remark}\label{rem7.th18.a1,...,a_{2i} exist} $\quad$
        \begin{enumerate}
            \item [\rm (1)] Obviously, there also exist other possible $\alpha_s \in \mathbb{F}_q^*$ for $1\leq s\leq 2i$ satisfying 
            $\alpha_r\neq \alpha_t$ with any $1\leq r\neq t\leq 2i$ such that Equation (\ref{eq.n+2i.equals 0}) holds in some cases. 
            For example, taking $q=2^3$, $i=2$, searching by the Magma software package \cite{BCP1997}, 
            \begin{itemize}
                \item when $e=0$, we can take $\alpha_1=1,\ \alpha_2=\omega,\ \alpha_3=\omega^6$ and $\alpha_4=\omega^4$; 
                \item when $e=1$, we can take $\alpha_1=1,\ \alpha_2=\omega^5,\ \alpha_3=\omega^2$ and $\alpha_4=\omega^6$; 
                \item when $e=2$, we can take $\alpha_1=\omega^6,\ \alpha_2=1,\ \alpha_3=\omega$ and $\alpha_4=\omega^3$. 
            \end{itemize}

            \item [\rm (2)] Conditions (1)-(3) in Theorem \ref{th.alpha_1,alpha_2,...,alpha_{2i} exist} are only sufficient. Therefore, 
            for the case where all these conditions are not satisfied, we can not be completely sure that 
            there exist no $\alpha_s\in \mathbb{F}_q^*$ for $1\leq s\leq 2i+1$ satisfying Equations (\ref{eq.n+2i.equals 0}) and (\ref{eq.n+2i+1.equals 0}). 
            For example, take $q=3^3$, then for any $0\leq e\leq 2$, both $\frac{h}{\gcd(e,h)}=1$ or $3$ and $\frac{p^h-1}{2}=13$ are odd. 
            Searching by the Magma software package \cite{BCP1997}, we can find that 
            \begin{itemize}
                \item for $i=1$, Equation (\ref{eq.n+2i.equals 0}) does not hold; 
                \item for $i=2$, Equation (\ref{eq.n+2i.equals 0}) can indeed be established. 
                For example,
                \begin{itemize}
                    \item when $e=0$, we can take $\alpha_1=w^3,\ \alpha_2=\omega^{17},\ \alpha_3=1$ and $\alpha_4=\omega^{11}$; 
                    \item when $e=1$, we can take $\alpha_1=w^4,\ \alpha_2=\omega^{5},\ \alpha_3=1$ and $\alpha_4=\omega^{7}$; 
                    \item when $e=2$, we can take $\alpha_1=w^6,\ \alpha_2=\omega^{11},\ \alpha_3=1$ and $\alpha_4=\omega^{10}$. 
                \end{itemize}  
                Moreover, we can conclude that Equation (\ref{eq.n+2i.equals 0}) must hold for each $i$ satisfying $2\mid i$. 
            \end{itemize}

            \item [\rm (3)] Again, the condition ``$\alpha_1^{p^{e}+1}+\alpha_2^{p^{e}+1}+\alpha_3^{p^{e}+1}=0$ holds for some $\alpha_1,\alpha_2,\alpha_3\in \mathbb{F}_q^*$'' is also only sufficient. 
            For example, take $q=5$, $e=0$, then there exist no $\alpha_s\in \mathbb{F}_q^*$ for $1\leq s\leq 3$ such that $\alpha_1^2+\alpha_2^2+\alpha_3^2=0$. 
            However, taking $\alpha_s=1$ for $1\leq s\leq 5$, we have $\alpha_1^2+\alpha_2^2+\cdots+\alpha_5^2=0$.    

        \end{enumerate}
    \end{remark}

\begin{example}\label{example.e and h values}
    Denote a primitive element of $\mathbb{F}_{p^h}$ by $\omega_{p,h}$. 
    In order to illustrate the effect of Theorem \ref{th.alpha_1,alpha_2,...,alpha_{2i} exist} more intuitively, 
    we list some examples satisfying the Condition (2) or (3) in Table \ref{tab:1} and some examples satisfying $\alpha_1^{p^e+1}+\alpha_2^{p^e+1}+\alpha_3^{p^e+1}=0$ in Table \ref{tab:2}. 
    
    \begin{enumerate}
        \item [\rm (1)] Table \ref{tab:1} shows that when we take $p=5$ and $p=13$ with $1\leq h\leq 4$, examples satisfying the 
        Condition $(2)$ or $(3)$ in Theorem \ref{th.alpha_1,alpha_2,...,alpha_{2i} exist} can take all possible $h$ and $e$. 
        Note that when $p=3$, we can see that $h\neq 1,3$. However, one can still conclude that there are many more cases where the Condition (2) or (3) is met than where they are not.
    
        \item [\rm (2)] Table \ref{tab:2} shows that when we take $p=3$ with $1\leq h\leq 4$, 
        $\alpha_1^{p^e+1}+\alpha_2^{p^e+1}+\alpha_3^{p^e+1}=0$ with some 
        $\alpha_1,\alpha_2,\alpha_3\in \mathbb{F}_{3^h}^*$ holds for all possible $h$ and $e$. 
    \end{enumerate}
\end{example}

\begin{table}[!htb]
           \caption{Some examples satisfying the Condition $(2)$ or $(3)$ in Theorem \ref{th.alpha_1,alpha_2,...,alpha_{2i} exist} for $p=3$, $5$ and $13$ with $1\leq h\leq 4$}
       \label{tab:1}       
       \begin{center}
           \begin{tabular}{c|c|l||c|c|l}
            \hline
              $p$ & $h$ & $e$ & $p$ & $h$ & $e$\\ \hline 
               3 & 2 & 0,1 & 3 & 4 & 0,1,2,3 \\
               
               5 & 1 & 0 & 5 & 2 & 0,1 \\
               5 & 3 & 0,1,2 & 5 & 4 & 0,1,2,3 \\
               13 & 1 & 0 & 13 & 2 & 0,1 \\
               13 & 3 & 0,1,2 & 13 & 4 & 0,1,2,3 \\ \hline
           \end{tabular}
       \end{center}
\end{table}

\begin{table}[H]
    \caption{Some examples satisfying $\alpha_1^{p^e+1}+\alpha_2^{p^e+1}+\alpha_3^{p^e+1}=0$ for $p=3$ with $1\leq h\leq 4$}
    \label{tab:2}       
    \begin{center}
        \begin{tabular}{cccccc||cccccc}
            \hline
           $p$ & $h$ & $e$ & $\alpha_1$ & $\alpha_2$ & $\alpha_3$ & $p$ & $h$ & $e$ & $\alpha_1$ & $\alpha_2$ & $\alpha_3$\\\hline 
            3 & 1 & 0 & 1 & 1 & 1 & 3 & 2 & 0 & 2 & 1 & 1 \\
            3 & 2 & 1 & 1 & 2 & $\omega_{3,2}^6$ & 3 & 3 & 0 & 1 & 2 & 1 \\
            3 & 3 & 1 & $\omega_{3,3}^2$ & $\omega_{3,3}^7$ & 1 & 3 & 3 & 2 & $\omega_{3,3}^5$ & $\omega_{3,3}^{11}$ & 1 \\
            3 & 4 & 0 & $\omega_{3,4}^{6}$ & $\omega_{3,4}^{54}$ & $\omega_{3,4}^{5}$ & 3 & 4 & 1 & $\omega_{3,4}^{42}$ & $\omega_{3,4}^{22}$ & $\omega_{3,4}^{2}$ \\
            3 & 4 & 2 & $\omega_{3,4}$ & $\omega_{3,4}^{6}$ & 1 & 3 & 4 & 3 & $\omega_{3,4}^{21}$ & $\omega_{3,4}^{41}$ & $\omega_{3,4}$ \\
            \hline
        \end{tabular}
    \end{center}
\end{table}

\subsection{Further decisions for the Euclidean and Hermitian inner products}\label{sec4.3}

As shown in Theorem \ref{th.alpha_1,alpha_2,...,alpha_{2i} exist}, many cases can lead to Equations (\ref{eq.n+2i.equals 0}) and (\ref{eq.n+2i+1.equals 0}) being valid. 
Therefore, we can affirm that conditions in Theorem \ref{th.alpha_1,alpha_2,...,alpha_{2i} exist}  are relatively weak, i.e., the conditions in Theorem \ref{th.n+2i} are 
relatively weak. However, according to Remarks \ref{rem7.th18.a1,...,a_{2i} exist} (2) and (3), we can not give a full decision for all cases of $0\leq e\leq h-1$. 
In the following, we focus on the Euclidean and Hermitian inner products, and further refine these conditions. 

\begin{lemma}\label{Coro.Euclidean}
    Let $\mathcal{C}$ be an $[n,k,d]_q$ Euclidean self-orthogonal code. Then the following statements hold. 
    \begin{enumerate}
        \item [\rm (1)] If $p$ is even, or $p$ is odd and $h$ is even, there exists an $[n+2i,k,\widetilde{d}_{2i}]_q$ 
        Euclidean self-orthogonal code $\mathcal{C}_{2i}$ for each $i\geq 0$, where $d\leq \widetilde{d}_{2i}\leq n+2i+1-k$.  
        \item [\rm (2)] If $p$ is even and $h\geq 2$, or $p$ is odd and $h$ is even, 
        there exists an $[n+2i+1,k,\widetilde{d}_{2i+1}]_q$ Euclidean self-orthogonal code $\mathcal{C}_{2i+1}$ for each $i\geq 1$, 
        where $d\leq \widetilde{d}_{2i+1}\leq n+2(i+1)-k$. 
    \end{enumerate}
\end{lemma}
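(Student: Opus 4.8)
The plan is to specialize the general machinery of Theorems \ref{th.n+2i} and \ref{th.alpha_1,alpha_2,...,alpha_{2i} exist} to the Euclidean situation $e=0$. Here the exponent $p^e+1$ collapses to $2$, a Euclidean self-orthogonal code is precisely a $0$-Galois self-orthogonal code, and the governing relation of Theorem \ref{th.n+2i} becomes $\alpha_1^{2}+\alpha_2^{2}+\cdots+\alpha_{j}^{2}=0$ with $\alpha_s\in\mathbb{F}_q^*$. Consequently, once suitable $\alpha_s$ are produced, Theorem \ref{th.n+2i} (applicable as long as the number $j$ of appended columns satisfies $j\neq 1$) immediately yields an enlarged Euclidean self-orthogonal code, and the distance estimates $d\leq\widetilde{d}\leq n+j+1-k$ are inherited verbatim. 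With $j=2i$ this gives $n+2i+1-k$ and with $j=2i+1$ it gives $n+2i+2-k=n+2(i+1)-k$, matching the two stated bounds. Thus the whole task reduces to checking, under each hypothesis, that the sufficient conditions of Theorem \ref{th.alpha_1,alpha_2,...,alpha_{2i} exist} are met at $e=0$.

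For part (1) (appending $2i$ columns, $i\geq 0$) I would verify which of conditions (1)--(3) of Theorem \ref{th.alpha_1,alpha_2,...,alpha_{2i} exist} applies. If $p$ is even, condition (1) holds outright. If $p$ is odd and $h$ is even, then with $e=0$ we have $\gcd(e,h)=h$, so $\tfrac{h}{\gcd(e,h)}=1$ is odd; moreover $h$ even forces $p^h\equiv 1\ (\mathrm{mod}\ 8)$ since odd squares are $\equiv 1\ (\mathrm{mod}\ 8)$, whence $4\mid p^h-1$ and $\tfrac{p^h-1}{2}$ is even, so condition (2) holds (and, via Lemma \ref{lem.t_0 exists}, an integer $t_0$ with $\omega^{t_0\cdot 2}=-1$ exists). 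In either case Theorem \ref{th.alpha_1,alpha_2,...,alpha_{2i} exist}(1) supplies the desired $\alpha_1,\dots,\alpha_{2i}\in\mathbb{F}_q^*$, and $2i\neq 1$ lets Theorem \ref{th.n+2i} build $\mathcal{C}_{2i}$.

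For part (2) (appending $2i+1$ columns, $i\geq 1$) the same conditions let Theorem \ref{th.alpha_1,alpha_2,...,alpha_{2i} exist}(2) extend a three-term relation to a $(2i+1)$-term one; the additional input it requires is a seed $\alpha_1,\alpha_2,\alpha_3\in\mathbb{F}_q^*$ with $\alpha_1^{2}+\alpha_2^{2}+\alpha_3^{2}=0$, which is the delicate point. When $p$ is even this is immediate: in characteristic two $\alpha_1^{2}+\alpha_2^{2}+\alpha_3^{2}=(\alpha_1+\alpha_2+\alpha_3)^{2}$, so it suffices to take two distinct nonzero elements and set $\alpha_3=\alpha_1+\alpha_2\neq 0$, which is available exactly because $h\geq 2$ guarantees $|\mathbb{F}_q^*|\geq 3$. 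When $p$ is odd and $h$ is even I would count the zeros of the nondegenerate ternary form $x^2+y^2+z^2$: it has $q^2$ affine zeros, and since $-1$ is a square in $\mathbb{F}_q$ (as $h$ is even) the zeros lying on some coordinate hyperplane number $6q-5$, leaving $q^2-(6q-5)=(q-1)(q-5)$ zeros with all coordinates nonzero, which is positive because $q=p^h\geq 9$. Either way the seed exists, Theorem \ref{th.alpha_1,alpha_2,...,alpha_{2i} exist}(2) produces $\alpha_1,\dots,\alpha_{2i+1}\in\mathbb{F}_q^*$, and Theorem \ref{th.n+2i} with $2i+1\neq 1$ appended columns delivers $\mathcal{C}_{2i+1}$.

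The main obstacle is exactly this odd-characteristic three-square relation for part (2): unlike the characteristic-two case it admits no one-line identity. I would settle it by the zero-counting estimate above, which is uniform across all admissible $q$; alternatively, writing $-1=\beta^2$ and setting $\alpha_3=\beta\gamma$ reduces the problem to expressing a nonzero square $\gamma^2$ as a sum of two nonzero squares $\alpha_1^2+\alpha_2^2$, again feasible for every odd $q\geq 9$. Everything else is bookkeeping: the reduction to $e=0$, the application of Corollary \ref{coro. Galois self-orthogonal code judgment} inside Theorem \ref{th.n+2i}, and the transcription of the Singleton-type bounds.
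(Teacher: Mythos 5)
Your proposal is correct, and its overall scaffolding is exactly the paper's: specialize Theorems \ref{th.n+2i} and \ref{th.alpha_1,alpha_2,...,alpha_{2i} exist} to $e=0$, check condition (1) for even $p$ and condition (2) for odd $p$ with even $h$ (your observation that $p^h$ is an odd square, hence $p^h\equiv 1\ (\mathrm{mod}\ 8)$, makes explicit what the paper leaves as ``easy to check''), and read off the distance bounds from Theorem \ref{th.n+2i}. The one place you genuinely diverge is the crux of part (2) in odd characteristic: producing $\alpha_1,\alpha_2,\alpha_3\in\mathbb{F}_q^*$ with $\alpha_1^2+\alpha_2^2+\alpha_3^2=0$. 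The paper argues constructively, splitting off $q=9$ (where $1+1+1=0$) and otherwise picking $\alpha_1\in\mathbb{F}_{\sqrt{q}}^*$ with $\alpha_1^2\neq 1$, noting $1-\alpha_1^2$ lies in the subfield and is therefore a square in $\mathbb{F}_q$, so $\alpha_2^2=1-\alpha_1^2$ is solvable, and finishing with $\alpha_3=\omega^{t_0}$ satisfying $\alpha_3^2=-1$ via Lemma \ref{lem.t_0 exists}. You instead count points on the nondegenerate conic: $x^2+y^2+z^2=0$ has $q^2$ affine zeros, of which $6q-5$ meet a coordinate hyperplane when $-1$ is a square, leaving $(q-1)(q-5)>0$ all-nonzero solutions for every admissible $q\geq 9$; the arithmetic checks out (the hyperplane count $3(2q-1)-3+1=6q-5$ is right, and the form is nondegenerate even in characteristic $3$ since its Gram matrix is $-I$). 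Your counting argument buys uniformity --- no $q=9$ special case --- at the cost of being non-constructive, whereas the paper's subfield construction yields explicit $\alpha_s$ that can be fed directly into the generator matrices of Theorem \ref{th.n+2i}, which matters for the computational examples in Section \ref{sec4.5-Examples}. The characteristic-two case of part (2) is identical in both treatments, including the role of $h\geq 2$ in guaranteeing $\alpha_1+\alpha_2\neq 0$ is attainable.
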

\begin{proof}
Note that the Euclidean self-orthogonal property implies that $e=0$. 

(1) For the proof of the result $(1)$, we can discuss it in terms of the parity of $p$. 
\begin{itemize}
    \item $\textbf{Case 1:}$ When $p$ is even, according to Theorem \ref{th.alpha_1,alpha_2,...,alpha_{2i} exist}, it is trivial. 
    \item $\textbf{Case 2:}$ When $p$ is odd and $h$ is even, it is easy to check that $\frac{h}{\gcd(0,h)}=1$ is odd and $\frac{p^h-1}{2}$ 
    is even. Hence, the Condition (2) in Theorem \ref{th.alpha_1,alpha_2,...,alpha_{2i} exist} holds. 
\end{itemize}

From Theorem \ref{th.alpha_1,alpha_2,...,alpha_{2i} exist} (1), we know that there exists an $[n+2i,k,\widetilde{d}_{2i}]_q$ 
Euclidean self-orthogonal code $\mathcal{C}_{2i}$ for each $i\geq 0$, where $d\leq \widetilde{d}_{2i}\leq n+2i+1-k$, which completes the proof. 

(2) Combining the proof of the result $(1)$ above, it suffices to prove that there always exist $\alpha_1,\alpha_2,\alpha_3\in \mathbb{F}_{p^h}^*$ 
such that $\alpha_1^2+\alpha_2^2+\alpha_3^2=0$ under the given conditions. We complete the proof in two ways. 
\begin{itemize}
    \item $\textbf{Case 1:}$ When $p$ is even and $h\geq 2$, we can choose $\alpha_1,\alpha_2\in \mathbb{F}_{2^h}^*$ 
    satisfying $\alpha_1+\alpha_2\in \F_{2^h}^*$, i.e., $\alpha_1^2+\alpha_2^2=(\alpha_1+\alpha_2)^2\neq 0$. 
    Set $\alpha_3=\alpha_1+\alpha_2\in \F_{2^h}^*$
    It follows that $\alpha_1^2+\alpha_2^2+\alpha_3^2=(\alpha_1+\alpha_2+\alpha_3)^2=0$ from the fact $\alpha_1+\alpha_2+\alpha_3=2(\alpha_1+\alpha_2)=0$.

    \item $\textbf{Case 2:}$ When $p$ is odd and $h$ is even, we further divide the discussion into two cases as follows. 
    \begin{itemize}
        \item $\textbf{Case 2.1:}$ When $p=3$ and $h=2$, we can take $\alpha_1=\alpha_2=\alpha_3=1\in \F_9^*$, then 
        $\alpha_1^2+\alpha_2^2+\alpha_3^2=3=0$.

        \item $\textbf{Case 2.1:}$ Otherwise, we have $(p^{\frac{h}{2}}-1)\nmid 2$. Then, we can choose 
        $\alpha_1 \in \mathbb{F}_{p^{\frac{h}{2}}}^* \subseteq\mathbb{F}_{p^h}^*$ satisfying $\alpha_1^{2}\neq 1$, which 
        follows that $1-\alpha_1^2\in \mathbb{F}_{p^\frac{h}{2}}^*$ is a square element in $\mathbb{F}_{p^h}$, and thus, there is an 
        $\alpha_2 \in \mathbb{F}_{p^h}^*$ such that $\alpha_2^2=1-\alpha_1^2$, i.e., $\alpha_1^2+\alpha_2^2=1$. Note that the Condition (2) 
        in Theorem \ref{th.alpha_1,alpha_2,...,alpha_{2i} exist} holds, then by Lemma \ref{lem.t_0 exists}, 
        there is an integer $t_0$ such that $\omega^{2t_0}=-1$, where $\omega$ is a primitive element of $\mathbb{F}_{p^h}$. 
        Denote $\alpha_3=\omega^{t_0}\in \mathbb{F}_{p^h}^*$, then $\alpha_1^2+\alpha_2^2+\alpha_3^2=1-1=0$.
    
    \end{itemize}
    \end{itemize}

From Theorem \ref{th.alpha_1,alpha_2,...,alpha_{2i} exist} (2), we know that there exists an $[n+2i+1,k,\widetilde{d}_{2i+1}]_q$ 
Euclidean self-orthogonal code $\mathcal{C}_{2i+1}$ for each $i\geq 1$, where $d\leq \widetilde{d}_{2i+1}\leq n+2(i+1)-k$, which completes the proof. 
\end{proof}

\begin{lemma}\label{Coro.Hermitian}
    Let $\mathcal{C}$ be an $[n,k,d]_q$ Hermitian self-orthogonal code. Then the following statements hold. 

    \begin{enumerate}
        \item [\rm (1)] There exists an $[n+2i,k,\widetilde{d}_{2i}]_q$ Hermitian self-orthogonal code $\mathcal{C}_{2i}$ 
        for each $i\geq 0$, where $d\leq \widetilde{d}_{2i}\leq n+2i+1-k$.  

        \item [\rm (2)] If $q>4$, there exists an $[n+2i+1,k,\widetilde{d}_{2i+1}]_q$ Hermitian self-orthogonal code $\mathcal{C}_{2i+1}$ 
        for each $i\geq 1$, where $d\leq \widetilde{d}_{2i+1}\leq n+2(i+1)-k$. 
    \end{enumerate}
\end{lemma}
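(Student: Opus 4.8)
The plan is to read both parts as instances of Theorem~\ref{th.n+2i} specialized to $e=\frac{h}{2}$, where $p^{e}+1=\sqrt{q}+1$ and the sum in Equation~(\ref{eq.th.n+i}) becomes a sum of relative norms $\alpha^{\sqrt{q}+1}=N_{\F_q/\F_{\sqrt q}}(\alpha)$. The central observation I would exploit is that the norm map $N\colon\F_q^*\to\F_{\sqrt q}^*$, $\alpha\mapsto\alpha^{\sqrt q+1}$, is surjective; consequently the task of finding $\alpha_s\in\F_q^*$ with $\sum_s\alpha_s^{\sqrt q+1}=0$ reduces to finding the same number of nonzero elements of the subfield $\F_{\sqrt q}$ summing to zero, after which surjectivity lifts them back to $\F_q^*$ and Theorem~\ref{th.n+2i} supplies the lengthened codes with the stated distance bounds.

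For part~(1) I would check that the hypotheses of Theorem~\ref{th.alpha_1,alpha_2,...,alpha_{2i} exist}(1) hold automatically when $e=\frac{h}{2}$. If $p$ is even this is Condition~(1) there. If $p$ is odd, then $\gcd(e,h)=\frac{h}{2}$, so $\frac{h}{\gcd(e,h)}=2$ is even, and since $p^{h/2}-1$ is even the factorization $p^{h}-1=(p^{h/2}-1)(p^{h/2}+1)$ gives $\frac{p^{h}-1}{2}=\frac{p^{h/2}-1}{2}\,(p^{h/2}+1)$, so that $(p^{\gcd(e,h)}+1)\mid\frac{p^{h}-1}{2}$ and Condition~(3) applies. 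Either way Theorem~\ref{th.alpha_1,alpha_2,...,alpha_{2i} exist}(1) yields suitable $\alpha_1,\dots,\alpha_{2i}$ for every $i\ge 0$, and Theorem~\ref{th.n+2i} produces $\mathcal{C}_{2i}$.

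For part~(2), Theorem~\ref{th.alpha_1,alpha_2,...,alpha_{2i} exist}(2) reduces everything to exhibiting $\alpha_1,\alpha_2,\alpha_3\in\F_q^*$ with $\alpha_1^{\sqrt q+1}+\alpha_2^{\sqrt q+1}+\alpha_3^{\sqrt q+1}=0$, i.e.\ three nonzero norm-values in $\F_{\sqrt q}$ summing to zero. When $p$ is odd the constraint $q>4$ is vacuous (the smallest admissible $q$ is $9$), and I would take the values $1,1,-2$, which are nonzero because $p\ne 2$, or simply $1,1,1$ when $p=3$. When $p=2$, the hypothesis $q>4$ together with $h$ even forces $q\ge 16$, hence $\sqrt q\ge 4$ and $|\F_{\sqrt q}^*|\ge 3$; I would then pick distinct nonzero $a,b\in\F_{\sqrt q}$ and set $c=a+b$, which is nonzero and satisfies $a+b+c=0$ in characteristic~$2$. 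Lifting through $N$ and invoking Theorems~\ref{th.alpha_1,alpha_2,...,alpha_{2i} exist}(2) and~\ref{th.n+2i} then gives $\mathcal{C}_{2i+1}$ for every $i\ge 1$.

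The one point needing genuine care is the role of $q>4$ in part~(2): when $q=4$ one has $\sqrt q=2$, every norm equals $1$, and $1+1+1=1\ne 0$ in $\F_2$, so no admissible triple exists and the construction really does break down. I would therefore make sure to handle the characteristic-$2$ subcase through the bound $\sqrt q\ge 4$ rather than any $-2$-style shortcut, and to record that the hypothesis imposes no restriction in odd characteristic.
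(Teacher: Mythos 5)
Your proposal is correct, and its part (1) coincides with the paper's proof verbatim in substance: both check that in the Hermitian case ($e=\frac{h}{2}$) Condition (1) of Theorem \ref{th.alpha_1,alpha_2,...,alpha_{2i} exist} applies for even $p$, while for odd $p$ the factorization $p^h-1=(p^{h/2}-1)(p^{h/2}+1)$ gives $(p^{h/2}+1)\mid\frac{p^h-1}{2}$, so Condition (3) applies. In part (2) you take a genuinely cleaner route. The paper constructs the triple inside $\F_q^*$ case by case: for $p=2$, $q>4$ it picks $\alpha_1,\alpha_2$ with $\alpha_1^{\sqrt q+1}+\alpha_2^{\sqrt q+1}\neq 0$ (possible since $(2^h-1)\nmid(2^{h/2}+1)$), shows this sum is fixed by the $\sqrt q$-power map and hence lies in $\F_{\sqrt q}^*$, then finds $\alpha_3$ with that norm; for odd $p$ it chooses $\alpha_1$ with $\alpha_1^{\sqrt q+1}\neq 1$, finds $\alpha_2$ with $\alpha_2^{\sqrt q+1}=1-\alpha_1^{\sqrt q+1}$, and crucially invokes Lemma \ref{lem.t_0 exists} to realize $-1$ as a norm $\omega^{t_0(\sqrt q+1)}$, giving $1-1=0$. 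You instead invoke surjectivity of the norm $N\colon\F_q^*\to\F_{\sqrt q}^*$ once and for all, reducing the whole problem to exhibiting three nonzero subfield elements summing to zero, which is immediate ($1,1,-2$ for odd $p$; $a,b,a+b$ with $a\neq b$ when $p=2$ and $\sqrt q\geq 4$). The two arguments rest on the same underlying fact — the paper's existence claims for $\alpha_2,\alpha_3$ are implicit uses of norm surjectivity — but your version is shorter, avoids Lemma \ref{lem.t_0 exists} entirely, and as a bonus makes explicit why $q=4$ genuinely fails (every norm is $1$ and $1+1+1\neq 0$ in $\F_2$), a point the paper leaves unstated; the paper's version, in exchange, stays within its previously developed lemma machinery and never needs the general surjectivity statement. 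The remaining steps (feeding the scalars into Theorems \ref{th.alpha_1,alpha_2,...,alpha_{2i} exist} and \ref{th.n+2i}, and the distance bounds $d\leq\widetilde{d}_{2i+1}\leq n+2(i+1)-k$) match the paper exactly, so there is no gap.
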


\begin{proof}
Note that the Hermitian self-orthogonal property implies that $h$ is even and $e=\frac{h}{2}\neq 0$. 

(1) Taking a similar argument to Lemma \ref{Coro.Euclidean} (1), we complete the proof in the following two ways.  
\begin{itemize}
    \item $\textbf{Case 1:}$  When $p$ is even, according to Theorem \ref{th.alpha_1,alpha_2,...,alpha_{2i} exist}, it is trivial. 
    \item $\textbf{Case 2:}$ When $p$ is odd, it is easy to check that $\frac{h}{\gcd(\frac{h}{2},h)}=2$ is even and $p^{\gcd(\frac{h}{2},h)}+1=p^{\frac{h}{2}}+1$.  
    Note that $\frac{p^{\frac{h}{2}}-1}{2}$ is an integer. And since $p^h-1=(p^\frac{h}{2}-1)(p^\frac{h}{2}+1)$, 
    we have $\frac{p^h-1}{2}=\frac{p^\frac{h}{2}-1}{2}\cdot (p^\frac{h}{2}+1)$, i.e., $(p^{\frac{h}{2}}+1)\mid \frac{p^h-1}{2}$. Hence, the Condition $(3)$ in Theorem
    \ref{th.alpha_1,alpha_2,...,alpha_{2i} exist} holds. 
\end{itemize}

From Theorem \ref{th.alpha_1,alpha_2,...,alpha_{2i} exist} (1), we know that there exists an $[n+2i,k,\widetilde{d}_{2i}]_q$ 
Hermitian self-orthogonal code $\mathcal{C}_{2i}$ for each $i\geq 0$, where $d\leq \widetilde{d}_{2i}\leq n+2i+1-k$, which completes the proof. 

(2) Similar to Lemma \ref{Coro.Euclidean} (2), in combination with (1) above, it suffices to prove that there always exist 
$\alpha_1,\alpha_2,\alpha_3\in \mathbb{F}_{p^h}^*$ such that 
$
    \alpha_1^{p^{\frac{h}{2}}+1}+\alpha_2^{p^{\frac{h}{2}}+1}+\alpha_3^{p^{\frac{h}{2}}+1}=0
$ 
under the given conditions. We complete the proof in two ways. 

\begin{itemize}
    \item $\textbf{Case 1:}$ When $p$ is even and $q=2^h>4$, we can choose $\alpha_1,\alpha_2\in \mathbb{F}_{2^h}^*$ satisfying 
    $
        \alpha_1^{2^\frac{h}{2}+1}+\alpha_2^{2^\frac{h}{2}+1}\neq 0
    $. 
    Otherwise, taking $\alpha_1^{2^\frac{h}{2}+1}=1$, then $\alpha_2^{2^\frac{h}{2}+1}=1$ for any $\alpha_2\in \mathbb{F}_{2^h}^*$, which contradicts to 
    $(2^h-1)\nmid (2^{\frac{h}{2}}+1)$. Furthermore, it follows from  
    \begin{align*}
        (\alpha_1^{2^\frac{h}{2}+1}+\alpha_2^{2^\frac{h}{2}+1})^{2^\frac{h}{2}}
        =\alpha_1^{2^\frac{h}{2}+1}+\alpha_2^{2^\frac{h}{2}+1}  
    \end{align*}
    
    that $\alpha_1^{2^\frac{h}{2}+1}+\alpha_2^{2^\frac{h}{2}+1}\in \mathbb{F}_{2^\frac{h}{2}}^*$. Hence, there is an $\alpha_3\in \mathbb{F}_{2^h}^*$ 
    such that $\alpha_3^{2^\frac{h}{2}+1}=\alpha_1^{2^\frac{h}{2}+1}+\alpha_2^{2^\frac{h}{2}+1}$, which implies that  
    $\alpha_1^{2^\frac{h}{2}+1}+\alpha_2^{2^\frac{h}{2}+1}+\alpha_3^{2^\frac{h}{2}+1}=0$.

    \item $\textbf{Case 2:}$ When $p$ is odd and $q=p^h>4$, we have $(p^h-1)\nmid (p^\frac{h}{2}+1)$, and thus, we can choose 
    $\alpha_1 \in \mathbb{F}_{p^h}^*$ satisfying $\alpha_1^{p^{\frac{h}{2}}+1}\neq 1$. It follows from    
    \begin{align*}
        (1-\alpha_1^{p^{\frac{h}{2}}+1})^{p^{\frac{h}{2}}}=1-\alpha_1^{p^{\frac{h}{2}}+1}
    \end{align*}
    that $1-\alpha_1^{p^{\frac{h}{2}}+1}\in \mathbb{F}_{p^\frac{h}{2}}^*$. Hence, there is an $\alpha_2 \in \mathbb{F}_{p^h}^*$ such that 
    $\alpha_2^{p^{\frac{h}{2}}+1}=1-\alpha_1^{p^{\frac{h}{2}}+1}$, i.e., $\alpha_1^{p^{\frac{h}{2}}+1}+\alpha_2^{p^{\frac{h}{2}}+1}=1$. 
    Note that the Condition (3) in Theorem \ref{th.alpha_1,alpha_2,...,alpha_{2i} exist} holds, then by Lemma \ref{lem.t_0 exists}, 
    there is an integer $t_0$ such that $\omega^{t_0(p^\frac{h}{2}+1)}=-1$, where $\omega$ is a primitive element 
    of $\mathbb{F}_{p^h}$. Denote $\alpha_3=\omega^{t_0}\in \mathbb{F}_{p^h}^*$, then 
    $\alpha_1^{p^{\frac{h}{2}}+1}+\alpha_2^{p^{\frac{h}{2}}+1}+\alpha_3^{p^{\frac{h}{2}}+1}=1-1=0$.
\end{itemize}

From Theorem \ref{th.alpha_1,alpha_2,...,alpha_{2i} exist} (2), we know that there exists an $[n+2i+1,k,\widetilde{d}_{2i+1}]_q$ 
Hermitian self-orthogonal code $\mathcal{C}_{2i+1}$ for each $i\geq 1$, where $d\leq \widetilde{d}_{2i+1}\leq n+2(i+1)-k$, which completes the proof. 
\end{proof}

Noting that the results (1) and (2) of Lemma \ref{Coro.Euclidean} (resp. Lemma \ref{Coro.Hermitian}) have similar conditions and conclusions, 
we can rewrite Lemma \ref{Coro.Euclidean} (resp. Lemma \ref{Coro.Hermitian}) equivalently as Theorem \ref{th.Euclidean} (resp. Theorem \ref{th.Hermitian}) 
in order to unify it with the form of Theorem \ref{eq.th.n+i}.

\begin{theorem}\label{th.Euclidean}
    Let $\mathcal{C}$ be an $[n,k,d]_q$ Euclidean self-orthogonal code. For any integer $i\geq 0$ and $i\neq 1$, if $p$ is even and $h\geq 2$, or $p$ is odd and $h$ is even, 
    there exists an $[n+i,k,\widetilde{d}_{i}]_q$ Euclidean self-orthogonal code $\mathcal{C}_{i}$, where $d\leq \widetilde{d}_{i}\leq n+i+1-k$. 
    Moreover, for $q=2$, there exists an $[n+2i,k,\widetilde{d}_{2i}]_2$ Euclidean self-orthogonal code $\mathcal{C}_{2i}$ for each $i\geq 0$, where $d\leq \widetilde{d}_{2i}\leq n+2i+1-k$.  
\end{theorem}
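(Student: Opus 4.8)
The plan is to obtain Theorem \ref{th.Euclidean} as a direct repackaging of Lemma \ref{Coro.Euclidean}, separating the two parities of the length increment $i$ and checking that the hypotheses together with the distance bounds line up once everything is written in terms of $i$ rather than of $2i$ or $2i+1$.

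First I would observe that the hypothesis of the theorem, namely that $p$ is even with $h\geq 2$, or $p$ is odd with $h$ even, is precisely the hypothesis of Lemma \ref{Coro.Euclidean} (2), and that it also implies the hypothesis of Lemma \ref{Coro.Euclidean} (1): if $p$ is even (with $h\geq 2$) then in particular $p$ is even, and if $p$ is odd with $h$ even then the second alternative of Lemma \ref{Coro.Euclidean} (1) holds verbatim. Hence both parts of Lemma \ref{Coro.Euclidean} are simultaneously available under the stated assumptions.

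Next I would split the admissible values $i\geq 0$, $i\neq 1$, into the even increments $i=2j$ with $j\geq 0$ and the odd increments $i=2j+1$ with $j\geq 1$ (the exclusion $i\neq 1$ being exactly what removes $j=0$ from the odd family). For $i=2j$, Lemma \ref{Coro.Euclidean} (1) yields an $[n+2j,k,\widetilde{d}_{2j}]_q$ Euclidean self-orthogonal code with $d\leq \widetilde{d}_{2j}\leq n+2j+1-k$, which is exactly the asserted $[n+i,k,\widetilde{d}_{i}]_q$ code obeying $d\leq \widetilde{d}_{i}\leq n+i+1-k$. For $i=2j+1$, Lemma \ref{Coro.Euclidean} (2) yields an $[n+2j+1,k,\widetilde{d}_{2j+1}]_q$ Euclidean self-orthogonal code with $d\leq \widetilde{d}_{2j+1}\leq n+2(j+1)-k$; the one point to verify is the elementary identity $n+2(j+1)-k=n+(2j+1)+1-k=n+i+1-k$, after which this is again the claimed bound. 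Combining the two parities establishes the first assertion for all $i\geq 0$ with $i\neq 1$.

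Finally, for the \emph{Moreover} clause I would note that $q=2$ forces $p=2$ and $h=1$, so the main hypothesis fails (neither $h\geq 2$ nor $h$ even holds), which is exactly why only even increments are claimed in this case. Nevertheless $p=2$ is even, so the first alternative of Lemma \ref{Coro.Euclidean} (1) still applies and produces $[n+2i,k,\widetilde{d}_{2i}]_2$ Euclidean self-orthogonal codes for every $i\geq 0$ with $d\leq \widetilde{d}_{2i}\leq n+2i+1-k$, which is the desired statement. I do not anticipate any genuine obstacle here: all the mathematical content is carried by Lemma \ref{Coro.Euclidean}, and the only care required is the routine bookkeeping that matches the parameter ranges and reconciles the two forms of the Singleton-type upper bound.
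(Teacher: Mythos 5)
Your proposal is correct and matches the paper exactly: the paper offers no separate proof of Theorem \ref{th.Euclidean}, explicitly presenting it as an equivalent rewriting of Lemma \ref{Coro.Euclidean}, which is precisely your parity split $i=2j$ versus $i=2j+1$ (with $i\neq 1$ excluding $j=0$) together with the identity $n+2(j+1)-k=n+i+1-k$ and the observation that $q=2$ falls only under Lemma \ref{Coro.Euclidean}~(1). Your bookkeeping of the hypotheses is also accurate, so there is nothing to add.
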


\begin{theorem}\label{th.Hermitian}
    Let $\mathcal{C}$ be an $[n,k,d]_q$ Hermitian self-orthogonal code. For any integer $i\geq 0$ and $i\neq 1$, if $q>4$, there exists an $[n+i,k,\widetilde{d}_{i}]_q$ 
    Hermitian self-orthogonal code $\mathcal{C}_{i}$, where $d\leq \widetilde{d}_{i}\leq n+i+1-k$. 
    Moreover, for $q=4$, there exists an $[n+2i,k,\widetilde{d}_{2i}]_4$ Hermitian self-orthogonal code $\mathcal{C}_{2i}$ for each $i\geq 0$, where $d\leq \widetilde{d}_{2i}\leq n+2i+1-k$.  
\end{theorem}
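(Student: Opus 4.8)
The plan is to obtain Theorem~\ref{th.Hermitian} as a direct reformulation of the already-established Lemma~\ref{Coro.Hermitian}, merging its two parts into a single family indexed by the total length increment rather than by separate even and odd increments. The starting observation is that every nonnegative integer $i\neq 1$ is either even, written $i=2i'$ with $i'\geq 0$, or odd and at least $3$, written $i=2i'+1$ with $i'\geq 1$. The excluded value $i=1$ is precisely the one odd increment that neither part of Lemma~\ref{Coro.Hermitian} can produce, which is consistent with the constraint $i\neq 1$ already imposed in Theorem~\ref{th.n+2i} on the underlying length-extension construction.

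For the case $q>4$ I would treat the two parities separately and then amalgamate. When $i=2i'$ is even, Lemma~\ref{Coro.Hermitian}(1) supplies an $[n+2i',k,\widetilde{d}_{2i'}]_q$ Hermitian self-orthogonal code with $d\leq\widetilde{d}_{2i'}\leq n+2i'+1-k$. When $i=2i'+1$ is odd with $i'\geq 1$, Lemma~\ref{Coro.Hermitian}(2)—whose hypothesis is exactly $q>4$—supplies an $[n+2i'+1,k,\widetilde{d}_{2i'+1}]_q$ code with $d\leq\widetilde{d}_{2i'+1}\leq n+2(i'+1)-k$. The one bookkeeping step that needs care is to confirm that both distance upper bounds collapse to the uniform Singleton-type expression $n+i+1-k$: this is immediate in the even case, and in the odd case one rewrites $n+2(i'+1)-k=n+(2i'+1)+1-k=n+i+1-k$. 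This verification lets me package the two families as a single family $\mathcal{C}_i$ of length $n+i$ satisfying $d\leq\widetilde{d}_i\leq n+i+1-k$ for every $i\geq 0$ with $i\neq 1$.

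For the remaining case $q=4$ only the even increments survive, because Lemma~\ref{Coro.Hermitian}(2) explicitly excludes $q=4$. Hence I would invoke only Lemma~\ref{Coro.Hermitian}(1) to obtain, for each $i\geq 0$, an $[n+2i,k,\widetilde{d}_{2i}]_4$ Hermitian self-orthogonal code with $d\leq\widetilde{d}_{2i}\leq n+2i+1-k$, which is exactly the ``Moreover'' clause of the theorem.

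I do not anticipate any genuine mathematical obstacle here, since the statement is logically equivalent to a lemma proved just above; the content has already been pushed into Theorem~\ref{th.n+2i} and Lemma~\ref{Coro.Hermitian}. The only points demanding attention are purely clerical: the re-indexing between the lemma's internal counter $i'$ and the theorem's total increment $i$, the confirmation that the odd-case upper bound $n+2(i'+1)-k$ coincides with the unified bound $n+i+1-k$, and the check that the index set is exactly $\{i\in\mathbb{Z}:i\geq 0,\ i\neq 1\}$ with no increment double-counted and $i=1$ correctly isolated as the unique gap.
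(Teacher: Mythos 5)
Your proposal is correct and matches the paper exactly: the paper offers no separate proof, stating only that Theorem \ref{th.Hermitian} is an equivalent rewriting of Lemma \ref{Coro.Hermitian}, and your parity split ($i=2i'$ via part (1), $i=2i'+1$ with $i'\geq 1$ via part (2)) together with the bound identification $n+2(i'+1)-k=n+i+1-k$ and the restriction to part (1) when $q=4$ is precisely that rewriting made explicit.
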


\subsection{Results on codes with Galois hulls of arbitrary dimensions}\label{sec4.4}
Recall that our ultimate goal in this section is to construct codes of larger length with Galois hulls of arbitrary dimensions from given 
Galois self-orthogonal codes. Based on the results of the first three subsections, we can achieve it in this subsection. 

Firstly, from Lemma \ref{coro.all_Galois hull from Galois self-orthogonal code}, codes of length $n+i$ ($i\geq 0$ and $i\neq 1$) with Galois hulls of 
arbitrary dimensions can be directly obtained for $q\geq 3$. We list them as follows. Specifically, 
Corollary \ref{coro.new_arbitrary dimension e-Galois hull codes with length n+2i and n+2i+1} comes from Theorem \ref{th.n+2i}, 
Corollary \ref{coro.new_arbitrary dimension Euclidean hull codes with length n+2i and n+2i+1} comes from Theorem \ref{th.Euclidean}, and 
Corollary \ref{coro.new_arbitrary dimension Hermitian hull codes with length n+2i and n+2i+1} comes from Theorem \ref{th.Hermitian}. 

\begin{corollary}\label{coro.new_arbitrary dimension e-Galois hull codes with length n+2i and n+2i+1}
    Let $q\geq 3$ and $\mathcal{C}$ be an $[n,k,d]_q$ $e$-Galois self-orthogonal code. If Equation (\ref{eq.th.n+i}) holds, 
    there exists an $[n+i,k,\widetilde{d}_{i}]_q$ code $\mathcal{C}_{i}$ with $l$-dimensional $e$-Galois hull for each $i\geq 0$, $i\neq 1$ and $0\leq l\leq k$, 
    where $d\leq \widetilde{d}_{i}\leq n+i+1-k$. 

\end{corollary}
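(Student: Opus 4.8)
The plan is to chain together the two ingredients already at hand: Theorem~\ref{th.n+2i}, which lengthens a Galois self-orthogonal code while preserving self-orthogonality and controlling the minimum distance, and Lemma~\ref{coro.all_Galois hull from Galois self-orthogonal code}, which fabricates a Galois hull of any prescribed dimension from a given Galois self-orthogonal code once $q \geq 3$. The corollary is essentially the composition of these two statements, so the argument should be short.

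First I would invoke Theorem~\ref{th.n+2i}. By hypothesis $\mathcal{C}$ is an $[n,k,d]_q$ $e$-Galois self-orthogonal code, the index satisfies $i \geq 0$ and $i \neq 1$, and Equation~(\ref{eq.th.n+i}) holds. The theorem therefore yields an $[n+i,k,\widetilde{d}_i]_q$ $e$-Galois self-orthogonal code $\mathcal{C}_i$ whose minimum distance obeys $d \leq \widetilde{d}_i \leq n+i+1-k$. This step already delivers the target length $n+i$, the target dimension $k$, and the stated distance bounds.

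Next I would feed the code $\mathcal{C}_i$ produced above into Lemma~\ref{coro.all_Galois hull from Galois self-orthogonal code}. Since $q \geq 3$ and $\mathcal{C}_i$ is an $[n+i,k,\widetilde{d}_i]_q$ $e$-Galois self-orthogonal code, the lemma guarantees, for every $0 \leq l \leq k$, the existence of an $[n+i,k,\widetilde{d}_i]_q$ code with $l$-dimensional $e$-Galois hull. Relabelling this code as $\mathcal{C}_i$ gives exactly the family asserted in the statement, completing the argument.

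Because both cited results are already established, there is no substantive obstacle here; the proof is purely a matter of correctly transferring hypotheses. The only points meriting attention are bookkeeping ones: checking that the constraints $i \geq 0$, $i \neq 1$, and the validity of Equation~(\ref{eq.th.n+i}) are passed verbatim into Theorem~\ref{th.n+2i}, that the condition $q \geq 3$ is what licenses the application of Lemma~\ref{coro.all_Galois hull from Galois self-orthogonal code}, and that the parameters $[n+i,k,\widetilde{d}_i]_q$ agree across the two invocations so that the distance bound $d \leq \widetilde{d}_i \leq n+i+1-k$ survives into the final family.
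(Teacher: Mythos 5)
Your proposal is correct and matches the paper's own derivation: the paper obtains this corollary precisely by composing Theorem \ref{th.n+2i} (to produce the $[n+i,k,\widetilde{d}_i]_q$ $e$-Galois self-orthogonal code under Equation (\ref{eq.th.n+i})) with Lemma \ref{coro.all_Galois hull from Galois self-orthogonal code} (whose $q\geq 3$ hypothesis yields hulls of every dimension $0\leq l\leq k$ with the same parameters). Your bookkeeping of how the hypotheses and the distance bound $d\leq \widetilde{d}_i\leq n+i+1-k$ transfer between the two invocations is exactly what the paper relies on.
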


\begin{corollary}\label{coro.new_arbitrary dimension Euclidean hull codes with length n+2i and n+2i+1}
    Let $q\geq 3$ and $\mathcal{C}$ be an $[n,k,d]_q$ Euclidean self-orthogonal code. If $p$ is even and $h\geq 2$, or $p$ is odd and $h$ is even, 
    there exists an $[n+i,k,\widetilde{d}_{i}]_q$ code $\mathcal{C}_{i}$ with $l$-dimensional Euclidean hull for each $i\geq 0$, $i\neq 1$ and $0\leq l\leq k$, 
    where $d\leq \widetilde{d}_{i}\leq n+i+1-k$. 
\end{corollary}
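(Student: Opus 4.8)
The plan is to obtain this corollary as a direct composition of two results already established in the excerpt, so that no new machinery is required. The key observation is that the Euclidean inner product coincides with the $0$-Galois inner product, which lets me feed the output of Theorem \ref{th.Euclidean} into Lemma \ref{coro.all_Galois hull from Galois self-orthogonal code} taken with $e=0$.

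First I would apply Theorem \ref{th.Euclidean} to the given $[n,k,d]_q$ Euclidean self-orthogonal code $\mathcal{C}$. Under the hypothesis that $p$ is even and $h\geq 2$, or $p$ is odd and $h$ is even, this produces for each integer $i\geq 0$ with $i\neq 1$ an $[n+i,k,\widetilde{d}_{i}]_q$ Euclidean self-orthogonal code, whose minimum distance satisfies $d\leq \widetilde{d}_{i}\leq n+i+1-k$. This is the step that increases the length while preserving self-orthogonality and the distance bound.

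Next I would invoke Lemma \ref{coro.all_Galois hull from Galois self-orthogonal code} with $e=0$, applied to this length-$(n+i)$ self-orthogonal code rather than to $\mathcal{C}$ itself. Since $q\geq 3$ and the code is an $[n+i,k,\widetilde{d}_{i}]_q$ $0$-Galois (equivalently, Euclidean) self-orthogonal code, the lemma guarantees, for each $0\leq l\leq k$, the existence of an $[n+i,k,\widetilde{d}_{i}]_q$ code $\mathcal{C}_{i}$ with $l$-dimensional Euclidean hull. Because this lemma yields codes sharing the same length, dimension and minimum distance as its input, the bound $d\leq \widetilde{d}_{i}\leq n+i+1-k$ carries over verbatim to the final codes.

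There is essentially no hard obstacle here; the corollary is a bookkeeping combination of Theorem \ref{th.Euclidean} and Lemma \ref{coro.all_Galois hull from Galois self-orthogonal code}. The only point requiring care is confirming that the hypothesis $q\geq 3$ of the lemma is compatible with the conditions on $p$ and $h$ — but this is immediate, since $p$ even with $h\geq 2$ forces $q=p^h\geq 4$, and $p$ odd forces $q\geq 3$ — so the lemma always applies to the intermediate self-orthogonal code, and the two results chain together without friction.
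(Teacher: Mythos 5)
Your proposal is correct and follows exactly the paper's route: the paper derives this corollary by first applying Theorem \ref{th.Euclidean} to lengthen the Euclidean self-orthogonal code and then invoking Lemma \ref{coro.all_Galois hull from Galois self-orthogonal code} (with $e=0$ and $q\geq 3$) to realize every hull dimension $0\leq l\leq k$ while preserving the parameters $[n+i,k,\widetilde{d}_i]_q$. Your added check that the hypotheses on $p$ and $h$ are compatible with $q\geq 3$ is harmless bookkeeping the paper leaves implicit.
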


\begin{corollary}\label{coro.new_arbitrary dimension Hermitian hull codes with length n+2i and n+2i+1}
    Let $\mathcal{C}$ be an $[n,k,d]_q$ Hermitian self-orthogonal code. For $q> 4$, there exists an $[n+i,k,\widetilde{d}_{i}]_q$ code $\mathcal{C}_{i}$ 
    with $l$-dimensional Hermitian hull for each $i\geq 0$, $i\neq 1$ and $0\leq l\leq k$, where $d\leq \widetilde{d}_{i}\leq n+i+1-k$. Moreover, for $q=4$, 
    there exists an $[n+2i,k,\widetilde{d}_{2i}]_q$ code $\mathcal{C}_{2i}$ with $l$-dimensional Hermitian hull for each $i\geq 0$ and $0\leq l\leq k$, 
    where $d\leq \widetilde{d}_{2i}\leq n+2i+1-k$. 
\end{corollary}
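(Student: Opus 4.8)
The plan is to obtain this result as a direct composition of Theorem \ref{th.Hermitian} with Lemma \ref{coro.all_Galois hull from Galois self-orthogonal code}, exploiting that the Hermitian inner product is exactly the $e$-Galois inner product with $e = \frac{h}{2}$ (and $h$ even). The two prerequisites play complementary roles: Theorem \ref{th.Hermitian} lengthens a Hermitian self-orthogonal code while preserving self-orthogonality and controlling the minimum distance, whereas Lemma \ref{coro.all_Galois hull from Galois self-orthogonal code} converts a self-orthogonal code of fixed parameters into a family of codes with the same parameters but every prescribed hull dimension $0 \leq l \leq k$.

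First, for the case $q > 4$, I would invoke Theorem \ref{th.Hermitian} to produce, for each $i \geq 0$ with $i \neq 1$, an $[n+i, k, \widetilde{d}_i]_q$ Hermitian self-orthogonal code $\mathcal{C}_i$ satisfying $d \leq \widetilde{d}_i \leq n+i+1-k$. Since $q > 4$ forces $q \geq 3$, I can then apply Lemma \ref{coro.all_Galois hull from Galois self-orthogonal code} (with $e = \frac{h}{2}$) to each such $\mathcal{C}_i$. Because that lemma leaves the length, dimension and minimum distance intact while realizing every hull dimension $0 \leq l \leq k$, its output is precisely an $[n+i, k, \widetilde{d}_i]_q$ code with $l$-dimensional Hermitian hull, inheriting the distance bound from $\mathcal{C}_i$.

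Second, for the boundary case $q = 4$, the same two-step argument applies verbatim, except that Theorem \ref{th.Hermitian} now guarantees Hermitian self-orthogonal codes only of length $n + 2i$ (for $i \geq 0$), reflecting the parity restriction in that theorem when $q = 4$. As $q = 4 \geq 3$, Lemma \ref{coro.all_Galois hull from Galois self-orthogonal code} again applies and yields $[n+2i, k, \widetilde{d}_{2i}]_4$ codes with $l$-dimensional Hermitian hull for each $0 \leq l \leq k$, with $d \leq \widetilde{d}_{2i} \leq n+2i+1-k$.

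The proof carries essentially no genuine obstacle, since all the substantive work has already been done in the prerequisite results; the only points requiring care are bookkeeping ones. Specifically, I must confirm that $q \geq 3$ holds in both cases so that Lemma \ref{coro.all_Galois hull from Galois self-orthogonal code} is applicable, check that the admissible index set for $i$ (excluding $i = 1$ when $q > 4$, and restricting to even increments when $q = 4$) is transferred faithfully from Theorem \ref{th.Hermitian}, and verify that the hull-dimension construction of Lemma \ref{coro.all_Galois hull from Galois self-orthogonal code} preserves the $\widetilde{d}_i$ distance bounds rather than degrading them.
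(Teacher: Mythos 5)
Your proposal is correct and follows exactly the paper's route: the paper derives this corollary by combining Theorem \ref{th.Hermitian} (Hermitian self-orthogonal codes of length $n+i$, resp.\ $n+2i$ when $q=4$) with Lemma \ref{coro.all_Galois hull from Galois self-orthogonal code} for $q\geq 3$, which is precisely your two-step composition, including the check that $q=4\geq 3$ keeps the lemma applicable. Nothing is missing.
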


As shown in Remark \ref{rem5.th.n+2i} (2), $i=1$ is not contained in Theorem \ref{th.n+2i}. 
Therefore, codes of length $n+1$ with Galois hulls of arbitrary dimensions can not be obtained from the above three corollaries. 
For this case, we give the following theorem, which shows some necessary and sufficient 
conditions to construct codes of length $n+1$ with $l$-dimensional $e$-Galois hull for prescribed $0\leq l\leq k-1$.

\begin{theorem}\label{th.arbitrary hull codess with length n+1}
    Let $\mathcal{C}$ be an $[n,k,d]_q$ $e$-Galois self-orthogonal code. Then the following statements hold. 
    \begin{itemize}
        \item [\rm (1)] There exists an $[n+1,k,\widetilde{d}_1]_q$ code $\C_1$ with $(k-1)$-dimensional $e$-Galois hull, 
        where $d\leq \widetilde{d}_1\leq n+2-k$;

        \item [\rm (2)] There exists an $[n+1,k,\widetilde{d}_1]_q$ code $\C_1$ with $l$-dimensional $e$-Galois hull, 
        where $d\leq \widetilde{d}_1\leq n+2-k$ if and only if $(k-l)\alpha^{p^e+1}+\beta^{p^e+1}\neq 1$ and $\beta^{p^e+1}\neq 1$ 
        hold for some $\alpha,\beta\in \mathbb{F}_q^*$ and prescribed $0\leq l\leq k-2$;

        \item [\rm (3)] In particular, if $\C$ is a GRS code of length $n\leq q$, then $\C_1$ derived in $(1)$ above is MDS, i.e., $\widetilde{d}_1=d+1$.
    \end{itemize}
\end{theorem}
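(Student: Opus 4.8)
The plan is to translate the prescribed hull dimension into the rank of a single Gram matrix and then to engineer that rank by an explicit one-column extension. By Corollary~\ref{coro. Galois self-orthogonal code judgment} I may assume, up to equivalence, that $\C$ has a generator matrix $G=(\,I_k\mid A\,)$ with $A\sigma^{e}(A^{T})=-I_k$. For any candidate $[n+1,k]_q$ code $\C_1$ with generator matrix $G_1$, Equation~(\ref{eq.panxu hull}) gives $\dim(\Hull_e(\C_1))=k-\rank(G_1\sigma^{e}(G_1^{T}))$, so fixing the hull dimension is the same as fixing this rank. In every part the distance bound $d\le\widetilde d_1\le n+2-k$ comes for free: the upper bound is the Singleton bound, and the lower bound follows exactly as in Theorem~\ref{th.n+2i}, because the first $n$ coordinates of $G_1$ will always form a monomial image of $G$, whence $\omega_H(\mathbf{c})\ge\omega_H(\mathbf{u}G)\ge d$ for every nonzero $\mathbf{c}=\mathbf{u}G_1$. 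For (1) I would simply append one nonzero column $\mathbf{b}\in\F_q^{k}$, taking $G_1=(\,G\mid\mathbf{b}\,)$; then $G_1\sigma^{e}(G_1^{T})=\mathbf{b}\sigma^{e}(\mathbf{b}^{T})$ has rank $1$, so $\dim(\Hull_e(\C_1))=k-1$.

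For (2) I would use the parametrized extension
\[
G_1=\left(\begin{array}{c|c|c}\Lambda & A & \mathbf{a}\end{array}\right),\quad \Lambda=\mathrm{diag}(\underbrace{\beta,\dots,\beta}_{k-l},\underbrace{1,\dots,1}_{l}),\quad \mathbf{a}=(\underbrace{\alpha,\dots,\alpha}_{k-l},\underbrace{0,\dots,0}_{l})^{T},
\]
with $\alpha,\beta\in\F_q^{*}$. Using $A\sigma^{e}(A^{T})=-I_k$ one computes the block-diagonal Gram matrix
\[
G_1\sigma^{e}(G_1^{T})=\begin{pmatrix}B & O\\ O & O_{l\times l}\end{pmatrix},\qquad B=(\beta^{p^e+1}-1)I_{k-l}+\alpha^{p^e+1}\mathbf{1}_{(k-l)\times(k-l)}.
\]
Since the lower block vanishes, Equation~(\ref{eq.panxu hull}) gives $\dim(\Hull_e(\C_1))=k-\rank(B)\ge l$, with equality exactly when the $(k-l)\times(k-l)$ matrix $B$ is nonsingular. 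Writing $x=\beta^{p^e+1}-1$ and $y=\alpha^{p^e+1}$, the matrix-determinant lemma yields $\det(B)=x^{\,k-l-1}\big(x+(k-l)y\big)$, which (using $k-l\ge 2$) is nonzero if and only if $x\ne0$ and $x+(k-l)y\ne0$, that is, $\beta^{p^e+1}\ne1$ and $(k-l)\alpha^{p^e+1}+\beta^{p^e+1}\ne1$. This proves both directions of the equivalence simultaneously; note also that $k-l\ge2$ guarantees $\mathbf{a}\ne\mathbf{0}$, as required.

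For (3) I would realize the column appended in (1) as a genuine GRS evaluation column. Write $\C=\GRS_k(\mathbf{a},\mathbf{v})$ with $n\le q$ and $G=G_k(\mathbf{a},\mathbf{v})$. When $n<q$ I pick an unused code locator $a_{n+1}\in\F_q\setminus\{a_1,\dots,a_n\}$ and any $v_{n+1}\in\F_q^{*}$, and take $\mathbf{b}=(v_{n+1},v_{n+1}a_{n+1},\dots,v_{n+1}a_{n+1}^{\,k-1})^{T}$, so that $G_1=(\,G\mid\mathbf{b}\,)$ generates $\GRS_k((\mathbf{a},a_{n+1}),(\mathbf{v},v_{n+1}))$. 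This extended code is MDS of length $n+1$, hence $\widetilde d_1=(n+1)-k+1=d+1$, while $\mathbf{b}\ne\mathbf{0}$ keeps the hull $(k-1)$-dimensional by (1). The boundary case $n=q$ leaves no unused locator, so there I would append the point at infinity, obtaining the extended GRS code whose extra column $(0,\dots,0,v_{n+1})^{T}$ is again nonzero and whose MDS property is standard.

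The main obstacle is the rank analysis underlying (2): one has to massage $G_1\sigma^{e}(G_1^{T})$ into the clean block-diagonal shape above and then evaluate $\det(B)$ in a way valid in every characteristic — the matrix-determinant lemma is what makes the two scalar conditions fall out exactly, and it sidesteps the diagonalizability issues that an eigenvalue argument for $xI+y\mathbf{1}$ would encounter when $p\mid(k-l)$. A secondary subtlety is the boundary case $n=q$ in (3): to retain both the MDS property and the one-dimensional drop of the hull one must pass from ordinary to extended GRS codes rather than search for a nonexistent new locator in $\F_q$.
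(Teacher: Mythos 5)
Your proposal is correct and follows essentially the same route as the paper: the same one-column extension $(\Lambda\,|\,A\,|\,\mathbf{a})$ with the same block Gram matrix whose determinant $(\beta^{p^e+1}-1)^{k-l-1}\bigl((k-l)\alpha^{p^e+1}+\beta^{p^e+1}-1\bigr)$ is read through Equation~(\ref{eq.panxu hull}), and the extension-at-infinity argument for the MDS claim in (3). Your minor deviations---proving (1) by appending an arbitrary nonzero column and observing the Gram matrix is a rank-one outer product $\mathbf{b}\sigma^e(\mathbf{b}^T)$, where the paper treats it as the $k-l=1$ case of the parametrized family, and splitting (3) into a fresh-locator case for $n<q$ versus the infinity column for $n=q$, where the paper appends $(0,\dots,0,\gamma)^T$ uniformly---are both valid and change nothing essential.
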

\begin{proof}
    Similar to Theorem \ref{th.n+2i}, let $G=(I_k\ |\ A)$ be a generator matrix of $\mathcal{C}$, then $A\sigma^{e}(A^T)=-I_k$. 
Choose $\alpha,\beta,\lambda\in \mathbb{F}_q^*$ satisfying $\lambda^{p^e+1}=1$. Let    
\begin{equation*}
    G_{1}= \left(                 
         \begin{array}{c|c|c}   
            G' & A & \mathbf{a}
         \end{array} 
     \right)               
\end{equation*}
be a generator matrix of a code $\mathcal{C}_{1}$, where 
$G'={\rm diag}(\underbrace{\beta,\ \cdots,\ \beta}_{k-l},\ \underbrace{\lambda,\ \cdots,\ \lambda}_{l})$ and  
$\mathbf{a}=(\underbrace{\alpha\ \cdots\ \alpha}_{k-l}\ \underbrace{ 0\ \cdots\ 0}_{l})^T$. Then $\dim(\C_1)=\dim(\C)=k$. 
According to Remark \ref{rem5.th.n+2i} (1), adding $\mathbf{a}$ can make sense only if $k-l>0$, i.e., $0\leq l\leq k-1$. Note that 
    \begin{align*}
        G_{1}\sigma^{e}(G_{1}^T) & =  \left(                 
        \begin{array}{c|c|c}   
           G' & A & \mathbf{a}
        \end{array} 
    \right)
    \left(                 
        \begin{array}{c}   
           \sigma^e(G'^T) \\ \hline 
           \sigma^e(A^T) \\ \hline
           \sigma^e(\mathbf{a}^T) \\
        \end{array}
    \right)\\
    &  =  G'\sigma^e(G'^T)+A\sigma^e(A^T)+\mathbf{a}\sigma^e(\mathbf{a}^T) \\
    &  =  
	\left( 
	\begin{array}{c|c}
            \begin{array}{cccc}
            \alpha^{p^e+1}+\beta^{p^e+1}-1 & \alpha^{p^e+1}                 &\cdots & \alpha^{p^e+1}\\
            \alpha^{p^e+1}                 & \alpha^{p^e+1}+\beta^{p^e+1}-1 &\cdots & \alpha^{p^e+1}\\
            \vdots                         & \vdots                              &\ddots & \vdots \\
            \alpha^{p^e+1}                 &\alpha^{p^e+1}                  &\cdots & \alpha^{p^e+1}+\beta^{p^e+1}-1
            \end{array}
                                                                                                                            & O_{(k-l)\times l} \\ \hline 
	O_{l\times (k-l)}                                                                                                       & O_{l\times l}
	\end{array}
	\right).
    \end{align*}

    Denote by $D_{(k-l)\times (k-l)}$ the upper left corner of the above block matrix. 
    From Equation (\ref{eq.panxu hull}), we know that 
    \begin{align*}
        \begin{split}
            \dim(\Hull_e(\C_1))=l & \Leftrightarrow \rank(G_{1}\sigma^{e}(G_{1}^T))=\dim(\C_1)-l \\
                                & \Leftrightarrow \rank(D_{(k-l)\times (k-l)})=k-l \\
                                & \Leftrightarrow  D_{(k-l)\times (k-l)} {\rm~is~nonsingular}.
        \end{split}
    \end{align*}
    For $0\leq l\leq k-1$, we now determine whether $D_{(k-l)\times (k-l)}$ is a nonsingular matrix or not.  
    An easy calculation follows that 
    \begin{align}\label{eq.det(D)}
        {\rm det}(D_{(k-l)\times (k-l)})=(\beta^{p^e+1}-1)^{k-l-1}\cdot ((k-l)\alpha^{p^e+1}+\beta^{p^e+1}-1),        
    \end{align}
    where ${\rm det}(D_{(k-l)\times (k-l)})$ denotes the determinant of the matrix $D_{(k-l)\times (k-l)}$.
    
    We discuss it in the following two cases. 
    \begin{itemize}
        \item $\textbf{Case 1:}$ When $l=k-1$, i.e., $k-l=1$, by Equation (\ref{eq.det(D)}), we have 
        ${\rm det}(D_{1\times 1})=\alpha^{p^e+1}+\beta^{p^e+1}-1$. Clearly, 
        ${\rm det}(D_{1\times 1})\neq 0$ if and only if $\alpha^{p^e+1}+\beta^{p^e+1}\neq 1$ holds 
        for some $\alpha,\ \beta\in \mathbb{F}_q^*$. We assert that such $\alpha,\beta$ always exist. 
        Otherwise, if we take $\beta=1$, then $\alpha^{p^e+1}=0$, which contradicts to $\alpha\in \F_q^*$. 
        Therefore, $D_{1\times 1}$ is nonsingular for any $\alpha,\beta\in \mathbb{F}_q^*$ and in this case, 
        $\dim(\Hull_e(\C_1))=k-1$. 
        
        \item $\textbf{Case 2:}$ When $0\leq l\leq k-2$, by Equation (\ref{eq.det(D)}), we have 
        ${\rm det}(D_{(k-l)\times (k-l)})\neq 0$ if and only if $\beta^{p^e+1}\neq 1$ and 
        $(k-l)\alpha^{p^e+1}+\beta^{p^e+1}\neq 1$ hold for some $\alpha,\beta\in \mathbb{F}_q^*$.   
        And in this case, $\dim(\Hull_e(\C_1))=l$.  
    \end{itemize}
    Note that the range of $\widetilde{d}_1$ can be deduced with the same reasoning as Theorem \ref{th.n+2i}. 
    Therefore, we complete the proofs of the results $(1)$ and $(2)$. 
  
    For the proof of the result (3), we assume that $\mathcal{C}$ is an $e$-Galois self-orthogonal GRS code of length $n\leq q$ with a generator matrix $G$, 
    then $G\sigma^e(G^T)=O_{k\times k}$. Take $G_1 =(G\ |\ \mathbf{g})$ as a generator matrix of a code $\mathcal{C}_1'$, 
    where $\mathbf{g}=(\underbrace{0\ 0\ \cdots\ 0}_{k-1}\ \gamma)^T$ with $\gamma\in \mathbb{F}_q^*$. 
    It follows from
    \begin{align*}
        G_1\sigma^e(G_1^T)=G\sigma^e(G^T)+\mathbf{g}\sigma^e(\mathbf{g}^T)={\rm diag}(\underbrace{0,\ 0,\ \cdots,\ 0}_{k-1}, \gamma^{p^e+1})    
    \end{align*} 
    that $\dim(\Hull_e(\mathcal{C}_1))=k-\rank(G_1\sigma^e(G_1^T))=k-1$. 
    Note that when $\gamma = 1$, $G_1$ generates a so-called extended GRS code with the same code locators and column multipliers as $\C$, 
    which is MDS. 
    Therefore, for $\gamma\in \F_q^*$, 
    $G_1$ generates a code equivalent to the extended GRS code. Recall that equivalent codes have the same minimum distance, 
    and thus, $\mathcal{C}_1'$ is an $[n+1,k,d+1]_q$ MDS code with $(k-1)$-dimensional $e$-Galois hull, which completes the proof of the result $(3)$. 
\end{proof}

\begin{remark}\label{rem8.length of n+1} $\quad$
    \begin{enumerate}
        \item [\rm (1)] In Theorem \ref{th.arbitrary hull codess with length n+1}, adding one column can make sense 
        only if $0\leq l\leq k-1$. Therefore, we can not obtain $e$-Galois self-orthogonal codes of length $n+1$ 
        in Theorem \ref{th.arbitrary hull codess with length n+1}, and thus, the study of Theorem \ref{th.n+2i} is necessary. 

        \item [\rm (2)]  For more details on extended GRS codes, readers are referred to \cite{RefJ-6}-\cite{RefJ-4} and the references therein. 
    \end{enumerate}
\end{remark}

\subsection{Examples}\label{sec4.5-Examples}

In this subsection, we give some interesting examples to illustrate our results.
To this end, we first introduce some useful definitions. 

\begin{definition}{\rm (\cite{RefJ2})}\label{def.Singleton detect}
   The \underline{Singleton defect} of an $[n,k,d]_q$ code $\mathcal{C}$ is $S(\mathcal{C})=n-k+1-d.$
\end{definition}

Based on Definition \ref{def.Singleton detect}, Liao et al. \cite{RefJ1} introduced the so-called $m$-MDS code under the  
Euclidean inner product. Here, we generalize it into the general $e$-Galois inner product in a natural way as follows. 

\begin{definition}\label{def.m-MDS codes}
    An $[n,k,d]_q$ code $\mathcal{C}$ is called \underline{$e$-Galois $m$-MDS}  if $S(\mathcal{C})=S(\mathcal{C}^{\bot_e})=m.$
\end{definition}
\begin{remark}\label{rem5} $\quad$ 
    \begin{enumerate}
        \item [\rm (1)] When $e=0$, it is the origin definition proposed by Liao et al. \cite{RefJ1}. 
        \item [\rm (2)] When $m=0$ (resp. $m=1$), $\mathcal{C}$ is the important MDS (resp. NMDS) code.
        \item [\rm (3)] As stated in \cite{RefJ0}, $m$-MDS codes can make sense if 
        $0\leq m\leq \min\{k-1,\lfloor \frac{n}{2}\rfloor\}$. In \cite{RefJ1}, some properties of $m$-MDS codes 
        are discussed and some $m$-MDS codes were constructed via Reed-Muller codes and Golay codes.
    \end{enumerate}
\end{remark}

\begin{definition}\label{def.optimal codes}
    Let $\mathcal{C}$ be an $[n,k,d]_q$ code. 
    \begin{itemize}
        \item [\rm (1)] $\C$ is called \underline{distance-optimal} if $\C$ has the largest minimum distance among all $[n,k]_q$ codes; 
        \item [\rm (2)] $\C$ is called \underline{almost-distance-optimal} if there exists an $[n,k,d+1]_q$ distance-optimal code.  
    \end{itemize} 
\end{definition}

Due to space limitations, we mainly give some examples on $e$-Galois self-orthogonal codes of length $n+2$ derived from 
$e$-Galois self-orthogonal codes of length $n$, and more other lengths can be derived similarly. Notably, in our examples, 
most derived codes are distance-optimal, almost-distance-optimal, or $m$-MDS, and their minimum distances 
are improved by different degrees. 

\begin{example}\label{example1}
    Let $p=2$, $h=1$, i.e., $q=2$. In the first two examples, we consider the Euclidean inner product. 
    \begin{enumerate}
        \item [\rm (1)] Let  
        \begin{footnotesize}
            \begin{align*}
                G= \left( 
                    \begin{array}{cccc}
                        1 & 0 & 1 & 0 \\
                        0 & 1 & 0 & 1 \\
                    \end{array}
                    \right)    
            \end{align*}
        \end{footnotesize}
            be a generator matrix of a $[4,2,2]_2$ Euclidean self-dual code $\mathcal{C}$. 
            Since $p$ is even and $e=0$ here, according to Theorem \ref{th.alpha_1,alpha_2,...,alpha_{2i} exist}, 
            we can take $\alpha_1=\alpha_2=1\in \mathbb{F}_2^*$, then $G_2$ in Theorem \ref{th.n+2i} can be written as 
        \begin{footnotesize}
                \begin{align*}
                    G_2= \left( 
                        \begin{array}{cccccc}
                            1 & 0 & 1 & 0 & 1 & 1\\
                            0 & 1 & 0 & 1 & 1 & 1\\
                        \end{array}
                        \right).    
                \end{align*}
        \end{footnotesize}
    
           $\quad$ Computed with the Magma software package \cite{BCP1997}, we know that the code $\mathcal{C}_2$ generated by $G_2$ is a $[6,2,4]_2$ 
           Euclidean self-orthogonal code and its Euclidean dual code $\mathcal{C}_2^{\bot_0}$ has parameters 
           $[6,4,2]_2$. On one hand, according to Codetable \cite{RefJ-1}, both $\C_2$ and $\C_2^{\bot_0}$ are distance-optimal. On the other hand, 
           since $S(\mathcal{C}_2)=S(\mathcal{C}_2^{\bot_0})=1\leq \min\{1,3\}$, $\mathcal{C}_2$ is an Euclidean self-orthogonal NMDS code 
           over $\mathbb{F}_2$. Moreover, we notice that both the code length and the minimum distance increase $2$ here.

           \item [\rm (2)] Let 
           \begin{footnotesize}
              \begin{align*}
                  G= \left( 
                      \begin{array}{cccccc}
                          1 & 0 & 0 & 1 & 0 & 0\\
                          0 & 1 & 0 & 0 & 1 & 0 \\
                          0 & 0 & 1 & 0 & 0 & 1  \\
                      \end{array}
                      \right).
              \end{align*}
           \end{footnotesize}
           be a generator matrix of a $[6,3,2]_2$ Euclidean self-dual code $\mathcal{C}$. 
           According to Codetable \cite{RefJ-1}, we notice that $\C$ is not a distance-optimal code.  
           Since $p$ is even and $e=0$ here, according to Theorem \ref{th.alpha_1,alpha_2,...,alpha_{2i} exist}, we can take 
           $\alpha_1=\alpha_2=1\in \mathbb{F}_{2}^*$, then $G_2$ in Theorem \ref{th.n+2i} can be written as 
           \begin{footnotesize}
                  \begin{align*}
                      G_2 = \left( 
                          \begin{array}{cccccccc}
                           1 & 0 & 0 & 1 & 0 & 0 & 1 & 1\\
                           0 & 1 & 0 & 0 & 1 & 0 & 1 & 1\\
                           0 & 0 & 1 & 0 & 0 & 1 & 1 & 1\\
                          \end{array}
                          \right).
                  \end{align*}
           \end{footnotesize}
       
              $\quad$ Computed with the Magma software package \cite{BCP1997}, we know that the code $\mathcal{C}_2$ generated by $G_2$ is an 
              $[8,3,4]_{2}$ Euclidean self-orthogonal code and its Euclidean dual code $\mathcal{C}_2^{\bot_0}$ has parameters 
              $[8,5,2]_{2}$. On one hand, according to Codetable \cite{RefJ-1}, both $\C_2$ and $\C_2^{\bot_0}$ are distance-optimal. 
              On the other hand, since $S(\mathcal{C}_2)=S(\mathcal{C}_2^{\bot_0})=2\leq \min\{2,4\}$, $\mathcal{C}_2$ is an Euclidean self-orthogonal $2$-MDS code 
              over $\mathbb{F}_2$. Moreover, we notice that both the code length and the minimum distance increase $2$ here.
    \end{enumerate}
\end{example}

\begin{example}\label{example2}
    Next, we give four examples under the Hermitian inner product. 
    \begin{enumerate}
        \item [\rm (1)] Let $p=2$, $h=2$, i.e., $q=4$ and $\omega$ be a primitive element of $\F_4$. Let  
        \begin{footnotesize}
            \begin{align*}
                G= \left( 
                    \begin{array}{cccc}
                        1 & 0 & 1 & 0 \\
                        0 & 1 & 0 & 1 \\
                    \end{array}
                    \right)    
            \end{align*}
        \end{footnotesize}
            be a generator matrix of a $[4,2,2]_4$ Hermitian self-dual code $\mathcal{C}$. 
            Since $p$ is even and $e=1$ here, according to Theorem \ref{th.alpha_1,alpha_2,...,alpha_{2i} exist}, 
            we can take $\alpha_1=\alpha_2=\omega\in \mathbb{F}_4^*$, then $G_2$ in Theorem \ref{th.n+2i} can be written as 
        \begin{footnotesize}
                \begin{align*}
                    G_2= \left( 
                        \begin{array}{cccccc}
                            1 & 0 & 1 & 0 & \omega & \omega\\
                            0 & 1 & 0 & 1 & \omega & \omega\\
                        \end{array}
                        \right).    
                \end{align*}
        \end{footnotesize}
    
        $\quad$ Computed with the Magma software package \cite{BCP1997}, we know that the code $\mathcal{C}_2$ generated by $G_2$ is a $[6,2,4]_4$ 
           Hermitian self-orthogonal code and its Hermitian dual code $\mathcal{C}_2^{\bot_1}$ has parameters 
           $[6,4,2]_4$. On one hand, according to Codetable \cite{RefJ-1}, both $\C_2$ and $\C_2^{\bot_1}$ are distance-optimal. On the other hand, 
           since $S(\mathcal{C}_2)=S(\mathcal{C}_2^{\bot_1})=1\leq \min\{1,3\}$, $\mathcal{C}_2$ is a Hermitian self-orthogonal NMDS code 
           over $\mathbb{F}_4$. And we also notice that both the code length and the minimum distance increase $2$ here. 

           \item [\rm (2)] Let notations be the same as {\rm (1)} above. Let 
           \begin{footnotesize}
           \begin{align*}
               G= \left( 
                   \begin{array}{ccccccccccc}
                       1 & 0 & 0 & 0 & 0 & 1 & 0 & 1 & \omega & 1 & \omega \\
                       0 & 1 & 0 & 0 & 0 & 0 & 0 & 1 & 0 & \omega^{2} & \omega^{2} \\
                       0 & 0 & 1 & 0 & 0 & \omega & \omega & 0 & 1 & \omega^{2} & \omega^{2} \\
                       0 & 0 & 0 & 1 & 0 & \omega & \omega & \omega & \omega & 1 & 0 \\
                       0 & 0 & 0 & 0 & 1 & 0 & 0 & \omega & \omega & 0 & 1 \\
                   \end{array}
                   \right)
           \end{align*}
           \end{footnotesize}
           be a generator matrix of a $[11,5,4]_4$ Herimitian self-orthogonal code $\C$. 
           Since $p$ is even and $e=1$ here, according to Theorem \ref{th.alpha_1,alpha_2,...,alpha_{2i} exist}, we can take $\alpha_1=\alpha_2=\omega^2\in \mathbb{F}_{4}^*$, 
           then $G_2$ in Theorem \ref{th.n+2i} can be written as 
       \begin{footnotesize}
           \begin{align*}
               G_2 = \left( 
                \begin{array}{ccccccccccccc}
                    1 & 0 & 0 & 0 & 0 & 1 & 0 & 1 & \omega & 1 & \omega & \omega^2 & \omega^2\\
                    0 & 1 & 0 & 0 & 0 & 0 & 0 & 1 & 0 & \omega^{2} & \omega^{2} & \omega^2 & \omega^2\\
                    0 & 0 & 1 & 0 & 0 & \omega & \omega & 0 & 1 & \omega^{2} & \omega^{2} & \omega^2 & \omega^2\\
                    0 & 0 & 0 & 1 & 0 & \omega & \omega & \omega & \omega & 1 & 0 & \omega^2 & \omega^2\\
                    0 & 0 & 0 & 0 & 1 & 0 & 0 & \omega & \omega & 0 & 1 & \omega^2 & \omega^2 \\
                \end{array}
                \right).
           \end{align*}
       \end{footnotesize}
       
       $\quad$ Computed with the Magma software package \cite{BCP1997}, we know that the code $\mathcal{C}_2$ 
           generated by $G_2$ is a $[13,5,6]_{4}$ Hermitian self-orthogonal code. Note that $\C$ is not 
           distance-optimal or almost-distance-optimal, while the new code $\C_2$ is almost-distance-optimal \cite{RefJ-1}. 
           Moreover, as the code length increases by $2$, the minimum distance also increases by $2$ in this example. 

           \item [\rm (3)] Let $p=3$, $h=2$, i.e., $q=9$ and $\omega$ be a primitive element of $\mathbb{F}_{9}$. Let 
           \begin{footnotesize}
           \begin{align*}
               G= \left( 
                   \begin{array}{ccccccccccccc}
                       1 & 0 & \omega^{6} & \omega & 2 & \omega^{6} & \omega^{3} & \omega^{5} & \omega^{2} & \omega^{3} & 0 & \omega^{2} & 1 \\
                       0 & 1 & 1 & \omega^{2} & \omega^{3} & \omega^{3} & \omega^{7} & \omega^{3} & \omega^{3} & \omega^{5} & \omega^{3} & 0 & \omega^{6}\\
                   \end{array}
                   \right)
           \end{align*}
           \end{footnotesize}
           be a generator matrix of a $[13,2,11]_9$ Hermitian self-orthogonal code $\C$ whose Hermitian dual code has parameters $[13,11,2]_9$. 
           Since $p$ is odd and $e=1$ here, according to Theorem \ref{th.alpha_1,alpha_2,...,alpha_{2i} exist}, 
           we can take $\alpha_1=2$, $\alpha_2=\omega\in \mathbb{F}_{9}^*$, then $G_2$ in Theorem \ref{th.n+2i} can be written as 
           \begin{footnotesize}
            \begin{align*}
                G= \left( 
                    \begin{array}{ccccccccccccccc}
                        1 & 0 & \omega^{6} & \omega & 2 & \omega^{6} & \omega^{3} & \omega^{5} & \omega^{2} & \omega^{3} & 0 & \omega^{2} & 1 & 2 & \omega \\
                        0 & 1 & 1 & \omega^{2} & \omega^{3} & \omega^{3} & \omega^{7} & \omega^{3} & \omega^{3} & \omega^{5} & \omega^{3} & 0 & \omega^{6} & 2 & \omega \\
                    \end{array}
                    \right). 
            \end{align*}
            \end{footnotesize}
        
            $\quad$ Computed with the Magma software package \cite{BCP1997}, we know that the code $\mathcal{C}_2$ 
           generated by $G_2$ is a $[15,2,13]_{9}$ Hermitian self-orthogonal code and its Hermitian dual code 
           $\mathcal{C}_2^{\bot_1}$ has parameters $[15,13,2]_{9}$. On one hand, according to Codetable \cite{RefJ-1}, both $\C_2$ and $\C_2^{\bot_1}$ are 
           distance-optimal. On the other hand, since $S(\mathcal{C}_2)=S(\mathcal{C}_2^{\bot_2})=1\leq \min\{1,7\}$, 
            $\mathcal{C}_2$ is a Hermitian self-orthogonal NMDS code. 

           \item [\rm (4)] Let $p=3$, $h=4$, i.e., $q=3^4$ and $\omega$ be a primitive element of $\mathbb{F}_{3^4}$. 
           From \cite[Example 5]{RefJ B3}, we can get a $[17,3,15]_{3^4}$ Hermitian self-orthogonal code $\mathcal{C}$ 
           with a generator matrix 
           \begin{footnotesize}
           \begin{align*}
               G= \left( 
                   \begin{array}{ccccccccccccccccc}
                       1 & 0 & 0 & \omega^{53} & \omega^{11} & \omega^{31} & \omega^{72} & \omega^{32} & \omega^{20} & \omega^{50} & \omega^{55} & \omega^{35} & \omega^{57} & \omega^{27} & \omega^{76} & \omega^{66} & \omega^{38} \\
                       0 & 1 & 0 & \omega^{55} & \omega^{71} & \omega^{23} & \omega^{21} & \omega^{74} & \omega^{71} & \omega^{72} & \omega^{31} & \omega^{7}  & \omega^{21} & \omega^{39} & \omega      & \omega^{18} & \omega \\
                       0 & 0 & 1 & \omega^{16} & \omega^{48} & \omega^{46} & \omega^{19} & \omega^{16} & \omega^{17} & \omega^{56} & \omega^{32} & \omega^{46} & \omega^{64} & \omega^{26} & \omega^{43} & \omega^{26} & \omega^{25}\\
                   \end{array}
                   \right).
           \end{align*}
           \end{footnotesize}
           Since $p$ is odd and $e=2$ here, according to Theorem \ref{th.alpha_1,alpha_2,...,alpha_{2i} exist}, we can take $\alpha_1=\omega$, $\alpha_2=\omega^5\in \mathbb{F}_{3^4}^*$, then $G_2$ in Theorem \ref{th.n+2i} can be written as 
       \begin{footnotesize}
           \begin{align*}
               G_2 = \left( 
                   \begin{array}{ccccccccccccccccccc}
                       1 & 0 & 0 & \omega^{53} & \omega^{11} & \omega^{31} & \omega^{72} & \omega^{32} & \omega^{20} & \omega^{50} & \omega^{55} & \omega^{35} & \omega^{57} & \omega^{27} & \omega^{76} & \omega^{66} & \omega^{38} & \omega & \omega^5\\
                       0 & 1 & 0 & \omega^{55} & \omega^{71} & \omega^{23} & \omega^{21} & \omega^{74} & \omega^{71} & \omega^{72} & \omega^{31} & \omega^{7}  & \omega^{21} & \omega^{39} & \omega      & \omega^{18} & \omega & \omega & \omega^5\\
                       0 & 0 & 1 & \omega^{16} & \omega^{48} & \omega^{46} & \omega^{19} & \omega^{16} & \omega^{17} & \omega^{56} & \omega^{32} & \omega^{46} & \omega^{64} & \omega^{26} & \omega^{43} & \omega^{26} & \omega^{25} & \omega & \omega^5\\
                   \end{array}
                   \right).
           \end{align*}
       \end{footnotesize}
       
       $\quad$ Computed with the Magma software package \cite{BCP1997}, we know that the code $\mathcal{C}_2$ 
           generated by $G_2$ is a $[19,3,15]_{3^4}$ Hermitian self-orthogonal code and its Hermitian dual code 
           $\mathcal{C}_2^{\bot_2}$ has parameters $[19,16,2]_{3^4}$. 
           Since $S(\mathcal{C}_2)=S(\mathcal{C}_2^{\bot_2})=2\leq \min\{2,9\}$, 
           $\mathcal{C}_2$ is a Hermitian self-orthogonal $2$-MDS code. 
    \end{enumerate}

\end{example}

\begin{example}\label{example3}
    And then, we give some examples based on $1$-Galois self-orthogonal constacyclic codes as follows. 
        \begin{enumerate}
            \item [\rm (1)] Let $p=2$, $h=6$, i.e., $q=2^6$ and $\omega$ be a primitive element of $\mathbb{F}_{2^6}$. From  
            \cite[Example 6.1]{RefJ52}, taking $\lambda=w^{21}$, $n=18$ and the generator polynomial $g(x)=x^{12}+w^7x^6+w^{14}$, we can get 
            an $[18,6,3]_{2^6}$ $1$-Galois (resp. $5$-Galois according to Corollary \ref{coro. Galois self-orthogonal code judgment}) 
            self-orthogonal $w^{21}$-constacyclic code $\mathcal{C}$ with a generator matrix 
            \begin{footnotesize}
            \begin{align*}
                G= \left( 
                    \begin{array}{cccccccccccccccccc}
                        1 & 0 & 0 & 0 & 0 & 0 & \omega^{14} & 0 & 0 &0 & 0 & 0 & \omega^{28} & 0 & 0 & 0 &0 & 0 \\
                        0 & 1 & 0 & 0 & 0 & 0 & 0 & \omega^{14} & 0 &0 & 0 & 0 & 0 & \omega^{28} & 0 & 0 &0 & 0 \\
                        0 & 0 & 1 & 0 & 0 & 0 & 0 & 0 & \omega^{14} & 0 & 0 & 0 & 0 & 0 & \omega^{28} & 0 &0 & 0 \\
                        0 & 0 & 0 & 1 & 0 & 0 & 0 & 0 & 0 & \omega^{14} & 0 & 0 & 0 & 0 & 0 & \omega^{28} & 0 & 0 \\
                        0 & 0 & 0 & 0 & 1 & 0 & 0 & 0 & 0 &0 & \omega^{14} & 0 & 0 & 0 & 0 & 0 & \omega^{28} & 0 \\
                        0 & 0 & 0 & 0 & 0 & 1 & 0 & 0 & 0 &0 & 0 & \omega^{14} & 0 & 0 & 0 & 0 & 0 & \omega^{28} \\
                    \end{array}
                    \right).
            \end{align*}
        \end{footnotesize}
            Since $p$ is even and $e=1$ here, according to Theorem \ref{th.alpha_1,alpha_2,...,alpha_{2i} exist}, we can take 
            $\alpha_1=\alpha_2=\omega^{5}\in \mathbb{F}_{2^6}^*$, then $G_2$ in Theorem \ref{th.n+2i} can be written as 
            \begin{footnotesize}
                \begin{align*}
                    G_2 = \left( 
                        \begin{array}{cccccccccccccccccccc}
                            1 & 0 & 0 & 0 & 0 & 0 & \omega^{14} & 0 & 0 &0 & 0 & 0 & \omega^{28} & 0 & 0 & 0 &0 & 0 & \omega^5 & \omega^5 \\
                            0 & 1 & 0 & 0 & 0 & 0 & 0 & \omega^{14} & 0 &0 & 0 & 0 & 0 & \omega^{28} & 0 & 0 &0 & 0 & \omega^5 & \omega^5 \\
                            0 & 0 & 1 & 0 & 0 & 0 & 0 & 0 & \omega^{14} & 0 & 0 & 0 & 0 & 0 & \omega^{28} & 0 &0 & 0 & \omega^5 & \omega^5 \\
                            0 & 0 & 0 & 1 & 0 & 0 & 0 & 0 & 0 & \omega^{14} & 0 & 0 & 0 & 0 & 0 & \omega^{28} & 0 & 0 & \omega^5 & \omega^5 \\
                            0 & 0 & 0 & 0 & 1 & 0 & 0 & 0 & 0 &0 & \omega^{14} & 0 & 0 & 0 & 0 & 0 & \omega^{28} & 0 & \omega^5 & \omega^5 \\
                            0 & 0 & 0 & 0 & 0 & 1 & 0 & 0 & 0 &0 & 0 & \omega^{14} & 0 & 0 & 0 & 0 & 0 & \omega^{28} & \omega^5 & \omega^5 \\
                        \end{array}
                        \right).
                \end{align*}
            \end{footnotesize}
        
            $\quad$ Computed with the Magma software package \cite{BCP1997}, we know that the code $\mathcal{C}_2$ generated by $G_2$ is a 
            $[20,6,5]_{2^6}$ $1$-Galois (resp. $5$-Galois) self-orthogonal code, whose code length and minimum distance both increase $2$. 
            
            $\quad$ Additionally, taking $\alpha_1=w^{50}, \alpha_2=\omega^{36}, \alpha_3=\omega^{22}\in \mathbb{F}_{2^6}^*$, $G_3$ in 
            Theorem \ref{th.n+2i} can be written as $G_3=\left(I_6\ |\ B\right)$, where $B$ is defined to be 
            \begin{footnotesize}
                \begin{align*}
                    B  = \left( 
                        \begin{array}{ccccccccccccccc}
                             \omega^{14} & 0 & 0 &0 & 0 & 0 & \omega^{28} & 0 & 0 & 0 &0 & 0 & \omega^{50} & \omega^{36} & \omega^{22} \\
                             0 & \omega^{14} & 0 &0 & 0 & 0 & 0 & \omega^{28} & 0 & 0 &0 & 0 & \omega^{50} & \omega^{36} & \omega^{22} \\
                             0 & 0 & \omega^{14} & 0 & 0 & 0 & 0 & 0 & \omega^{28} & 0 &0 & 0 & \omega^{50} & \omega^{36} & \omega^{22} \\
                             0 & 0 & 0 & \omega^{14} & 0 & 0 & 0 & 0 & 0 & \omega^{28} & 0 & 0 & \omega^{50} & \omega^{36} & \omega^{22} \\
                             0 & 0 & 0 &0 & \omega^{14} & 0 & 0 & 0 & 0 & 0 & \omega^{28} & 0 & \omega^{50} & \omega^{36} & \omega^{22} \\
                             0 & 0 & 0 &0 & 0 & \omega^{14} & 0 & 0 & 0 & 0 & 0 & \omega^{28} & \omega^{50} & \omega^{36} & \omega^{22} \\
                        \end{array}
                        \right).
                \end{align*}
            \end{footnotesize}
        
            $\quad$ Computed with the Magma software package \cite{BCP1997}, we know that the code $\mathcal{C}_3$ generated by $G_3$ is a 
            $[21,6,6]_{2^6}$ $1$-Galois (resp. $5$-Galois) self-orthogonal code, whose code length and minimum distance both increase $3$.

            \item [\rm (2)] Let notations be the same as Example \ref{example.333}. 
            Since $p$ is odd and $e=1$ here, according to Theorem \ref{th.alpha_1,alpha_2,...,alpha_{2i} exist}, we can take 
            $\alpha_1=2, \alpha_2=\omega^{10}\in \mathbb{F}_{3^4}^*$, then $G_2$ in Theorem \ref{th.n+2i} can be written as 
            \begin{footnotesize}
            \begin{align*}
                G_2 = \left( 
                    \begin{array}{cccccccccccccc}
                        1 & 0 & 0 & 0 & 0 & 0 & \omega^{50} & 0 & 1 &0 & \omega^{70} & 0 & 2 & \omega^{10} \\
                        0 & 1 & 0 & 0 & 0 & 0 & 0 & \omega^{50} & 0 & 1 & 0 & \omega^{70} & 2 & \omega^{10} \\
                        0 & 0 & 1 & 0 & 0 & 0 & \omega^{60} & 0 & 0 & 0 & 2 & 0 & 2 & \omega^{10} \\
                        0 & 0 & 0 & 1 & 0 & 0 & 0 & \omega^{60} & 0 & 0 & 0 & 2 & 2 & \omega^{10} \\
                        0 & 0 & 0 & 0 & 1 & 0 & \omega^{30} & 0 & \omega^{20} & 0 & \omega^{50} & 0 & 2 & \omega^{10} \\
                        0 & 0 & 0 & 0 & 0 & 1 & 0 & \omega^{30} & 0 & \omega^{20} & 0 & \omega^{50} & 2 & \omega^{10} \\
                    \end{array}
                    \right).
            \end{align*}
        \end{footnotesize}
        
        $\quad$ Computed with the Magma software package \cite{BCP1997}, we know that the code $\mathcal{C}_2$ generated by $G_2$ is a 
            $[14,6,4]_{3^4}$ $1$-Galois (resp. $3$-Galois) self-orthogonal code, whose minimum distance increases $1$ and 
            its $1$-Galois dual code $\mathcal{C}_2^{\bot_1}$ is a $[14,8,2]_{3^4}$ code. Since  
            $S(\mathcal{C}_2)=S(\mathcal{C}_2^{\bot_1})=5\leq \min\{5,7\}$,  $\mathcal{C}_2$ is a 
            $1$-Galois (resp. $3$-Galois) self-orthogonal $5$-MDS code.
        \end{enumerate}
\end{example}

\begin{example}\label{example4}
    At the end of this subsection, we present two examples for Theorem \ref{th.arbitrary hull codess with length n+1}. 
    \begin{enumerate}
        \item [\rm (1)] Let $p=3$, $h=2$, i.e., $q=9$ and $\omega$ be a primitive element of $\mathbb{F}_{9}$. Let 
        \begin{footnotesize}
        \begin{align*}
            G= \left( 
                \begin{array}{cccccccccccccc}
                    1 & 0 & 0 & 0 & \omega^{3} & \omega^3 & \omega^3 & 1 & 2 & \omega^{5} & 0 & 2 & \omega^{3} & \omega^6 \\
                    0 & 1 & 0 & 0 & \omega^{2} & 0 & \omega^{7} & \omega^{7} & \omega^2 & \omega & 2 & \omega^{5} & \omega^6 & 1  \\
                    0 & 0 & 1 & 0 & 2 & \omega^{3} & \omega^{5} & 2 & 0 & \omega & \omega^{5} & \omega^2 & 2 & \omega  \\
                    0 & 0 & 0 & 1 & \omega^{5} & \omega & \omega & \omega^{7} & 1 & \omega^{5} & \omega^3 & \omega^3 & \omega^{7} & 1 
                \end{array}
                \right)
        \end{align*}
    \end{footnotesize}
    be a generator matrix of a $[14,4,10]_9$ Euclidean self-orthogonal code. 
    
    $\quad$ Taking $l=1$, $\lambda=1$, $\alpha=\omega^{6}$ and $\beta=\omega$, we have $3\alpha^2+\beta^2=\omega^{2}\neq 1$. 
        From Theorem \ref{th.arbitrary hull codess with length n+1}, we can set $G'={\rm diag}(\omega,\ \omega,\ \omega,\ 1)$ and 
        $\mathbf{a}=(\omega^{6}\ \omega^{6}\ \omega^{6}\ 0)^T$, then $G_1$ in Theorem \ref{th.arbitrary hull codess with length n+1} 
        can be written as 
        \begin{footnotesize}
            \begin{align*}
                G_1= \left( 
                    \begin{array}{ccccccccccccccc}
                        \omega & 0 & 0 & 0 & \omega^{3} & \omega^3 & \omega^3 & 1 & 2 & \omega^{5} & 0 & 2 & \omega^{3} & \omega^6  & \omega^6\\
                        0 & \omega & 0 & 0 & \omega^{2} & 0 & \omega^{7} & \omega^{7} & \omega^2 & \omega & 2 & \omega^{5} & \omega^6 & 1  & \omega^6\\
                        0 & 0 & \omega & 0 & 2 & \omega^{3} & \omega^{5} & 2 & 0 & \omega & \omega^{5} & \omega^2 & 2 & \omega  & \omega^6\\
                        0 & 0 & 0 & 1 & \omega^{5} & \omega & \omega & \omega^{7} & 1 & \omega^{5} & \omega^3 & \omega^3 & \omega^{7} & 1 & 0 
                    \end{array}
                    \right).
            \end{align*}
        \end{footnotesize}

        $\quad$ Computed with the Magma software package \cite{BCP1997}, we know that the code $\mathcal{C}_1$ generated by $G_1$ is 
        a $[15,4,10]_{9}$ code with $1$-dimensional Euclidean hull and its Euclidean dual code $\mathcal{C}_1^{\bot_0}$ 
        has parameters $[15,11,4]_{9}$. 
        Note that $\mathcal{C}_1$ is almost-distance-optimal and  $\mathcal{C}_1^{\bot_0}$ is distance-optimal \cite{RefJ-1}. 

        \item [\rm (2)] Let $p=13$, $h=2$, i.e., $q=13^2$ and $\omega$ be a primitive element of $\mathbb{F}_{13^2}$.
        We consider the $[11,4,7]_{13^2}$ Euclidean self-orthogonal code, which constructed by taking $r=0$ in 
        \cite[Example 1]{RefJ-2} and has a generator matrix  
        \begin{footnotesize}
        \begin{align*}
            G= \left( 
                \begin{array}{ccccccccccc}
                    1 & 0 & 0 & 0 & \omega^{133} & 0 & 9 & \omega^{35} & \omega^{119} & 2 & \omega^{105}  \\
                    0 & 1 & 0 & 0 & \omega^{133} & \omega^{21} & 3 & \omega^{63} & \omega^{119} & 5 & 0  \\
                    0 & 0 & 1 & 0 & \omega^{35} & \omega^{35} & 9 & 0 & \omega^{133} & 7 & \omega^{7}  \\
                    0 & 0 & 0 & 1 & \omega^{77} & \omega^{147} & 3 & \omega^{21} & \omega^{105} & 9 & \omega^{133}  
                \end{array}
                \right).
        \end{align*}
    \end{footnotesize}
    
    $\quad$ Taking $l=1$, $\lambda=1$, $\alpha=\omega^{19}$ and $\beta=4$, we have $3\alpha^2+\beta^2=\omega^{134}\neq 1$. 
        From Theorem \ref{th.arbitrary hull codess with length n+1}, we can set $G'={\rm diag}(4,\ 4,\ 4,\ 1)$ and 
        $\mathbf{a}=(\omega^{19}\ \omega^{19}\ \omega^{19}\ 0)^T$, then $G_1$ in Theorem \ref{th.arbitrary hull codess with length n+1} 
        can be written as 
        \begin{footnotesize}
        \begin{align*}
            G_1 = \left( 
                \begin{array}{cccccccccccc}
                    4 & 0 & 0 & 0 & \omega^{133} & 0 & 9 & \omega^{35} & \omega^{119} & 2 & \omega^{105} & \omega^{19}  \\
                    0 & 4 & 0 & 0 & \omega^{133} & \omega^{21} & 3 & \omega^{63} & \omega^{119} & 5 & 0 & \omega^{19} \\
                    0 & 0 & 4 & 0 & \omega^{35} & \omega^{35} & 9 & 0 & \omega^{133} & 7 & \omega^{7} & \omega^{19} \\
                    0 & 0 & 0 & 1 & \omega^{77} & \omega^{147} & 3 & \omega^{21} & \omega^{105} & 9 & \omega^{133} & 0
                \end{array}
                \right).
        \end{align*}
        \end{footnotesize}
       
        $\quad$ Computed with the Magma software package \cite{BCP1997}, we know that the code $\mathcal{C}_1$ generated by $G_1$ is 
        a $[12,4,7]_{13^2}$ code with $1$-dimensional Euclidean hull and its Euclidean dual code $\mathcal{C}_1^{\bot_0}$ 
        has parameters $[12,8,3]_{13^2}$. Since $S(\C_1)=S(\C_1^{\bot_0})=2\leq \{3,6\}$, $\C_1$ is an Euclidean self-orthogonal $2$-MDS code. 
    \end{enumerate}
\end{example}

\section{ Two new families of Hermitian self-orthogonal MDS codes}\label{sec5}

In this section, we focus on the Hermitian inner product, i.e., $q=p^h$, where $h$ is even, and construct two classes of Hermitian self-orthogonal 
MDS codes by employing the criterion displayed in Lemma \ref{lemma.GUO___Hermitian self-orthogonal GRS}.  
It should be emphasized that $p^{\frac{h}{2}}=\sqrt{q}$ is also a prime power in this case. 
For convenience, from now on, we use the notation $\sqrt[]{q}$ instead of $p^\frac{h}{2}$.

\subsection{The first class of Hermitian self-orthogonal MDS codes}\label{sec5.1}
For our first construction, we deal with a multiplicative coset decomposition of $\mathbb{F}_{q}$, which was ever studied in \cite{RefJ-5}. 
Suppose $n'\mid (q-1)$ and denote $n'=\frac{n'}{\gcd(n',\sqrt{q}+1)}\cdot \gcd(n',\sqrt{q}+1)$. 
For simplicity, let $n_1=\frac{n'}{\gcd(n',\sqrt{q}+1)}$ and $n_2=\frac{n'}{n_1}=\gcd(n',\sqrt{q}+1)$. Same as \cite{RefJ-5}, let $G$ and $H$ 
be the subgroups of $\mathbb{F}_{q}^*$ generated by $\omega^{\frac{q-1}{n'}}$ and $\omega^{\frac{\sqrt{q}+1}{n_2}}$, respectively, where $\omega$ 
is a primitive element of $\mathbb{F}_q$. Then $|G|=n'$, $|H|=(\sqrt{q}-1)n_2$ and $G$ is a subgroup of $H$. 
Hence, there exist $\beta_1,\beta_2,\cdots,\beta_{\frac{\sqrt{q}-1}{n_1}}\in H$ such that $\{\beta_bG\}_{b=1}^{\frac{\sqrt{q}-1}{n_1}}$ 
are all cosets of $H/G$. Denote $A_b=\beta_bG$, then it is easy to check that $A_i\cap A_j=\emptyset$ for any $1\leq i\neq j\leq \frac{\sqrt{q}-1}{n_1}$. 

Assume that $n=tn'$ with $1\leq t\leq \frac{\sqrt{q}-1}{n_1}$, and denote 
\begin{equation}\label{eq.ui_decomposution}
    \mathcal{A}=\bigcup_{b=1}^{t} A_b=\{a_1,a_2,\cdots,a_n\}.
\end{equation}
Then, we can get the following result, which is vital for our construction.

\begin{lemma}\label{lem.Con_Hermitian self-orthogonal GRS code_ui}
Let notations be the same as before. Given $1\leq i\leq n$, suppose $a_i\in A_b$ for some $1\leq b\leq t$, then
\begin{equation}\label{eq.Con_Hermitian self-orthogonal GRS code_ui}
    u_i=n'^{-1}a_i\beta_b^{-n'}\prod_{1\leq s\leq t,s\neq b}(\beta_b^{n'}-\beta_s^{n'})^{-1}.
\end{equation}
Moreover, $a_i^{n'-1}u_i\in \mathbb{F}_{\sqrt{q}}^*$.
\end{lemma}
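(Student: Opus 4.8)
The plan is to exploit the multiplicative structure of the coset decomposition to factor the ``locator polynomial'' $f(x) = \prod_{j=1}^n (x - a_j)$ and then read $u_i$ off from its derivative. First I would observe that $G = \langle \omega^{(q-1)/n'}\rangle$ is precisely the group of $n'$-th roots of unity in $\mathbb{F}_q$, so that each coset $A_b = \beta_b G$ is exactly the zero set of $x^{n'} - \beta_b^{n'}$. Since the $A_b$ are pairwise disjoint and $\mathcal{A} = \bigcup_{b=1}^t A_b$, this yields the factorization
$$ f(x) = \prod_{j=1}^n (x - a_j) = \prod_{b=1}^t \left( x^{n'} - \beta_b^{n'}\right). $$

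Next I would use the standard identity $u_i = \prod_{j \ne i}(a_i - a_j)^{-1} = f'(a_i)^{-1}$, valid since the $a_j$ are distinct. Differentiating the product and evaluating at $a_i \in A_b$, every summand except the one differentiating the $b$-th factor retains the vanishing factor $(a_i^{n'} - \beta_b^{n'})$, so exactly one term survives:
$$ f'(a_i) = n' a_i^{n'-1} \prod_{1 \le s \le t,\, s \ne b} \left( \beta_b^{n'} - \beta_s^{n'}\right), $$
where I have used $a_i^{n'} = \beta_b^{n'}$ to simplify each surviving factor $a_i^{n'} - \beta_s^{n'}$. Taking reciprocals and rewriting $a_i^{-(n'-1)} = a_i \cdot a_i^{-n'} = a_i \beta_b^{-n'}$ (again via $a_i^{n'} = \beta_b^{n'}$) gives exactly Equation (\ref{eq.Con_Hermitian self-orthogonal GRS code_ui}).

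For the ``moreover'' claim, I would combine the two displays to obtain $a_i^{n'-1}u_i = n'^{-1}\prod_{s \ne b}(\beta_b^{n'} - \beta_s^{n'})^{-1}$, and then show that each $\beta_b^{n'}$ lies in $\mathbb{F}_{\sqrt q}^*$. This is precisely where the choice $n' = n_1 n_2$ enters: since $\beta_b \in H = \langle \omega^{(\sqrt q + 1)/n_2}\rangle$, writing $\beta_b = \omega^{k_b(\sqrt q + 1)/n_2}$ and using $n'/n_2 = n_1$ gives $\beta_b^{n'} = \omega^{k_b n_1(\sqrt q + 1)} \in \langle \omega^{\sqrt q + 1}\rangle = \mathbb{F}_{\sqrt q}^*$, the last equality because $\mathbb{F}_{\sqrt q}^*$ is the unique subgroup of $\mathbb{F}_q^*$ of order $\sqrt q - 1 = (q-1)/(\sqrt q + 1)$. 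Since $n' \mid q-1$ forces $\gcd(n', p) = 1$, the scalar $n'^{-1}$ lies in $\mathbb{F}_p^* \subseteq \mathbb{F}_{\sqrt q}^*$, and each difference $\beta_b^{n'} - \beta_s^{n'}$ is a nonzero element of $\mathbb{F}_{\sqrt q}$ (nonzero exactly because the cosets are disjoint). Hence the whole product lies in $\mathbb{F}_{\sqrt q}^*$.

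The main obstacle I anticipate is organisational rather than conceptual: ensuring that the single surviving term in $f'(a_i)$ is isolated cleanly and that the exponent manipulation $a_i^{-(n'-1)} = a_i\beta_b^{-n'}$ is correctly justified by $a_i^{n'} = \beta_b^{n'}$. The only genuinely structural step is the membership $\beta_b^{n'} \in \mathbb{F}_{\sqrt q}^*$, which hinges entirely on the factorization $n' = n_1 n_2$ with $n_2 \mid \sqrt q + 1$; once that is in place the remaining verifications are elementary.
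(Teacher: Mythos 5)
Your proof is correct, and it is genuinely more self-contained than the paper's. The paper disposes of both Equation (\ref{eq.Con_Hermitian self-orthogonal GRS code_ui}) and the membership $\beta_s^{n'}\in\mathbb{F}_{\sqrt{q}}^*$ by citing Lemma 3.7 of Fang--Fu--Li--Zhu (reference \cite{RefJ-5}), and then only carries out the short computation $a_i^{n'-1}u_i = n'^{-1}\prod_{s\neq b}(\beta_b^{n'}-\beta_s^{n'})^{-1}$, writing $a_i=\beta_b\theta^\ell$ with $\theta=\omega^{(q-1)/n'}$ and using $\theta^{n'}=1$ --- which is the same simplification you perform via $a_i^{n'}=\beta_b^{n'}$. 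What you do differently is reconstruct the cited lemma from scratch: you identify $G$ with the group of $n'$-th roots of unity, factor the locator polynomial as $\prod_{j=1}^n(x-a_j)=\prod_{b=1}^t(x^{n'}-\beta_b^{n'})$, and extract $u_i=f'(a_i)^{-1}$ by isolating the single nonvanishing term of the product-rule expansion; you also prove $\beta_b^{n'}\in\mathbb{F}_{\sqrt{q}}^*$ directly from $\beta_b\in H=\langle\omega^{(\sqrt{q}+1)/n_2}\rangle$ and $n'/n_2=n_1$, together with $\langle\omega^{\sqrt{q}+1}\rangle=\mathbb{F}_{\sqrt{q}}^*$ by uniqueness of the subgroup of order $\sqrt{q}-1$. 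All the supporting details check out: the differences $\beta_b^{n'}-\beta_s^{n'}$ are nonzero since equality would force $\beta_b/\beta_s\in G$ and hence $A_b=A_s$, and $n'^{-1}$ makes sense in $\mathbb{F}_p$ since $n'\mid q-1$ gives $\gcd(n',p)=1$. Your route buys transparency and independence from the external reference (it makes visible why the formula holds: the cosets are exactly the root sets of the binomials $x^{n'}-\beta_b^{n'}$), at the cost of length; the paper's route is a two-line verification once the earlier work is granted.
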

\begin{proof}
Let $\theta=\omega^{\frac{q-1}{n'}}$ be the generator of $G$ and $a_i=\beta_b\theta^\ell$ for some $1\leq \ell\leq n'$. 
By \cite[Lemma 3.7]{RefJ-5}, instantly, we can see that Equation (\ref{eq.Con_Hermitian self-orthogonal GRS code_ui}) holds 
and $\beta_s^{n'} \in \mathbb{F}_{\sqrt{q}}^*$. Moreover, we have  
\[\begin{split}
    a_i^{n'-1}u_i = & a_i^{-1}\cdot a_i^{n'}\cdot n'^{-1}a_i\beta_b^{-n'}\prod_{1\leq s\leq t,s\neq b}(\beta_b^{n'}-\beta_s^{n'})^{-1} \\
    = & (\beta_b\theta^\ell)^{n'} \cdot n'^{-1}\beta_b^{-n'}\prod_{1\leq s\leq t,s\neq b}(\beta_b^{n'}-\beta_s^{n'})^{-1} \\
    = & n'^{-1}\prod_{1\leq s\leq t,s\neq b}(\beta_b^{n'}-\beta_s^{n'})^{-1}.
\end{split} \] 
Hence, $a_i^{n'-1}u_i\in \mathbb{F}_{\sqrt{q}}^*$, which completes the proof.
\end{proof}

\begin{theorem}\label{th.Con_Hermitian self-orthogonal GRS codes}
Let $n=tn'$, where $n'\mid (q-1)$ and $1\leq t\leq \frac{\sqrt{q}-1}{n_1}$ with $n_1=\frac{n'}{\gcd(n',\sqrt{q}+1)}$. 
Then there exists an $[n,k,n-k+1]_{q}$ Hermitian self-orthogonal MDS code, where $1\leq k\leq \lfloor \frac{\sqrt{q}+n'(t-1)}{\sqrt{q}+1} \rfloor$. 
\end{theorem}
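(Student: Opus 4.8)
The plan is to exhibit an explicit set of column multipliers $\mathbf{v}=(v_1,\dots,v_n)$ for which $\GRS_k(\mathbf{a},\mathbf{v})$, with $\mathbf{a}=(a_1,\dots,a_n)$ given by Equation (\ref{eq.ui_decomposution}), becomes Hermitian self-orthogonal, and then to invoke the fact that every GRS code is MDS to read off the parameters $[n,k,n-k+1]_q$. Since the $a_i$ are distinct nonzero elements of $\mathbb{F}_q$, the code is well defined, and it suffices to force every codeword associated to a polynomial $f$ with $\deg f\leq k-1$ into $\GRS_k(\mathbf{a},\mathbf{v})^{\bot_{\frac{h}{2}}}$. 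For this I would apply the criterion of Lemma \ref{lemma.GUO___Hermitian self-orthogonal GRS} with a single auxiliary polynomial $h(x)$ chosen once and for all, independently of $f$.

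First I would use Lemma \ref{lem.Con_Hermitian self-orthogonal GRS code_ui}: for each $1\leq i\leq n$ one has $a_i^{n'-1}u_i\in\mathbb{F}_{\sqrt{q}}^*$. Because the norm map $N:\mathbb{F}_q^*\to\mathbb{F}_{\sqrt{q}}^*$, $v\mapsto v^{\sqrt{q}+1}$, is surjective, I may select $v_i\in\mathbb{F}_q^*$ with $v_i^{\sqrt{q}+1}=a_i^{n'-1}u_i$ for every $i$. Taking $\lambda=1$ and the monic polynomial $h(x)=x^{n'-1}$, the interpolation identity $\lambda u_i h(a_i)=u_i a_i^{n'-1}=v_i^{\sqrt{q}+1}$ then holds simultaneously for all $i$. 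The key point is that this identity is independent of $f$, so the same $h$ serves every codeword at once.

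With $h$ fixed, Lemma \ref{lemma.GUO___Hermitian self-orthogonal GRS} reduces Hermitian self-orthogonality to a single degree inequality. For $f$ with $\deg f\leq k-1$ the codeword lies in the dual precisely when $\deg\bigl(h(x)f^{\sqrt{q}}(x)\bigr)\leq n-k-1$; here $\deg\bigl(h f^{\sqrt{q}}\bigr)=(n'-1)+\sqrt{q}\deg f\leq (n'-1)+\sqrt{q}(k-1)$, the bound being tightest at $\deg f=k-1$. A short rearrangement shows that $(n'-1)+\sqrt{q}(k-1)\leq n-k-1$ is equivalent to $k(\sqrt{q}+1)\leq n'(t-1)+\sqrt{q}$ (using $n=tn'$), i.e.\ to $k\leq\lfloor\frac{\sqrt{q}+n'(t-1)}{\sqrt{q}+1}\rfloor$. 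Thus, under the stated bound on $k$, every codeword lies in $\GRS_k(\mathbf{a},\mathbf{v})^{\bot_{\frac{h}{2}}}$; since the Hermitian dual is $\mathbb{F}_q$-linear this yields $\GRS_k(\mathbf{a},\mathbf{v})\subseteq\GRS_k(\mathbf{a},\mathbf{v})^{\bot_{\frac{h}{2}}}$, so the code is Hermitian self-orthogonal, and being GRS it is MDS.

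The routine ingredients are the surjectivity of the norm map and the final arithmetic on degrees; the one point demanding care is the uniform choice of $h$. The main obstacle is recognizing that the $f$-dependence in Lemma \ref{lemma.GUO___Hermitian self-orthogonal GRS} is confined to the degree constraint while the pointwise identity $\lambda u_i h(a_i)=v_i^{\sqrt{q}+1}$ is $f$-free, which is exactly what lets the single polynomial $h(x)=x^{n'-1}$ certify self-orthogonality for the whole code. Verifying that this $h$ is admissible hinges on Lemma \ref{lem.Con_Hermitian self-orthogonal GRS code_ui}, whose conclusion $a_i^{n'-1}u_i\in\mathbb{F}_{\sqrt{q}}^*$ is precisely calibrated so that $v_i^{\sqrt{q}+1}/u_i=a_i^{n'-1}$ is a monomial in $a_i$.
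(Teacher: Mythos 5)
Your proposal is correct and follows essentially the same route as the paper's proof: both take $\lambda=1$, $h(x)=x^{n'-1}$, use Lemma \ref{lem.Con_Hermitian self-orthogonal GRS code_ui} together with surjectivity of the norm map to choose the column multipliers $v_i$ with $v_i^{\sqrt{q}+1}=a_i^{n'-1}u_i$, and then verify the degree bound $\deg(h(x)f^{\sqrt{q}}(x))\leq n'-1+\sqrt{q}(k-1)\leq n-k-1$ via Lemma \ref{lemma.GUO___Hermitian self-orthogonal GRS}. Your rearrangement of the degree inequality into $k(\sqrt{q}+1)\leq n'(t-1)+\sqrt{q}$ matches the stated range of $k$, so nothing is missing.
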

\begin{proof}
    Let notations be the same as before. Set $\lambda=1\in \mathbb{F}_{q}^*$ and $h(x)=x^{n'-1}\in \mathbb{F}_q[x]$. From Lemma \ref{lem.Con_Hermitian self-orthogonal GRS code_ui}, 
    we have $\lambda u_ih(a_i)=a_i^{n'-1}u_i\in \mathbb{F}_{\sqrt{q}}^*$. Hence, there is $v_i\in \mathbb{F}_{q}^*$ such that $\lambda u_ih(a_i)=v_i^{\sqrt{q}+1}$ for any 
    $1\leq i\leq n$. Set $\mathbf{v}=(v_1,v_2,\dots,v_n)$.
    
    For $1\leq k\leq \lfloor \frac{\sqrt{q}+n'(t-1)}{\sqrt{q}+1} \rfloor$, consider any codeword $\mathbf{c}=(v_1f(a_1),v_2f(a_2),\dots,v_nf(a_n))\in \GRS_k(\mathbf{a},\mathbf{v})$ 
    with $\deg(f(x))\leq k-1$. Note that 
\[\begin{split}
    \deg(h(x)f^{\sqrt{q}}(x))\leq n'-1+\sqrt{q}(k-1)\leq n-k-1,
\end{split} \]
then by Lemma \ref{lemma.GUO___Hermitian self-orthogonal GRS}, $\mathbf{c}\in \GRS_k(\mathbf{a},\mathbf{v})^{\bot_H}$. 
It follows that $\GRS_k(\mathbf{a},\mathbf{v})\subseteq \GRS_k(\mathbf{a},\mathbf{v})^{\bot_H}$, i.e., $\GRS_k(\mathbf{a},\mathbf{v})$ is 
an $[n,k,n-k+1]_{q}$ Hermitian self-orthogonal MDS code. 
\end{proof}

\begin{remark}\label{rem11.th38.Hermitian self-orthogonal} $\quad$ 
    \begin{enumerate}
        \item [\rm (1)] For the same code length $n=tn'$, the authors proved that for each 
        $1\leq k\leq \lfloor \frac{\sqrt{q}+n}{\sqrt{q}+1} \rfloor$, there exists an $[n,k,n-k+1]_q$ MDS code $\mathcal{C}$ with 
        $\dim(\Hull_{\frac{h}{2}}(\mathcal{C}))=l$ in \cite[Theorem 3.8]{RefJ-5}, where $0\leq l\leq k-1$. 
        Note that the Herimitian self-orthogonality means $\dim(\Hull_{\frac{h}{2}}(\C))=k$. 
        Therefore, Theorem \ref{th.Con_Hermitian self-orthogonal GRS codes} implies that there also 
        exists an $[n,k,n-k+1]_q$ MDS code with $l=k$ for $1\leq k\leq \lfloor \frac{\sqrt{q}+n'(t-1)}{\sqrt{q}+1} \rfloor$.

        \item [\rm (2)] In Table \ref{tab:3}, we list some Hermitian self-orthogonal MDS codes over $\mathbb{F}_{3^4}$ from 
        Theorem \ref{th.Con_Hermitian self-orthogonal GRS codes}. 
    \end{enumerate}
\end{remark}

\begin{table}[H]
    \centering
    \caption{Some Hermitian self-orthogonal MDS codes from Theorem \ref{th.Con_Hermitian self-orthogonal GRS codes} for $q=3^4$}
    \label{tab:3}       
        \begin{tabular}{cccc||cccc}
            \hline
           $n'$ & $t$ & $[n,k,d]_{q}$ & $k$ & $n'$ & $t$ & $[n,k,d]_{q}$ & $k$\\
            \hline
            2 & 7 & $[14,k,15-k]_{3^4}$ & $1\leq k\leq 2$ & 2 & 8 & $[16,k,17-k]_{3^4}$ & $1\leq k\leq 2$ \\
            5 & 4 & $[20,k,21-k]_{3^4}$ & $1\leq k\leq 2$ & 5 & 5 & $[25,k,26-k]_{3^4}$ & $1\leq k\leq 2$ \\
            5 & 6 & $[30,k,31-k]_{3^4}$ & $1\leq k\leq 3$ & 5 & 7 & $[35,k,36-k]_{3^4}$ & $1\leq k\leq 3$ \\
            5 & 8 & $[40,k,41-k]_{3^4}$ & $1\leq k\leq 4$ & 10 & 5 & $[50,k,51-k]_{3^4}$ & $1\leq k\leq 4$ \\
            10 & 6 & $[60,k,61-k]_{3^4}$ & $1\leq k\leq 5$ & 10 & 7 & $[70,k,71-k]_{3^4}$ & $1\leq k\leq 6$ \\
            10 & 8 & $[80,k,81-k]_{3^4}$ & $1\leq k\leq 7$ &  &  & &  \\
            \hline
        \end{tabular}
    \end{table}

    Recently, Guo et al. \cite{RefJ-4} constructed two classes of Hermitian self-orthogonal MDS codes listed in Table \ref{tab:4} with similar parameters 
    to ours. We do some detailed comparisons in the following remark and illustrate that our codes are new and better. 

    \begin{table}[H]
        \caption{Two classes of Hermitian self-orthogonal MDS codes with similar parameters to ours}
        \label{tab:4}       
        \centering
            \begin{tabular}{cllc}
                \hline
               Class & Code Length & Dimension & Ref.\\
                \hline
                1 & $n=m(\sqrt{q}-1)$, $1\leq m\leq \sqrt{q}$ & $1\leq k\leq \lfloor \frac{m\sqrt{q}-1}{\sqrt{q}+1} \rfloor$  & \cite{RefJ-4} \\
                2 & $n=s(\sqrt{q}+1)$, $1\leq s\leq \sqrt{q}-1$ & $1\leq k\leq s-1$  & \cite{RefJ-4} \\
                \hline
            \end{tabular}
    \end{table}

\begin{remark}\label{rem12.th38.comparisons.Hermitian self-orthogonal} $\quad$ 
    \begin{enumerate}
        \item [\rm (1)] For Class $1$ in Table \ref{tab:4}, taking $n'=\sqrt{q}-1$ in Theorem \ref{th.Con_Hermitian self-orthogonal GRS codes}, 
        we have $n_1=\frac{\sqrt{q}-1}{\gcd(\sqrt{q}-1,\sqrt{q}+1)}=\sqrt{q}-1$ if $p$ is even and $\frac{\sqrt{q}-1}{2}$ if $p$ is 
        odd from Lemma \ref{lemma.gcd}. In these two cases, $t=1$ and $1\leq t\leq 2$, respectively. Then the same form of length $n=t(\sqrt{q}-1)$ 
        can be obtained in Theorem \ref{th.Con_Hermitian self-orthogonal GRS codes}. Due to the other possible values of $n'$, it is clear that Theorem 
        \ref{th.Con_Hermitian self-orthogonal GRS codes} can produce many codes with other forms of lengths for different $q$. Some explicit examples are 
        given in the consecutive discussions.

        \item [\rm (2)] Looking closely at Table \ref{tab:3}, we can deduce that the construction in Theorem \ref{th.Con_Hermitian self-orthogonal GRS codes} can 
        contain all the results from Class $2$ in Table \ref{tab:4}, and can also generate additional codes such as $[14,k_1,15-k_1]_{81}$, $[16,k_2,17-k_2]_{81}$, 
        $[25,k_3,26-k_3]_{81}$ and $[35,k_4,36-k_4]_{81}$. 

        \item [\rm (3)] The reason for the phenomenon in {\rm (2)} above is uncomplicated. We only need to take $n'=\sqrt{q}+1$ in 
        Theorem \ref{th.Con_Hermitian self-orthogonal GRS codes}, then $n_1=\frac{\sqrt{q}+1}{\gcd(\sqrt{q}+1,\sqrt{q}+1)}=1$ 
        and $1\leq t\leq \sqrt{q}-1$. Therefore, we can get $[t(\sqrt{q}+1),k,t(\sqrt{q}+1)-k+1]_{q}$ Hermitian self-orthogonal MDS codes 
        from Theorem \ref{th.Con_Hermitian self-orthogonal GRS codes}, where $1\leq t\leq \sqrt{q}-1$ and $1\leq k\leq t-1$.  
        Obviously, these codes are all the results given by Class $2$ in Table \ref{tab:4}.

        \item [\rm (4)] Also, on the basis of the discussion of $(2)$ above, considering some same lengths, the range of the dimensions of our codes 
        are more flexible in some cases. For example, Table \ref{tab:3} gives $[30,k,31-k]_{81}$ Hermitian self-orthogonal codes, where $1\leq k\leq 3$. 
        However, if we construct $[30,k,31-k]_{81}$ codes from Class $2$ of Table \ref{tab:4}, we have $s=3$, and thus, $1\leq k\leq 3-1=2$ is required. 

        \item [\rm (5)] In terms of more length forms and more flexible dimensions, we can confirm that our codes are better. 
        And in conjunction with the comparisons in \cite{RefJ-4}, these codes are also new. 
    \end{enumerate}
\end{remark}

\subsection{The second class of Hermitian self-orthogonal MDS codes}\label{sec5.2}

In this subsection, we only consider the case odd $q$. Then, the second construction comes from the direct product of two cyclic subgroups of $\mathbb{F}_{q}$, 
which was recently studied in \cite{RefJ-12}. Let $\omega$ be a primitive element of $\mathbb{F}_{q}$. 
Set $\beta_1=\omega^{x_1}$ and $\beta_2=\omega^{x_2}$, where $x_1,\ x_2$ are two integers. According to \cite{RefJ-12}, we know that the direct product 
$\langle \beta_1 \rangle \bigotimes \langle \beta_2\rangle $ is a subgroup of $\mathbb{F}_q^*$ with order ${\rm ord}(\beta_1)\cdot {\rm ord}(\beta_2)$. 
Furthermore, we have ${\rm ord}(\beta_1)=\frac{q-1}{\gcd(q-1,x_1)}$ and ${\rm ord}(\beta_1)=\frac{q-1}{\gcd(q-1,x_2)}$ here.

Assume that $n=n_1n_2$ with $1\leq n_1\leq {\rm ord}(\beta_1)$ and $n_2={\rm ord}(\beta_2)$, and denote 
\begin{align}
    \mathcal{N}=\bigcup_{i=1}^{n_1} N_i=\{a_1,a_2,\cdots,a_n\},
\end{align}
where $N_i=\{\beta_1^i \beta_2^j:\ 1\leq j\leq n_2\}$ for $1\leq i\leq n_1$. 
Then, it is easy to check that $N_i\cap N_j=\emptyset$ for any $1\leq i\neq j\leq n_1$ and 
we can derive the following result, which is vital for our construction.

\begin{lemma}\label{lem.Con_Hermitian self-orthogonal GRS code222_ui}
    Let notations be the same as before. Given $1\leq i\leq n$, suppose $a_i\in N_s$ for some $1\leq s\leq n_1$. 
    If $(q-1)\mid {\rm lcm}(x_1,x_2)$ and $\gcd(x_2,q-1)\mid x_1(\sqrt{q}-1)$, then
    \begin{equation}\label{eq.Con_Hermitian self-orthogonal GRS code222_ui}
        u_i=a_i\beta_1^{-sn_2}n_2^{-1}\prod_{1\leq s'\leq n_1,s'\neq s}(\beta_1^{sn_2}-\beta_1^{s'n_2})^{-1}.
    \end{equation}
    Moreover, $a_i^{n_2-1}u_i\in \mathbb{F}_{\sqrt{q}}^*$.
\end{lemma}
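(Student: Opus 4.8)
The plan is to compute $u_i=\prod_{1\le j\le n,\,j\neq i}(a_i-a_j)^{-1}$ directly from the multiplicative structure of $\mathcal{N}$, and then read off membership in $\mathbb{F}_{\sqrt{q}}^*$ from the hypotheses. First I would write $a_i=\beta_1^s\beta_2^\ell$ for some $1\le\ell\le n_2$, which is possible since $a_i\in N_s$. Because $\beta_2$ has order $n_2$, the set $\{\beta_2^m:1\le m\le n_2\}$ is exactly the group of $n_2$-th roots of unity, so for each $1\le s'\le n_1$ one has the factorization $\prod_{m=1}^{n_2}(x-\beta_1^{s'}\beta_2^m)=x^{n_2}-\beta_1^{s'n_2}$. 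This identity is the engine of the whole argument, and the disjointness of the cosets $N_{s'}$ established in the preceding setup (following \cite{RefJ-12}) guarantees that the $a_j$ are distinct so that $u_i$ is well defined.

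Next I would split the product defining $u_i^{-1}$ into the factors coming from the coset $N_s$ containing $a_i$ and those coming from the remaining cosets $N_{s'}$ with $s'\neq s$. For the off-coset part I substitute $x=a_i$ into the factorization and use $a_i^{n_2}=(\beta_1^s\beta_2^\ell)^{n_2}=\beta_1^{sn_2}$ (since $\beta_2^{n_2}=1$), obtaining $\prod_{m=1}^{n_2}(a_i-\beta_1^{s'}\beta_2^m)=\beta_1^{sn_2}-\beta_1^{s'n_2}$. For the on-coset part I factor out $\beta_1^s$ and evaluate the derivative of $x^{n_2}-1$ at $\beta_2^\ell$, namely $\prod_{m\neq\ell}(\beta_2^\ell-\beta_2^m)=n_2\beta_2^{\ell(n_2-1)}=n_2\beta_2^{-\ell}$, giving $\beta_1^{s(n_2-1)}n_2\beta_2^{-\ell}$. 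Multiplying the two contributions, inverting, and rewriting $\beta_1^{-s(n_2-1)}\beta_2^\ell=\beta_1^s\beta_2^\ell\beta_1^{-sn_2}=a_i\beta_1^{-sn_2}$ yields precisely Equation (\ref{eq.Con_Hermitian self-orthogonal GRS code222_ui}).

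For the ``Moreover'' part I first simplify $a_i^{n_2-1}u_i=a_i^{n_2}\beta_1^{-sn_2}n_2^{-1}\prod_{1\le s'\le n_1,\,s'\neq s}(\beta_1^{sn_2}-\beta_1^{s'n_2})^{-1}$, and the factor $a_i^{n_2}\beta_1^{-sn_2}=\beta_1^{sn_2}\beta_1^{-sn_2}=1$ collapses this to $n_2^{-1}\prod_{s'\neq s}(\beta_1^{sn_2}-\beta_1^{s'n_2})^{-1}$. It then suffices to show $\beta_1^{n_2}\in\mathbb{F}_{\sqrt{q}}^*$. Writing $\beta_1^{n_2}=\omega^{x_1n_2}$ with $n_2=(q-1)/\gcd(x_2,q-1)$, and using the standard fact that $\omega^t\in\mathbb{F}_{\sqrt{q}}^*$ iff $(q-1)\mid t(\sqrt{q}-1)$, i.e. iff $(\sqrt{q}+1)\mid t$, the hypothesis $\gcd(x_2,q-1)\mid x_1(\sqrt{q}-1)$ is exactly what forces $(\sqrt{q}+1)\mid x_1n_2$. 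Hence each $\beta_1^{s'n_2}=(\beta_1^{n_2})^{s'}\in\mathbb{F}_{\sqrt{q}}^*$, every difference $\beta_1^{sn_2}-\beta_1^{s'n_2}$ lies in $\mathbb{F}_{\sqrt{q}}$ and is nonzero by disjointness of the cosets, and $n_2^{-1}\in\mathbb{F}_{\sqrt{q}}^*$ since $\gcd(p,n_2)=1$; therefore $a_i^{n_2-1}u_i\in\mathbb{F}_{\sqrt{q}}^*$.

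The factorization and derivative computations are routine; the main obstacle is this last step, where the purely multiplicative condition $\gcd(x_2,q-1)\mid x_1(\sqrt{q}-1)$ must be converted into the cyclotomic divisibility $(\sqrt{q}+1)\mid x_1n_2$ that places $\beta_1^{n_2}$ inside the subfield $\mathbb{F}_{\sqrt{q}}$. I expect the remaining hypothesis $(q-1)\mid{\rm lcm}(x_1,x_2)$ to enter only through the established direct-product and disjointness structure of $\mathcal{N}$ from \cite{RefJ-12}, which I invoke to ensure the $a_j$ are distinct and the cosets $N_{s'}$ do not coincide.
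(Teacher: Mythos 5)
Your proof is correct and follows essentially the same path as the paper: the paper writes $a_i=\beta_1^s\beta_2^t$, quotes \cite[Lemma 3.1]{RefJ-12} for both Equation (\ref{eq.Con_Hermitian self-orthogonal GRS code222_ui}) and the fact $\beta_1^{n_2}\in\mathbb{F}_{\sqrt{q}}^*$, and then performs exactly the simplification $a_i^{n_2-1}u_i=n_2^{-1}\prod_{1\leq s'\leq n_1,\,s'\neq s}(\beta_1^{sn_2}-\beta_1^{s'n_2})^{-1}$ that you give. The only difference is that you re-derive the two cited facts from scratch (the root-of-unity factorization and derivative computation for $u_i$, and the conversion of $\gcd(x_2,q-1)\mid x_1(\sqrt{q}-1)$ into $(\sqrt{q}+1)\mid x_1n_2$), and both derivations are correct.
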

\begin{proof}
    Since $a_i\in N_s=\{\beta_1^s \beta_2^j:\ 1\leq j\leq n_2\}$ for some $1\leq s\leq n_1$, 
    we can set $a_i=\beta_1^s\beta_2^t$ for some $1\leq t\leq n_2$. 
    By \cite[Lemma 3.1]{RefJ-12}, instantly, we can see that 
    Equation (\ref{eq.Con_Hermitian self-orthogonal GRS code222_ui}) holds and $\beta_1^{n_2} \in \mathbb{F}_{\sqrt{q}}^*$. 
    Moreover, we have 
    \[\begin{split}
        a_i^{n_2-1}u_i = & a_i^{-1}\cdot a_i^{n_2}\cdot a_i\beta_1^{-sn_2}n_2^{-1}\prod_{1\leq s'\leq n_1,s'\neq s}(\beta_1^{sn_2}-\beta_1^{s'n_2})^{-1} \\
                       = & \beta_1^{sn_2}\beta_2^{tn_2}\cdot \beta_1^{-sn_2}n_2^{-1}\prod_{1\leq s'\leq n_1,s'\neq s}(\beta_1^{sn_2}-\beta_1^{s'n_2})^{-1} \\
                       = & n_2^{-1}\prod_{1\leq s'\leq n_1,s'\neq s}(\beta_1^{sn_2}-\beta_1^{s'n_2})^{-1}.
    \end{split} \] 
    The last equation holds for the facts ${\rm ord}(\beta_2)=n_2$ and $\beta_2^{tn_2}=1$. 
    Hence, $a_i^{n_2-1}u_i\in \mathbb{F}_{\sqrt{q}}^*$, which completes the proof.
\end{proof}

\begin{theorem}\label{th.Con_Hermitian self-orthogonal GRS codes222}
    Let $q=p^h$ be odd. Assume that $(q-1)\mid {\rm lcm}(x_1,x_2)$ and $\gcd(x_2,q-1)\mid x_1(\sqrt{q}-1)$, where $x_1$ and $x_2$ are two integers. 
    Let $n=n_1n_2$ with $1\leq n_1\leq \frac{q-1}{\gcd(q-1,x_1)}$ and $n_2=\frac{q-1}{\gcd(q-1,x_2)}$. 
    Then there exists an $[n,k,n-k+1]_{q}$ Hermitian self-orthogonal MDS code, 
    where $1\leq k\leq \lfloor \frac{\sqrt{q}+n_2(n_1-1)}{\sqrt{q}+1} \rfloor$. 
    \end{theorem}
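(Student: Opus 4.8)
The plan is to follow exactly the template established for the first class in Theorem~\ref{th.Con_Hermitian self-orthogonal GRS codes}, now feeding in the multiplicative structure of $\mathcal{N}=\bigcup_{i=1}^{n_1}N_i$ together with the explicit formula for $u_i$ supplied by Lemma~\ref{lem.Con_Hermitian self-orthogonal GRS code222_ui}. The starting point is to produce a vector of column multipliers $\mathbf{v}$ that turns the prescribed evaluation set into a Hermitian self-orthogonal GRS code. First I would set $\lambda=1\in\mathbb{F}_q^*$ and choose the auxiliary polynomial $h(x)=x^{n_2-1}\in\mathbb{F}_q[x]$. By Lemma~\ref{lem.Con_Hermitian self-orthogonal GRS code222_ui}, $\lambda u_i h(a_i)=a_i^{n_2-1}u_i\in\mathbb{F}_{\sqrt{q}}^*$ for every $1\leq i\leq n$; since the norm map $\mathbb{F}_q^*\to\mathbb{F}_{\sqrt{q}}^*$, $x\mapsto x^{\sqrt{q}+1}$, is surjective, there exists $v_i\in\mathbb{F}_q^*$ with $\lambda u_i h(a_i)=v_i^{\sqrt{q}+1}$. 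Setting $\mathbf{v}=(v_1,\dots,v_n)$ then fixes the code $\GRS_k(\mathbf{a},\mathbf{v})$.

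The core of the argument is a degree count. For $1\leq k\leq\lfloor\frac{\sqrt{q}+n_2(n_1-1)}{\sqrt{q}+1}\rfloor$ and any $f(x)\in\mathbb{F}_q[x]$ with $\deg(f)\leq k-1$, I would bound
\[
\deg\!\big(h(x)f^{\sqrt{q}}(x)\big)\leq (n_2-1)+\sqrt{q}(k-1).
\]
The admissible range of $k$ is precisely engineered so that this quantity is at most $n-k-1$: rearranging $k(\sqrt{q}+1)\leq\sqrt{q}+n_2(n_1-1)$ and using $n=n_1n_2$ yields $(n_2-1)+\sqrt{q}(k-1)\leq n-k-1$. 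With this inequality in hand, the criterion of Lemma~\ref{lemma.GUO___Hermitian self-orthogonal GRS} applies verbatim and shows that every codeword $\mathbf{c}=(v_1f(a_1),\dots,v_nf(a_n))$ lies in $\GRS_k(\mathbf{a},\mathbf{v})^{\bot_{\frac{h}{2}}}$.

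Consequently $\GRS_k(\mathbf{a},\mathbf{v})\subseteq\GRS_k(\mathbf{a},\mathbf{v})^{\bot_{\frac{h}{2}}}$, so the code is Hermitian self-orthogonal, and being a $k$-dimensional GRS code of length $n$ it is automatically MDS with parameters $[n,k,n-k+1]_q$. I expect no serious obstacle in the argument itself, since Lemma~\ref{lem.Con_Hermitian self-orthogonal GRS code222_ui} already absorbs the two hypotheses $(q-1)\mid\mathrm{lcm}(x_1,x_2)$ and $\gcd(x_2,q-1)\mid x_1(\sqrt{q}-1)$ --- these are exactly what guarantee both that $a_i^{n_2-1}u_i$ lands in $\mathbb{F}_{\sqrt{q}}^*$ and that the $N_i$ partition $\mathcal{N}$ into disjoint cosets. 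The one point deserving care is confirming that the degree inequality is tight, i.e.\ that the displayed floor is the largest admissible value of $k$; beyond that, the proof is a routine transcription of the first-class argument with $n'$ replaced by $n_2$ and $t$ by $n_1$.
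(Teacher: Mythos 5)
Your proposal matches the paper's own proof essentially line for line: the same choices $\lambda=1$ and $h(x)=x^{n_2-1}$, the same appeal to Lemma~\ref{lem.Con_Hermitian self-orthogonal GRS code222_ui} and surjectivity of the norm map to obtain the column multipliers, the same degree bound $(n_2-1)+\sqrt{q}(k-1)\leq n-k-1$ derived from the stated range of $k$, and the same conclusion via Lemma~\ref{lemma.GUO___Hermitian self-orthogonal GRS}. The argument is correct as written, including your verification that the floor bound is exactly what the degree inequality requires.
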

    \begin{proof}
        Let notations be the same as before. Set $\lambda=1\in \mathbb{F}_{q}^*$ and $h(x)=x^{n_2-1}\in \mathbb{F}_q[x]$. 
        From Lemma \ref{lem.Con_Hermitian self-orthogonal GRS code222_ui}, 
        we have $\lambda u_ih(a_i)=a_i^{n_2-1}u_i\in \mathbb{F}_{\sqrt{q}}^*$. 
        Hence, there is $v_i\in \mathbb{F}_{q}^*$ such that $\lambda u_ih(a_i)=v_i^{\sqrt{q}+1}$ for any 
        $1\leq i\leq n$. Set $\mathbf{v}=(v_1,v_2,\dots,v_n)$.
        
        For $1\leq k\leq \lfloor \frac{\sqrt{q}+n_2(n_1-1)}{\sqrt{q}+1} \rfloor$, 
        consider any codeword $\mathbf{c}=(v_1f(a_1),v_2f(a_2),\dots,v_nf(a_n))\in \GRS_k(\mathbf{a},\mathbf{v})$ 
        with $\deg(f(x))\leq k-1$. Note that 
    \[\begin{split}
        \deg(h(x)f^{\sqrt{q}}(x))\leq n_2-1+\sqrt{q}(k-1)\leq n-k-1,
    \end{split} \]
    then by Lemma \ref{lemma.GUO___Hermitian self-orthogonal GRS}, $\mathbf{c}\in \GRS_k(\mathbf{a},\mathbf{v})^{\bot_H}$. 
    It follows that $\GRS_k(\mathbf{a},\mathbf{v})\subseteq \GRS_k(\mathbf{a},\mathbf{v})^{\bot_H}$, i.e., $\GRS_k(\mathbf{a},\mathbf{v})$ is 
    an $[n,k,n-k+1]_{q}$ Hermitian self-orthogonal MDS code. 
    \end{proof}

    \begin{remark}\label{rem13.th40.Hermitian self-orthogonal222} $\quad$ 

            {\rm (1)} For the same code length $n=n_1n_2$, 
            the authors proved that for each $1\leq k\leq \lfloor \frac{\sqrt{q}+n}{\sqrt{q}+1} \rfloor$, 
            there exists an $[n,k,n-k+1]_q$ MDS code $\mathcal{C}$ with $\dim(\Hull_{\frac{h}{2}}(\mathcal{C}))=l$ in \cite[Theorem 3.1]{RefJ-12}, 
            where $0\leq l\leq k-1$. Similar to Remark \ref{rem11.th38.Hermitian self-orthogonal}, Theorem \ref{th.Con_Hermitian self-orthogonal GRS codes222} 
            implies that there also exists an $[n,k,n-k+1]_q$ MDS code with $l=k$ 
            for $1\leq k\leq \lfloor \frac{\sqrt{q}+n_2(n_1-1)}{\sqrt{q}+1} \rfloor$. 

            {\rm (2)} Except for two known classes of Hermitian self-orthogonal MDS codes listed in Table \ref{tab:4}, 
            as far as we know, there are many other known constructions of Hermitian self-orthogonal MDS codes 
            in the literature. In Table \ref{tab:5}, we list some of these codes 
            with similar parameters to Theorem \ref{th.Con_Hermitian self-orthogonal GRS codes222}.  
            One can easily see that our codes are new in view of the different conditions and length forms. 
            And the same judgment can also be derived from the detailed discussions in \cite{RefJ-12}. 
            
            {\rm (3)} In Table \ref{tab:6}, some Hermitian self-orthogonal MDS codes over $\mathbb{F}_{13^2}^*$, $\mathbb{F}_{17^2}^*$ 
            and $\mathbb{F}_{5^4}^*$ from Theorem \ref{th.Con_Hermitian self-orthogonal GRS codes222} are listed. 
    \end{remark}

        \begin{table}[H]
            \caption{Some other known Hermitian self-orthogonal MDS codes with similar parameters to ours}
            \label{tab:5}       
            \centering
                \begin{tabular}{cllc}
                    \hline
                   Class & Code Length & Dimension & Ref.\\
                    \hline
                    1 & $n=q+1$ & $0\leq k\leq \sqrt{q}+1$ & \cite{RefJ-7,RefJ-8} \\
                    2 & $n=q-l$, $0\leq l\leq \sqrt{q}-2$ & $k\leq \sqrt{q}-l-1$ & \cite{RefJ-7} \\
                    3 & $n=t\sqrt{q}$, $1\leq t\leq \sqrt{q}$ & $1\leq k\leq \lfloor \frac{t\sqrt{q}+\sqrt{q}-1}{\sqrt{q}+1} \rfloor$ & \cite{RefJ-6} \\
                    4 & $n=bm(\sqrt{q}+1)$, $1\leq b\leq 2a$, $\sqrt{q}=2am+1$ & $1\leq k\leq (a+1)m$ & \cite{RefJ-14} \\
                    5 & $n=bm(\sqrt{q}-1)$, $1\leq b\leq 2a$, $\sqrt{q}=2am-1$ & $1\leq k\leq (a+1)m-2$ & \cite{RefJ-14} \\
                    \hline
                \end{tabular}
        \end{table}
        
        \begin{table}[H]
            \centering
            \caption{Some Hermitian self-orthogonal MDS codes from Theorem \ref{th.Con_Hermitian self-orthogonal GRS codes222} for $q=13^2$, $17^2$ and $5^4$}
            \label{tab:6}       
                \begin{tabular}{ccc||ccc}
                    \hline
                     $(q,x_1,x_2,n_1,n_2)$ & $[n,k,d]_{q}$ & $k$ & $(q,x_1,x_2,n_1,n_2)$ & $[n,k,d]_{q}$ & $k$\\
                    \hline
                    $(13^2,14,24,5,7)$ & $[35,k,36-k]_{13^2}$ & $1\leq k\leq 2$ & $(13^2,14,24,7,7)$ & $[49,k,50-k]_{13^2}$ & $1\leq k\leq 3$ \\
                    $(13^2,14,24,9,7)$ & $[63,k,64-k]_{13^2}$ & $1\leq k\leq 4$ & $(13^2,14,24,11,7)$ & $[77,k,78-k]_{13^2}$ & $1\leq k\leq 5$ \\

                    $(17^2,18,32,9,9)$ & $[81,k,82-k]_{17^2}$ & $1\leq k\leq 4$ & $(17^2,18,32,11,9)$ & $[99,k,100-k]_{17^2}$ & $1\leq k\leq 5$ \\
                    $(17^2,18,32,13,9)$ & $[117,k,118-k]_{17^2}$ & $1\leq k\leq 6$ & $(17^2,18,32,15,9)$ & $[135,k,136-k]_{17^2}$ & $1\leq k\leq 7$ \\

                    $(5^4,26,48,7,13)$ & $[91,k,92-k]_{5^4}$ & $1\leq k\leq 3$ & $(5^4,26,48,11,13)$ & $[143,k,144-k]_{5^4}$ & $1\leq k\leq 5$ \\
                    $(5^4,26,48,15,13)$ & $[195,k,196-k]_{5^4}$ & $1\leq k\leq 7$ & $(5^4,26,48,19,13)$ & $[247,k,248-k]_{5^4}$ & $1\leq k\leq 9$ \\
                    $(5^4,26,48,21,13)$ & $[273,k,274-k]_{5^4}$ & $1\leq k\leq 10$ & $(5^4,26,48,23,13)$ & $[299,k,300-k]_{5^4}$ & $1\leq k\leq 11$ \\

                    \hline
                \end{tabular}
            \end{table}

\section{New EAQECCs and MDS EAQECCs}\label{sec6}

In this section, we apply all our results to the construction of new $q$-ary and $\sqrt{q}$-ary EAQECCs as well as MDS EAQECCs. 
The lengths of many $\sqrt{q}$-ary MDS EAQECCs are greater than $\sqrt{q}+1$ and their minimum distances are 
greater than $\lceil \frac{\sqrt{q}}{2} \rceil$. 
\subsection{EAQECCs derived from classical codes}\label{sec6.1}

Similar to classical codes, there is a so-called entanglement-assisted quantum Singleton bound (EA-quantum Singleton bound). 

\begin{lemma}\label{lem2.2}{\rm (EA-quantum Singleton Bound \cite{RefJ50})}
Let $\mathcal{Q}$ be an $[[n,k,d;c]]_q$ entanglement-assisted quantum error-correcting code. If $2d\leq n+2$, then
\begin{align}
    k\leq n+c-2(d-1).
\end{align}
\end{lemma}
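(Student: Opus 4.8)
The plan is to prove the bound by a von Neumann entropy argument, viewing the $[[n,k,d;c]]_q$ code as an encoding isometry that maps the $k$ logical qudits together with the sender's halves of the $c$ pre-shared ebits into the $n$ channel qudits, and exploiting the standard fact that a code of minimum distance $d$ corrects any $d-1$ erasures. First I would fix the relevant systems: a reference $R$ of dimension $q^k$ maximally entangled with the logical input, so that $S(R)=k\log q$, and the receiver's noiseless ebit-halves $B_e$ with $S(B_e)=c\log q$. After the encoding and purification, the global state on $R N B_e$ is pure, where $N$ denotes the $n$ channel qudits. The key coding-theoretic input is that recoverability of $R$ after erasing a subset $E\subseteq N$ with $|E|=d-1$ is equivalent to the \emph{decoupling} identity $I(R:E)=0$, i.e. $S(RE)=S(R)+S(E)$, where the receiver is allowed to use $B_e$ during decoding.

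Assuming $2d\le n+2$, which is exactly the condition $2(d-1)\le n$ needed to accommodate two disjoint erasure sets, I would partition $N=E_1\sqcup E_2\sqcup F$ with $|E_1|=|E_2|=d-1$ and $|F|=n-2(d-1)\ge 0$. Decoupling from $E_1$ gives $S(RE_1)=S(R)+S(E_1)$; purity of $R E_1 E_2 F B_e$ gives $S(RE_1)=S(E_2 F B_e)$; and subadditivity gives $S(E_2 F B_e)\le S(E_2)+S(F)+S(B_e)$. Combining these yields $S(R)+S(E_1)\le S(E_2)+S(F)+S(B_e)$, and the symmetric argument with $E_1$ and $E_2$ interchanged yields $S(R)+S(E_2)\le S(E_1)+S(F)+S(B_e)$. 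Adding the two inequalities cancels $S(E_1)+S(E_2)$ and leaves $S(R)\le S(F)+S(B_e)$. Bounding $S(F)\le |F|\log q=(n-2(d-1))\log q$ and using $S(B_e)=c\log q$ then gives $k\log q=S(R)\le (n+c-2(d-1))\log q$, that is, $k\le n+c-2(d-1)$.

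The step I expect to be the main obstacle is the rigorous justification of the decoupling equivalence: that erasure-correctability on every $(d-1)$-subset $E$ is equivalent to $I(R:E)=0$ in the presence of the entanglement system $B_e$. This rests on the complementarity/decoupling theorem for the purified code state, namely that $R$ is recoverable from the decoder's system precisely when $R$ is in a product state with the complementary erased subsystem, together with the observation that the pre-shared ebits $B_e$ may be absorbed into the decoder's side without altering this characterization. The remaining manipulations are routine applications of purity and subadditivity. Since the statement is quoted from \cite{RefJ50}, I would finally cross-check that my normalization of $k$, $c$ and the hypothesis $2d\le n+2$ agree with the conventions used there, so that the inequality is reproduced in the stated form.
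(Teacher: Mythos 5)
Your proposal is correct, but it cannot be compared against a proof in the paper because the paper contains none: Lemma~\ref{lem2.2} is imported verbatim from \cite{RefJ50} and used as a black box to recognize MDS EAQECCs. What you have written is a self-contained entropic proof in the style of Grassl--Huber--Winter, and the entropy bookkeeping checks out: with $N=E_1\sqcup E_2\sqcup F$, $|E_1|=|E_2|=d-1$, the chain $S(R)+S(E_1)=S(RE_1)=S(E_2FB_e)\leq S(E_2)+S(F)+S(B_e)$, its mirror image with $E_1\leftrightarrow E_2$, and the cancellation upon adding give $S(R)\leq S(F)+S(B_e)$, hence $k\leq n-2(d-1)+c$ after the dimension bounds $S(R)=k\log q$, $S(F)\leq(n-2(d-1))\log q$, $S(B_e)=c\log q$. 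Two remarks on the step you flag as the main obstacle. First, you only need one direction of the decoupling statement (correctability $\Rightarrow I(R:E)=0$), which is the standard information-theoretic characterization of error correction applied to the pure state on $RNB_e$ with the decoder holding $(N\setminus E)\cup B_e$; the ebits are absorbed into the decoder exactly as you say. Second, the claim that distance $d$ implies correctability of any $d-1$ erasures holds also for degenerate EAQECCs, since the Knill--Laflamme condition for an erasure set $E$ with $|E|=d-1$ only involves operators $E_a^\dagger E_b$ supported on $E$, which have weight at most $d-1<d$. Your proof also has a virtue the paper's citation hides: it localizes precisely where the hypothesis $2d\leq n+2$ enters (the existence of two disjoint $(d-1)$-sets), which matters because the unconditional form of the bound originally claimed in the EAQECC literature is false for $2d>n+2$. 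One minor caveat: you should state that the encoder is an isometry (true by definition of an EAQECC), since the purity of the global state on $RNB_e$ is what drives the complementarity step $S(RE_1)=S(E_2FB_e)$.
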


\begin{remark}\label{rem14.EA-quantum Singleton bound} $\quad$
\begin{enumerate}
    \item [\rm (1)] An EAQECC for which equality holds in the EA-quantum Singleton bound is called an MDS EAQECC. 
    \item [\rm (2)] It is easy to verify that if a classical code $\C$ is an MDS code and $2d\leq n+2$, then the EAQECC constructed by it is an MDS EAQECC.

\end{enumerate}
\end{remark} 

As mentioned before, in an EAQECC, the parameter $c$ can be calculated with $\dim(\Hull(\C))$. 
As a unified description, Cao \cite{RefJ3} proposed a method of constructing EAQECCs from  
codes with prescribed dimensional Galois hull and we rephrase it here.

\begin{lemma}{\rm(\cite{RefJ3})}\label{lemma_Galois EAQECC}
    Let $\mathcal{C}$ be an $[n,k,d]_q$ code. Then there exists an $[[n,k-\dim(\Hull_e(\mathcal{C})),d;n-k-\dim(\Hull_e(\mathcal{C}))]]_q$ EAQECC $\mathcal{Q}$.
\end{lemma}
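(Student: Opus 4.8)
The plan is to realize $\mathcal{Q}$ through the entanglement-assisted stabilizer construction, using the $e$-Galois inner product $\langle\cdot,\cdot\rangle_e$ in place of the usual symplectic pairing, and then to read off the parameters from the hull-dimension identity already recorded in Lemma \ref{lemma.the ralation between hull and rank(HH)}. First I would fix a parity check matrix $H$ of $\mathcal{C}$, so that $H$ has $n-k$ rows of rank $n-k$ and, by Lemma \ref{lemma3}, $\sigma^{h-e}(H)$ is a generator matrix of $\mathcal{C}^{\bot_e}$. The rows of $H$ are taken as the generators of the entanglement-assisted stabilizer group, and the Gram-type matrix $M=H\sigma^{h-e}(H^T)$ encodes their pairwise ``commutation'' relations under $\langle\cdot,\cdot\rangle_e$: the generators fail to commute exactly where the corresponding entry of $M$ is nonzero, so $\mathrm{rank}(M)$ measures the obstruction to self-orthogonality.

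The central bookkeeping step is the decomposition of the generating set into a maximal mutually commuting (isotropic) part and a remaining part controlled by $\mathrm{rank}(M)$. In the entanglement-assisted formalism the number of pre-shared ebits required to restore the commutation relations is exactly $c=\mathrm{rank}(M)$, as in the general construction of \cite{RefJ10,RefJ12,RefJ13}, while the number of encoded qudits is $k_Q=n-2(n-k)+c=2k-n+c$. I would then substitute the identity of Lemma \ref{lemma.the ralation between hull and rank(HH)}, namely $\mathrm{rank}(H\sigma^{h-e}(H^T))=n-k-\dim(\Hull_e(\mathcal{C}))$, to obtain
\begin{align*}
    c = n-k-\dim(\Hull_e(\mathcal{C})), \qquad k_Q = 2k-n+c = k-\dim(\Hull_e(\mathcal{C})),
\end{align*}
which are precisely the advertised parameters. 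A quick sanity check is the boundary case $\dim(\Hull_e(\mathcal{C}))=n-k$, where $\mathcal{C}^{\bot_e}\subseteq\mathcal{C}$; here $c=0$ and $k_Q=2k-n$, recovering the classical CSS/stabilizer count.

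For the distance I would argue that the undetectable errors of $\mathcal{Q}$ are governed by the nonzero vectors defining its logical operators, and that these are drawn from $\mathcal{C}$; since $\mathcal{C}$ has minimum Hamming distance $d$, the resulting EAQECC detects every error of weight less than $d$, giving minimum distance $d$. I expect the genuine obstacle to lie exactly here and in the set-up of the formalism: one must verify that the $e$-Galois pairing legitimately plays the role of the symplectic form (so that the Heisenberg--Weyl commutation relations over $\mathbb{F}_q$ are reproduced), and that the weight spectrum of the logical operators is indeed controlled by $\mathcal{C}$ rather than by $\mathcal{C}^{\bot_e}$. Once the general entanglement-assisted construction is taken as given, the remainder is the purely linear-algebraic substitution above, so the heart of the matter is confirming that $\dim(\Hull_e(\mathcal{C}))$ is precisely the quantity measuring the failure of self-orthogonality and hence the ebit consumption.
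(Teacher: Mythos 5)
The paper itself offers no proof of this lemma: it is quoted verbatim from \cite{RefJ3} (Cao), so the comparison here is against the argument in that cited work, which your outline partially reconstructs. Your parameter bookkeeping is exactly right and matches the standard route: $c=\rank\bigl(H\sigma^{h-e}(H^T)\bigr)$, the substitution via Lemma \ref{lemma.the ralation between hull and rank(HH)} giving $c=n-k-\dim(\Hull_e(\mathcal{C}))$, and $k_Q=2k-n+c=k-\dim(\Hull_e(\mathcal{C}))$, with the sanity check at $\dim(\Hull_e(\mathcal{C}))=n-k$ recovering the stabilizer case.

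The genuine gap is precisely the step you flag and then leave unresolved: the $e$-Galois inner product does \emph{not} directly play the role of the symplectic form. Commutation of generalized Pauli operators over $\F_q$ is governed by a trace-symplectic pairing, and a nonzero entry of $M=H\sigma^{h-e}(H^T)$ does not by itself certify that two stabilizer generators fail to commute; so ``$\rank(M)$ measures the obstruction to self-orthogonality'' is an assertion, not a consequence of the formalism as you have set it up. The repair — and the route actually taken in the cited literature — is to eliminate the Galois twist before invoking the quantum construction: by Lemma \ref{lemma3}, $\mathcal{C}^{\bot_e}=(\sigma^{h-e}(\mathcal{C}))^{\bot_0}$, so one applies the Euclidean CSS-type entanglement-assisted construction (Wilde--Brun as generalized to $\F_q$, cf.\ \cite{RefJ10,RefJ12,RefJ13}) to the \emph{pair} of codes $\mathcal{C}_1=\mathcal{C}$ and $\mathcal{C}_2=\sigma^{h-e}(\mathcal{C})$. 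In that two-code construction the ebit count is legitimately $c=\rank(H_1H_2^T)=\rank\bigl(H\sigma^{h-e}(H^T)\bigr)$, because $X$-type/$Z$-type commutation is controlled by the Euclidean pairing of the rows; and since $\sigma$ acts coordinatewise as a field automorphism, $\sigma^{h-e}(\mathcal{C})$ is an $[n,k,d]_q$ code with the same weight distribution as $\mathcal{C}$, which settles your distance step ($d=\min\{d(\mathcal{C}_1),d(\mathcal{C}_2)\}=d$) without needing the logical operators to be ``drawn from $\mathcal{C}$'' in your one-code picture. With that substitution your linear-algebraic conclusion goes through verbatim.
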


Furthermore, in Corollary \ref{coro.C and C dual Hull equals}, we prove that $\dim(\Hull_e(\mathcal{C}))=\dim(\Hull_e(\mathcal{C}^{\bot_e}))$ 
for each $0\leq e\leq h-1$. Therefore, we can actually construct EAQECCs from both $\C$ and $\C^{\bot_e}$ by Lemma \ref{lemma_Galois EAQECC}. 
We give the explicit constructions as follows.  

\begin{corollary}\label{coro_Galois EAQECC}
    Let $\mathcal{C}$ be an $[n,k,d]_q$ code and $\mathcal{C}^{\bot_e}$ be its $e$-Galois dual code with 
    parameters $[n,n-k,d^{\bot_e}]_q$. Then there exists an $[[n,k-\dim(\Hull_e(\mathcal{C})),d;n-k-\dim(\Hull_e(\mathcal{C}))]]_q$ EAQECC 
    $\mathcal{Q}$ and an $[[n,n-k-\dim(\Hull_e(\mathcal{C})),d^{\bot_e};k-\dim(\Hull_e(\mathcal{C}))]]_q$ EAQECC $\mathcal{Q}'$. 
\end{corollary}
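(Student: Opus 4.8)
The plan is to derive both EAQECCs directly by applying Lemma \ref{lemma_Galois EAQECC} twice---once to $\mathcal{C}$ and once to its $e$-Galois dual code $\mathcal{C}^{\bot_e}$---and then to reconcile the entanglement parameters using Corollary \ref{coro.C and C dual Hull equals}. First I would apply Lemma \ref{lemma_Galois EAQECC} directly to $\mathcal{C}$, which has parameters $[n,k,d]_q$; this immediately yields an $[[n,k-\dim(\Hull_e(\mathcal{C})),d;n-k-\dim(\Hull_e(\mathcal{C}))]]_q$ EAQECC $\mathcal{Q}$, reproducing the first claimed code verbatim.

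Next I would apply the same Lemma \ref{lemma_Galois EAQECC} to the code $\mathcal{C}^{\bot_e}$, whose parameters are $[n,n-k,d^{\bot_e}]_q$ by hypothesis. Substituting $n-k$ for the dimension and $d^{\bot_e}$ for the minimum distance in the statement of the lemma produces an
$$[[n,\,(n-k)-\dim(\Hull_e(\mathcal{C}^{\bot_e})),\,d^{\bot_e};\,n-(n-k)-\dim(\Hull_e(\mathcal{C}^{\bot_e}))]]_q$$
EAQECC. The key step is then to invoke Corollary \ref{coro.C and C dual Hull equals}, which asserts $\dim(\Hull_e(\mathcal{C}))=\dim(\Hull_e(\mathcal{C}^{\bot_e}))$; replacing $\dim(\Hull_e(\mathcal{C}^{\bot_e}))$ by $\dim(\Hull_e(\mathcal{C}))$ throughout and simplifying $n-(n-k)=k$ in the entanglement slot gives exactly the stated $[[n,n-k-\dim(\Hull_e(\mathcal{C})),d^{\bot_e};k-\dim(\Hull_e(\mathcal{C}))]]_q$ EAQECC $\mathcal{Q}'$.

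This establishes the corollary, since both $\mathcal{Q}$ and $\mathcal{Q}'$ exist simultaneously. I do not expect any genuine obstacle here: the entire argument is a two-fold application of an already-proved construction lemma, glued together by the symmetry result of Corollary \ref{coro.C and C dual Hull equals}. The only point requiring minor care is the bookkeeping in the second application---making sure the roles of $k$ and $n-k$ are swapped correctly and that the hull dimension appearing for $\mathcal{C}^{\bot_e}$ is legitimately rewritten in terms of the hull dimension of $\mathcal{C}$. Once Corollary \ref{coro.C and C dual Hull equals} is in hand, this substitution is immediate, and the proof reduces to routine parameter matching.
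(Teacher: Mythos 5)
Your proposal is correct and follows exactly the paper's own route: the paper introduces this corollary immediately after noting that Corollary \ref{coro.C and C dual Hull equals} gives $\dim(\Hull_e(\mathcal{C}))=\dim(\Hull_e(\mathcal{C}^{\bot_e}))$, so that Lemma \ref{lemma_Galois EAQECC} can be applied to both $\mathcal{C}$ and $\mathcal{C}^{\bot_e}$. Your bookkeeping for the second application (swapping $k$ and $n-k$ and rewriting the hull dimension of the dual) is precisely the intended argument.
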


Notably, in \cite{RefJ12}, the authors showed that one can construct $\sqrt[]{q}$-ary EAQECCs 
from $q$-ary codes when the Hermitian inner product is taken into account.  

\begin{lemma}{\rm (\cite{RefJ12})}\label{lem.Hermitian EAEECCs}
    Let $\mathcal{C}$ be an $[n,k,d]_q$ code and $\mathcal{C}^{\bot_\frac{h}{2}}$ be its Hermitian dual code with parameters $[n,n-k,d^{\bot_\frac{h}{2}}]_q$. 
    Then there exists 
    an $[[n,k-\dim(\Hull_{\frac{h}{2}}(\mathcal{C})),d;n-k-\dim(\Hull_{\frac{h}{2}}(\mathcal{C}))]]_{\sqrt{q}}$ EAQECC $\mathcal{Q}$ and 
    an $[[n,n-k-\dim(\Hull_{\frac{h}{2}}(\mathcal{C})),d^{\bot_{\frac{h}{2}}};k-\dim(\Hull_{\frac{h}{2}}(\mathcal{C}))]]_{\sqrt{q}}$ EAQECC $\mathcal{Q}'$. 
\end{lemma}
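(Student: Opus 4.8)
The plan is to reduce the statement to the general entanglement-assisted stabilizer formalism and then identify the number of ebits with the Hermitian hull dimension by invoking Lemma \ref{lemma.the ralation between hull and rank(HH)}. The argument is a Hermitian analogue of the reasoning behind the $q$-ary construction recorded in Lemma \ref{lemma_Galois EAQECC}: the only structural change is that the decoding is carried out over the subfield $\mathbb{F}_{\sqrt{q}}$, so that a $q$-ary classical code produces a $\sqrt{q}$-ary EAQECC. First I would import the foundational principle (the subfield/Hermitian version of the Brun--Devetak--Hsieh construction from \cite{RefJ12}): from a classical $[n,k,d]_q$ code $\mathcal{C}$ with parity check matrix $H$, where $q=(\sqrt{q})^2$, one obtains a $\sqrt{q}$-ary EAQECC whose number $c$ of required maximally entangled pairs equals the rank of the self-orthogonality defect $H\sigma^{h/2}(H^T)$. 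Here $\sigma^{h/2}$ is exactly the conjugation $a\mapsto a^{\sqrt{q}}$ defining the Hermitian form, so $\sigma^{h/2}(H^T)$ is the Hermitian conjugate transpose of $H$.

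Second, I would apply Lemma \ref{lemma.the ralation between hull and rank(HH)} with $e=\frac{h}{2}$, noting that then $h-e=\frac{h}{2}$ as well, to obtain
\[
c=\rank\!\left(H\sigma^{h/2}(H^T)\right)=n-k-\dim(\Hull_{\frac{h}{2}}(\mathcal{C})).
\]
This is the crucial identification, converting the abstract ebit count into the hull dimension appearing in the statement. The minimum distance $d$ is inherited directly from the classical code $\mathcal{C}$.

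Third, I would read off the encoded dimension. In the Hermitian construction the number of logical qudits obeys $k_{\mathcal{Q}}=2k-n+c$; substituting the value of $c$ just found gives
\[
k_{\mathcal{Q}}=2k-n+\bigl(n-k-\dim(\Hull_{\frac{h}{2}}(\mathcal{C}))\bigr)=k-\dim(\Hull_{\frac{h}{2}}(\mathcal{C})),
\]
which yields $\mathcal{Q}$. For $\mathcal{Q}'$ I would run the identical argument on the Hermitian dual $\mathcal{C}^{\bot_{h/2}}$, which has parameters $[n,n-k,d^{\bot_{\frac{h}{2}}}]_q$. By Corollary \ref{coro.C and C dual Hull equals} we have $\dim(\Hull_{\frac{h}{2}}(\mathcal{C}^{\bot_{h/2}}))=\dim(\Hull_{\frac{h}{2}}(\mathcal{C}))$, so the two hull dimensions coincide; swapping the roles of $k$ and $n-k$ then gives the claimed parameters of $\mathcal{Q}'$ with ebit count $k-\dim(\Hull_{\frac{h}{2}}(\mathcal{C}))$.

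The main obstacle is the first step: justifying rigorously that a $q$-ary classical code produces a $\sqrt{q}$-ary EAQECC with precisely these parameters. This rests on the entanglement-assisted stabilizer machinery over the subfield---the symplectic-to-Hermitian correspondence together with the verification that the commutation relations of the stabilizer generators can be repaired using exactly $c=\rank(H\sigma^{h/2}(H^T))$ maximally entangled pairs. Once that machinery is taken as given (it is the content imported from \cite{RefJ12}), the remainder is the purely algebraic identification above, which follows immediately from Lemma \ref{lemma.the ralation between hull and rank(HH)} and Corollary \ref{coro.C and C dual Hull equals} established earlier in this paper.
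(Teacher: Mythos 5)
The paper gives no proof of this lemma at all---it is imported verbatim from \cite{RefJ12}---and your reconstruction is precisely the derivation behind that citation: the Hermitian entanglement-assisted stabilizer construction with ebit count $c=\rank\bigl(H\sigma^{h/2}(H^T)\bigr)$ and logical dimension $2k-n+c$, combined with Lemma \ref{lemma.the ralation between hull and rank(HH)} at $e=\frac{h}{2}$ (where indeed $h-e=\frac{h}{2}$) and the same argument applied to $\mathcal{C}^{\bot_{\frac{h}{2}}}$ for $\mathcal{Q}'$; the algebra checks out. One minor remark: for $e=\frac{h}{2}$ you even have $\Hull_{\frac{h}{2}}(\mathcal{C}^{\bot_{\frac{h}{2}}})=\Hull_{\frac{h}{2}}(\mathcal{C})$ as sets, since $(\mathcal{C}^{\bot_{\frac{h}{2}}})^{\bot_{\frac{h}{2}}}=\mathcal{C}$ by Lemma \ref{lemma3}, so your appeal to Corollary \ref{coro.C and C dual Hull equals} is correct but stronger than necessary.
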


From Corollaries \ref{coro.new_arbitrary dimension e-Galois hull codes with length n+2i and n+2i+1}-\ref{coro.new_arbitrary dimension Hermitian hull codes with length n+2i and n+2i+1} and 
Theorem \ref{th.arbitrary hull codess with length n+1}, 
we can construct EAQECCs of larger length using Corollary \ref{coro_Galois EAQECC} and Lemma \ref{lem.Hermitian EAEECCs}. 
Let notations be the same as before. In the sequel, we further denote the minimum distance of the $e$-Galois dual code of 
$\C_{i}$ ($i\geq 0$ and $i\neq 1$), $\C_{2i}$ ($i\geq 0$) and $\C_1$ by $\widetilde{d}_{i}^{\bot_e}$, $\widetilde{d}_{2i}^{\bot_e}$ 
and $\widetilde{d}_{1}^{\bot_e}$, respectively.
Therefore, the following results are straightforward. 

\begin{corollary}\label{coro.EAQECCS for Galois hulls}
    Let $q\geq 3$ and $\mathcal{C}$ be an $[n,k,d]_q$ $e$-Galois self-orthogonal code.  Then the following statements hold. 
    \begin{itemize}
        \item [\rm (1)] If Equation (\ref{eq.th.n+i}) holds, then there exists 
        an $[[n+i,k-l,\widetilde{d}_{i};n+i-k-l]]_q$ EAQECC $\mathcal{Q}_{i}$ and 
        an $[[n+i,n+i-k-l,\widetilde{d}_{i}^{\bot_e};k-l]]_q$ EAQECC $\mathcal{Q}_{i}'$ 
        for $i\geq 0$, $i\neq 1$ and $0\leq l\leq k$, where $d\leq \widetilde{d}_{i}\leq n+i+1-k$. 
        
        \item [\rm (2)] There exists an $[[n+1,1,\widetilde{d}_{1};n-2k+2]]_q$ EAQECC $\mathcal{Q}_{1}$ and 
        an $[[n+1,n-2k+2,\widetilde{d}_{1}^{\bot_e};1]]_q$ EAQECC $\mathcal{Q}_{1}'$, where $d\leq \widetilde{d}_1\leq n+2-k$.
        
        \item [\rm (3)] There exists an $[[n+1,k-l,\widetilde{d}_{1};n+1-k-l]]_q$ EAQECC $\mathcal{Q}_{1}$ and 
        an $[[n+1,n+1-k-l,\widetilde{d}_{1}^{\bot_e};k-l]]_q$ EAQECC $\mathcal{Q}_{1}'$ 
        if and only if $(k-l)\alpha^{p^e+1}+\beta^{p^e+1}\neq 1$ and $\beta^{p^e+1}\neq 1$ hold 
        for some $\alpha,\beta\in \mathbb{F}_q^*$ and prescribed $0\leq l\leq k-2$, where $d\leq \widetilde{d}_1\leq n+2-k$.
    \end{itemize}
\end{corollary}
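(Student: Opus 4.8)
The plan is to obtain all three items as immediate consequences of pairing the existence results for codes of length $n+i$ that carry an $e$-Galois hull of prescribed dimension with the general EAQECC construction recorded in Corollary \ref{coro_Galois EAQECC}. All of the genuine content already lives in Corollary \ref{coro.new_arbitrary dimension e-Galois hull codes with length n+2i and n+2i+1} and Theorem \ref{th.arbitrary hull codess with length n+1}; what remains is to feed each such code into Corollary \ref{coro_Galois EAQECC} and verify that the resulting EAQECC parameters coincide with the claimed ones.

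For item (1) I would start from an $[n,k,d]_q$ $e$-Galois self-orthogonal code with $q\geq 3$ and invoke Corollary \ref{coro.new_arbitrary dimension e-Galois hull codes with length n+2i and n+2i+1}: whenever Equation (\ref{eq.th.n+i}) holds, for each $i\geq 0$, $i\neq 1$ and each $0\leq l\leq k$ there is an $[n+i,k,\widetilde{d}_i]_q$ code $\mathcal{C}_i$ with $\dim(\Hull_e(\mathcal{C}_i))=l$, where $d\leq \widetilde{d}_i\leq n+i+1-k$, whose $e$-Galois dual is $[n+i,n+i-k,\widetilde{d}_i^{\bot_e}]_q$. Applying Corollary \ref{coro_Galois EAQECC} to $\mathcal{C}_i$ with $\dim(\Hull_e(\mathcal{C}_i))=l$ then produces exactly $\mathcal{Q}_i$ with parameters $[[n+i,\,k-l,\,\widetilde{d}_i;\,n+i-k-l]]_q$ and $\mathcal{Q}_i'$ with parameters $[[n+i,\,n+i-k-l,\,\widetilde{d}_i^{\bot_e};\,k-l]]_q$, which is the assertion.

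Items (2) and (3) are handled identically, but starting instead from the length-$(n+1)$ constructions in Theorem \ref{th.arbitrary hull codess with length n+1}. For (2), Theorem \ref{th.arbitrary hull codess with length n+1}(1) supplies an $[n+1,k,\widetilde{d}_1]_q$ code $\mathcal{C}_1$ with $\dim(\Hull_e(\mathcal{C}_1))=k-1$; substituting this hull dimension into Corollary \ref{coro_Galois EAQECC} collapses the parameters to $[[n+1,\,1,\,\widetilde{d}_1;\,n-2k+2]]_q$ and $[[n+1,\,n-2k+2,\,\widetilde{d}_1^{\bot_e};\,1]]_q$. For (3), Theorem \ref{th.arbitrary hull codess with length n+1}(2) gives, for prescribed $0\leq l\leq k-2$, an $[n+1,k,\widetilde{d}_1]_q$ code $\mathcal{C}_1$ with $\dim(\Hull_e(\mathcal{C}_1))=l$ precisely when $(k-l)\alpha^{p^e+1}+\beta^{p^e+1}\neq 1$ and $\beta^{p^e+1}\neq 1$ hold for some $\alpha,\beta\in \F_q^*$; feeding $\dim(\Hull_e(\mathcal{C}_1))=l$ into Corollary \ref{coro_Galois EAQECC} then yields the stated $\mathcal{Q}_1$ and $\mathcal{Q}_1'$.

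The one point demanding care is the ``if and only if'' in item (3), which I would carry through verbatim from Theorem \ref{th.arbitrary hull codess with length n+1}(2). The forward implication is clear, since the two inequalities produce the required code and hence the EAQECCs; for the converse I would observe that $\mathcal{Q}_1$ and $\mathcal{Q}_1'$ are by definition obtained from the construction of that theorem, so their existence forces the underlying code to have $l$-dimensional $e$-Galois hull, which in turn forces the inequalities. Beyond this bookkeeping there is no real obstacle: the whole statement is a composition of earlier results, and the only risk is an arithmetic slip when simplifying the entanglement parameter $n+i-k-\dim(\Hull_e(\mathcal{C}_i))$ and the dimension $k-\dim(\Hull_e(\mathcal{C}_i))$ in the special case $\dim(\Hull_e(\mathcal{C}_1))=k-1$ of item (2).
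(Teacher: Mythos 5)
Your proposal is correct and follows exactly the paper's route: the paper states this corollary as straightforward from Corollary \ref{coro.new_arbitrary dimension e-Galois hull codes with length n+2i and n+2i+1} and Theorem \ref{th.arbitrary hull codess with length n+1} fed into Corollary \ref{coro_Galois EAQECC}, which is precisely your pairing, and your parameter arithmetic (including the collapse to $[[n+1,1,\widetilde{d}_1;n-2k+2]]_q$ when $\dim(\Hull_e(\mathcal{C}_1))=k-1$) checks out. Your reading of the ``if and only if'' in item (3) as inherited verbatim from Theorem \ref{th.arbitrary hull codess with length n+1}(2), with the converse understood relative to that construction, matches the paper's intent.
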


\begin{corollary}\label{coro.EAQECCS for Euclidean hulls}
    Let $q\geq 3$ and $\mathcal{C}$ be an $[n,k,d]_q$ Euclidean self-orthogonal code. Then the following statements hold. 
    \begin{itemize}
        \item [\rm (1)] If $p$ is even and $h\geq 2$, or $p$ is odd and $h$ is even, there exists an $[[n+i,k-l,\widetilde{d}_{i};n+i-k-l]]_q$ EAQECC $\mathcal{Q}_{i}$ 
        and an $[[n+i,n+i-k-l,\widetilde{d}_{i}^{\bot_0};k-l]]_q$ EAQECC $\mathcal{Q}_{i}'$ 
        for $i\geq 0$, $i\neq 1$ and $0\leq l\leq k$, where $d\leq \widetilde{d}_{i}\leq n+i+1-k$.

        \item [\rm (2)] There exists an $[[n+1,1,\widetilde{d}_{1};n-2k+2]]_q$ EAQECC $\mathcal{Q}_{1}$ and 
        an $[[n+1,n-2k+2,\widetilde{d}_{1}^{\bot_0};1]]_q$ EAQECC $\mathcal{Q}_{1}'$, where $d\leq \widetilde{d}_1\leq n+2-k$.

        \item [\rm (3)] There exists an $[[n+1,k-l,\widetilde{d}_{1};n+1-k-l]]_q$ EAQECC $\mathcal{Q}_{1}$ and 
        an $[[n+1,n+1-k-l,\widetilde{d}_{1}^{\bot_0};k-l]]_q$ EAQECC $\mathcal{Q}_{1}'$ if and only if 
        $(k-l)\alpha^{2}+\beta^{2}\neq 1$ and $\beta^{2}\neq 1$ hold for 
        some $\alpha,\beta\in \mathbb{F}_q^*$ and prescribed $0\leq l\leq k-2$, 
        where $d\leq \widetilde{d}_1\leq n+2-k$.

    \end{itemize}
\end{corollary}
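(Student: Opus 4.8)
The plan is to read this as the $e=0$ specialization of the preceding $e$-Galois EAQECC corollary, feeding the Euclidean classical-code constructions into the hull-based EAQECC machine. First I would fix $e=0$ throughout, so the $e$-Galois inner product is the Euclidean inner product and $p^e+1=2$. The engine is Corollary \ref{coro_Galois EAQECC}: given any $[N,K,D]_q$ code $\mathcal{D}$ with $\dim(\Hull_0(\mathcal{D}))=L$ and Euclidean dual $\mathcal{D}^{\bot_0}$ of minimum distance $D^{\bot_0}$, it produces an $[[N,K-L,D;N-K-L]]_q$ EAQECC together with an $[[N,N-K-L,D^{\bot_0};K-L]]_q$ EAQECC. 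Thus each of the three parts reduces to exhibiting a classical code of the right length with the prescribed Euclidean hull dimension and then reading off the parameters, the $q\geq 3$ hypothesis being exactly what the underlying classical constructions require.

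For part (1), under the stated parity conditions on $p$ and $h$, Corollary \ref{coro.new_arbitrary dimension Euclidean hull codes with length n+2i and n+2i+1} supplies, for each $i\geq 0$ with $i\neq 1$ and each $0\leq l\leq k$, an $[n+i,k,\widetilde{d}_i]_q$ code with $l$-dimensional Euclidean hull and $d\leq\widetilde{d}_i\leq n+i+1-k$. Applying Corollary \ref{coro_Galois EAQECC} with $N=n+i$, $K=k$, $L=l$ yields precisely the two claimed families $\mathcal{Q}_i$ and $\mathcal{Q}_i'$, the distance bound on $\widetilde{d}_i$ transferring verbatim.

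For parts (2) and (3) I would instead invoke Theorem \ref{th.arbitrary hull codess with length n+1}, which covers the length-$(n+1)$ case excluded from Theorem \ref{th.n+2i}. Its part (1) gives an $[n+1,k,\widetilde{d}_1]_q$ code with $(k-1)$-dimensional Euclidean hull; substituting $N=n+1$, $K=k$, $L=k-1$ into Corollary \ref{coro_Galois EAQECC} collapses the parameters to $k-L=1$ and $N-K-L=n-2k+2$, giving part (2). For part (3) I would take its part (2), whose existence criterion for an $l$-dimensional hull ($0\leq l\leq k-2$) is precisely $(k-l)\alpha^{p^e+1}+\beta^{p^e+1}\neq 1$ and $\beta^{p^e+1}\neq 1$; setting $e=0$ turns $p^e+1$ into $2$, so the criterion becomes $(k-l)\alpha^2+\beta^2\neq 1$ and $\beta^2\neq 1$, and the parameters follow with $N=n+1$, $K=k$, $L=l$.

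The argument is essentially bookkeeping, so there is no deep obstacle; the only delicate point is confirming that the ``if and only if'' in part (3) is genuinely preserved. This requires observing that in Corollary \ref{coro_Galois EAQECC} the listed EAQECC parameters are determined by, and conversely determine, the hull dimension $l$ of the underlying classical code, so that the existence of $\mathcal{Q}_1,\mathcal{Q}_1'$ with those parameters is equivalent to the existence of the length-$(n+1)$ code with $l$-dimensional Euclidean hull, which by Theorem \ref{th.arbitrary hull codess with length n+1} is in turn equivalent to the displayed field-element condition. I would also double-check that the bounds $d\leq\widetilde{d}_1\leq n+2-k$ carry over unchanged from Theorem \ref{th.arbitrary hull codess with length n+1}.
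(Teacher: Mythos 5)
Your proposal is correct and follows exactly the paper's route: the paper derives this corollary as an immediate consequence of Corollary \ref{coro.new_arbitrary dimension Euclidean hull codes with length n+2i and n+2i+1} (for part (1)) and Theorem \ref{th.arbitrary hull codess with length n+1} (for parts (2) and (3)), fed into the hull-to-EAQECC machine of Corollary \ref{coro_Galois EAQECC} with $e=0$, so that $p^e+1=2$. Your extra care about transferring the ``if and only if'' in part (3) is consistent with how the paper treats it, since the equivalence is inherited verbatim from Theorem \ref{th.arbitrary hull codess with length n+1} (2).
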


\begin{corollary}\label{coro.EAQECCS for Hermitian hulls}
    Let $\mathcal{C}$ be an $[n,k,d]_q$ Hermitian self-orthogonal code. Then the following statements hold. 
    \begin{itemize}
        \item [\rm (1)] If $q> 4$, there exists 
        an $[[n+i,k-l,\widetilde{d}_{i};n+i-k-l]]_{\sqrt{q}}$ EAQECC $\mathcal{Q}_{i}$ and 
        an $[[n+i,n+i-k-l,\widetilde{d}_{i}^{\bot_\frac{h}{2}};k-l]]_{\sqrt{q}}$ EAQECC $\mathcal{Q}_{i}'$ 
        for $i\geq 0$, $i\neq 1$ and $0\leq l\leq k$, where $d\leq \widetilde{d}_{i}\leq n+i+1-k$.

        \item [\rm (2)] If $q=4$, there exists 
        an $[[n+2i,k-l,\widetilde{d}_{2i};n+2i-k-l]]_{2}$ EAQECC $\mathcal{Q}_{2i}$ and 
        an $[[n+2i,n+2i-k-l,\widetilde{d}_{2i}^{\bot_\frac{h}{2}};k-l]]_{2}$ EAQECC $\mathcal{Q}_{2i}'$ 
        for $i\geq 0$ and $0\leq l\leq k$, where $d\leq \widetilde{d}_{2i}\leq n+2i+1-k$.

        \item [\rm (3)] There exists 
        an $[[n+1,1,\widetilde{d}_{1};n-2k+2]]_{\sqrt{q}}$ EAQECC $\mathcal{Q}_{1}$ and 
        an $[[n+1,n-2k+2,\widetilde{d}_{1}^{\bot_\frac{h}{2}};1]]_{\sqrt{q}}$ EAQECC $\mathcal{Q}_{1}'$, 
        where $d\leq \widetilde{d}_1\leq n+2-k$.

        \item [\rm (4)] There exists 
        an $[[n+1,k-l,\widetilde{d}_{1};n+1-k-l]]_{\sqrt{q}}$ EAQECC $\mathcal{Q}_{1}$ and 
        an $[[n+1,n+1-k-l,\widetilde{d}_{1}^{\bot_\frac{h}{2}};k-l]]_{\sqrt{q}}$ EAQECC $\mathcal{Q}_{1}'$ 
        if and only if $(k-l)\alpha^{\sqrt{q}+1}+\beta^{\sqrt{q}+1}\neq 1$ and 
        $\beta^{\sqrt{q}+1}\neq 1$ hold for some $\alpha,\beta\in \mathbb{F}_q^*$ and prescribed $0\leq l\leq k-2$, 
        where $d\leq \widetilde{d}_1\leq n+2-k$.
    \end{itemize}
\end{corollary}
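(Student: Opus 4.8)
The plan is to read off all four statements as direct consequences of the classical constructions from Section \ref{sec4.4} fed into the Hermitian EAQECC recipe of Lemma \ref{lem.Hermitian EAEECCs}. The essential observation is that Lemma \ref{lem.Hermitian EAEECCs} converts \emph{any} $q$-ary code of known Hermitian hull dimension into a pair of $\sqrt{q}$-ary EAQECCs, so once a classical code of the prescribed length and hull dimension is in hand, the EAQECC parameters follow by arithmetic. Throughout, I would use that the Hermitian inner product forces $e=\frac{h}{2}$, whence $p^e+1=\sqrt{q}+1$.

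For part (1) (and (2) similarly), first apply Corollary \ref{coro.new_arbitrary dimension Hermitian hull codes with length n+2i and n+2i+1}. When $q>4$ this produces, for each $i\geq 0$, $i\neq 1$ and each $0\leq l\leq k$, an $[n+i,k,\widetilde{d}_i]_q$ code $\mathcal{C}_i$ with $\dim(\Hull_{\frac{h}{2}}(\mathcal{C}_i))=l$ and $d\leq \widetilde{d}_i\leq n+i+1-k$; for $q=4$ it produces the analogous $[n+2i,k,\widetilde{d}_{2i}]_q$ codes with $l$-dimensional Hermitian hull. Substituting the hull dimension $l$ into Lemma \ref{lem.Hermitian EAEECCs} gives $\mathcal{Q}_i$ with parameters $[[n+i,k-l,\widetilde{d}_i;(n+i)-k-l]]_{\sqrt{q}}$ and its companion $\mathcal{Q}_i'$ with $[[n+i,(n+i)-k-l,\widetilde{d}_i^{\bot_\frac{h}{2}};k-l]]_{\sqrt{q}}$, exactly as asserted (and likewise with $i$ replaced by $2i$ when $q=4$).

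For parts (3) and (4), I would instead feed Theorem \ref{th.arbitrary hull codess with length n+1} into the same lemma. Part (1) of that theorem supplies an $[n+1,k,\widetilde{d}_1]_q$ code with $(k-1)$-dimensional Hermitian hull; taking $l=k-1$ in Lemma \ref{lem.Hermitian EAEECCs} and simplifying $(n+1)-k-(k-1)=n-2k+2$ yields the claimed $\mathcal{Q}_1,\mathcal{Q}_1'$ of part (3). For part (4), part (2) of the theorem characterises, by an if-and-only-if, the existence of an $[n+1,k,\widetilde{d}_1]_q$ code with $l$-dimensional hull for prescribed $0\leq l\leq k-2$ in terms of the inequalities $(k-l)\alpha^{p^e+1}+\beta^{p^e+1}\neq 1$ and $\beta^{p^e+1}\neq 1$; setting $p^e+1=\sqrt{q}+1$ rewrites these as the stated conditions, and Lemma \ref{lem.Hermitian EAEECCs} then delivers the EAQECC pair with the displayed parameters.

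The proof is thus a composition of two already-proved results, so there is no genuine obstacle. The only point needing care is the transfer of the if-and-only-if in part (4): one must note that, because Lemma \ref{lem.Hermitian EAEECCs} applies to every linear code, the EAQECC with the target parameters exists precisely when the underlying classical code with $l$-dimensional Hermitian hull exists, and the latter existence is exactly what Theorem \ref{th.arbitrary hull codess with length n+1}(2) equates to those inequalities. Verifying that the ebit counts $(n+i)-k-l$ (resp. $n-2k+2$) and the dimensions $k-l$ remain non-negative over the relevant ranges is the only remaining bookkeeping.
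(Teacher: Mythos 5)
Your proposal is correct and matches the paper's own argument: the paper derives this corollary exactly by feeding Corollary \ref{coro.new_arbitrary dimension Hermitian hull codes with length n+2i and n+2i+1} (for parts (1)--(2)) and Theorem \ref{th.arbitrary hull codess with length n+1} (for parts (3)--(4), with $p^e+1=\sqrt{q}+1$ since $e=\frac{h}{2}$) into Lemma \ref{lem.Hermitian EAEECCs}, declaring the result straightforward. Your extra care about transferring the if-and-only-if in part (4) and the bookkeeping of the ebit counts is consistent with, and slightly more explicit than, the paper's treatment.
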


\subsection{Comparisons with other propagation rules}
In some other articles, the results presented in Corollaries \ref{coro.EAQECCS for Galois hulls}-\ref{coro.EAQECCS for Hermitian hulls} are usually called 
propagation rules.  
Taking the Hermitage case, i.e., Corollary \ref{coro.EAQECCS for Hermitian hulls}, as an example, 
we can express our conclusions that if there exists an $[[n,0,d;n-2k]]_{\sqrt[]{q}}$ EAQECC constructed 
from an $[n,k,d]_q$ Hermitian self-orthogonal code, then all $\sqrt{q}$-ary EAQECCs in Corollary 
\ref{coro.EAQECCS for Hermitian hulls} (1)-(4) exist, i.e., 
\begin{align}\label{eq.propgation rule form}
    [[n,0,d;n-2k]]_{\sqrt{q}}\ \Rightarrow\ \sqrt{q}\textrm{-ary EAQECCs in Corollary \ref{coro.EAQECCS for Hermitian hulls} (1)-(4)}.
\end{align} 
Note that Equation (\ref{eq.propgation rule form}) is a general representation of a propagation rule for EAQECCs.

As revealed in \cite{RefJ B4}, most known propagation rules were designed for stabilizer QECCs. 
In view of the fact that EAQECCs become stabilizer QECCs if $c=0$, we can know that our propagation 
rules are applicable to both stabilizer QECCs and EAQECCs from Equation (\ref{eq.propgation rule form}).  

Note that the core idea of our propagation rules is to increase the length of a known EAQECC, which 
can be accompanied by varying degrees of increase in the minimum distance. 
For ease of reference, we only consider the Hermitian case in this subsection and list some known 
propagation rules for increasing the length of a known EAQECC in Table \ref{tab:propagation}.

\begin{table}[H]
    \centering
    \caption{Propagation rules for increasing the length of a known EAQECC}
    \label{tab:propagation}       
        \begin{tabular}{c|c|c|c}
            \hline
             Class & propagation rule & Limitation & Ref. \\ \hline
             1 & \tabincell{c}{$[[n,k-l,d;n-k-l]]_{\sqrt{q}}$\\ $\Rightarrow \ [[n+1,k-l,d;n-k-l]]_{\sqrt{q}}$} & $-$ & Trivial \\ \hline

             2 & \tabincell{c}{$[[n,k-l,d;n-k-l]]_{\sqrt{q}}$\\ $\Rightarrow\ [[n+1,k-l-1,\widetilde{d}_1;n-k-l]]_{\sqrt{q}}$} & 
             \tabincell{c}{ $k\geq l+1$, \\ $n\geq k+l+1$, $d\leq \widetilde{d}_1\leq d+1$} & \cite[Theorem 16]{RefJ B4} \\ \hline

             3 & \tabincell{c}{$[[n,k-l,d;n-k-l]]_{\sqrt{q}}$\\ $\Rightarrow\ [[n+1,k-l,\widetilde{d}_1;n-k-l-1]]_{\sqrt{q}}$} & 
             \tabincell{c}{ $n\geq k+l+1$, $\widetilde{d}_1\leq d$} & \cite[Theorem 18]{RefJ B4} \\ \hline

             4 & \tabincell{c}{$[[n,k-l,d;n-k-l]]_{\sqrt{q}}$\\ $\Rightarrow\ [[n+r,k-l,\widetilde{d}_r;n-k-l+r]]_{\sqrt{q}}$} & 
             \tabincell{c}{$q\geq 3$ is odd,\\ $0\leq r\leq k-l$, $d\leq \widetilde{d}_r\leq d+r$} & \cite[Proposition 5.4]{RefJ B5} \\ \hline
        \end{tabular}
\end{table}

Detailed comparisons are included in the following remark. 

\begin{remark}\label{rem16.propagation comparisons} $\quad$
    \begin{enumerate}
        \item [\rm (1)] Compared to Classes $1$ and $3$ in Table \ref{tab:propagation}, our propagation rule in 
        Equation (\ref{eq.propgation rule form}) can increase the minimum distance by different levels, while the 
        minimum distance in Class $1$ remains the same, and in Class $3$ may decrease. 
        Additionally, $n-k-l=0$ (i.e., $n=k+l$) is inapplicable for Class $3$, whereas this case is included 
        in Equation (\ref{eq.propgation rule form}). 

        \item [\rm (2)] Compared to Class $2$ in Table \ref{tab:propagation}, our propagation rule in 
        Equation (\ref{eq.propgation rule form}) always keep the dimension instead of reducing it. 
        Additionally, $k-l=0$ (i.e., $k=l$) and $n-k-l=0$ (i.e., $n=k+l$) are inapplicable for Class $2$. 
        However, these cases are included in Equation (\ref{eq.propgation rule form}).

        \item [\rm (3)] Compared to Class $4$ in Table \ref{tab:propagation}, our propagation rule in 
        Equation (\ref{eq.propgation rule form}) has no limitations on $r$. And the limitation odd $q\geq 3$ can 
        be relaxed in our constructions. Particularly, we observe that when $k=l$, Class $4$ is no 
        longer able to generate new EAQECCs, while our propagation rule is still feasible. 

    \end{enumerate}
\end{remark}

\subsection{The performance of derived EAQECCs}

We now discuss the performance of new EAQECCs by rates $\rho$ and net rates $\bar{\rho}$. 
Taking $\mathcal{Q}_{i}'$ in Corollary \ref{coro.EAQECCS for Hermitian hulls} (1) as an example, 
it is easy to check that the net rate $\bar{\rho}$ of $\mathcal{Q}_{i}'$ is 
\begin{align}\label{eq.net rate}
    \bar{\rho}=\frac{(n+i-k-l)-(k-l)}{n+i}=\frac{n+i-2k}{n+i}\geq 0
\end{align}
for any $i\geq 0$ and $i\neq 1$ since the initial $[n,k,d]_q$ code $\C$ is a Hermitian self-orthogonal code, i.e., $2k\leq n$. 
Note that $\bar{\rho}=0$ in Equation (\ref{eq.net rate}) if and only if $n=2k$ and $i=0$. Particularly, $n=2k$ implies that the initial 
Hermitian self-orthogonal code $\C$ is actually Hermitian self-dual. Therefore, in general, we have $\bar{\rho}>0$. 
In addition, if $0\leq \dim(\Hull_H(\C_{i}))=l\leq \min\{\lfloor \frac{n+i}{2} \rfloor-k,k\}$, we have 
$2l\leq \min\{n+i-2k,2k\}\leq n+i-2k$, which follows $2n+2i-2k-2l\geq n+i$. Hence, the rate $\rho$ 
of $\mathcal{Q}_{i}'$ is 
\begin{align*}
    \rho=\frac{n+i-k-l}{n+i}\geq \frac{1}{2}.
\end{align*}

By similar calculations, one can easily know that analogous conclusions also hold for $\mathcal{Q}_{2i}'$ ($i\geq 0$) 
and $\mathcal{Q}_1'$. Thus, summarizing the analysis as above, our propagation rules are new and can generate EAQECCs 
with rates greater than or equal to $\frac{1}{2}$ and positive net rates. Therefore, many of the derived EAQECCs can 
have a nice performance and can be further employed for constructing catalytic codes.

\subsection{New EAQECCs and MDS EAQECCs}

Using the Hermitian self-orthogonal MDS codes constructed in Section \ref{sec5}, new EAQECCs can be derived from 
Corollary \ref{coro.EAQECCS for Hermitian hulls} immediately. 
In accordance with Remark \ref{rem14.EA-quantum Singleton bound} $(2)$, we can easily verify that part of these  
new EAQECCs are MDS EAQECCs.

\begin{theorem}\label{th.GRS OUR NEW EAQECCS FROM TWO NEW HERMITIAN SELF-ORTHOGONAL CODES}
    Let $n=tn'$, where $n'\mid (q-1)$ and $1\leq t\leq \frac{\sqrt{q}-1}{n_1}$ with $n_1=\frac{n'}{\gcd(n',\sqrt{q}+1)}$. 
    Then for each $1\leq k\leq \lfloor \frac{\sqrt{q}+n'(t-1)}{\sqrt{q}+1} \rfloor$, the following statements hold. 
    \begin{itemize}
        \item [\rm (1)] There exists 
        an $[[n,k-l,n-k+1;n-k-l]]_{\sqrt{q}}$ EAQECC $\mathcal{Q}$ and 
        an $[[n,n-k-l,k+1;k-l]]_{\sqrt{q}}$ MDS EAQECC $\mathcal{Q}'$ for $0\leq l\leq k$.
        
        \item [\rm (2)] If $q>4$, there exists 
        an $[[n+i,k-l,\widetilde{d}_{i};n+i-k-l]]_{\sqrt{q}}$ EAQECC $\mathcal{Q}_{i}$ and 
        an $[[n+i,n+i-k-l,\widetilde{d}_{i}^{\bot_\frac{h}{2}};k-l]]_{\sqrt{q}}$ EAQECC $\mathcal{Q}_{i}'$ 
        for $i\geq 2$ and $0\leq l\leq k$, where $d\leq \widetilde{d}_{i}\leq n+i+1-k$.

    \end{itemize}
\end{theorem}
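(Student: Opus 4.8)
The plan is to feed the Hermitian self-orthogonal MDS codes produced in Theorem \ref{th.Con_Hermitian self-orthogonal GRS codes} into the EAQECC machinery assembled earlier. Fix $n=tn'$ and $1\leq k\leq \lfloor \frac{\sqrt{q}+n'(t-1)}{\sqrt{q}+1}\rfloor$ as in the hypothesis, and let $\C$ be the resulting $[n,k,n-k+1]_q$ Hermitian self-orthogonal MDS code, so that $\dim(\Hull_{\frac{h}{2}}(\C))=k$ and, in particular, $2k\leq n$ (since $\C\subseteq \C^{\bot_{\frac{h}{2}}}$ forces $k\leq n-k$).

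For part $(1)$ I would first apply Lemma \ref{coro.all_Galois hull from Galois self-orthogonal code} with $e=\frac{h}{2}$ to obtain, for each $0\leq l\leq k$, an $[n,k,n-k+1]_q$ code $\C_l$ with $\dim(\Hull_{\frac{h}{2}}(\C_l))=l$; since $\C_l$ shares the parameters of $\C$, it is again MDS. The key auxiliary fact is that the Hermitian dual of an MDS code is MDS: by Lemma \ref{lemma3}, $\C_l^{\bot_{\frac{h}{2}}}=\sigma^{\frac{h}{2}}(\C_l^{\bot_0})$, and since the field automorphism $\sigma$ preserves Hamming weights coordinatewise while the Euclidean dual of an MDS code is MDS, $\C_l^{\bot_{\frac{h}{2}}}$ is an $[n,n-k,k+1]_q$ MDS code. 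Plugging $\C_l$ into Lemma \ref{lem.Hermitian EAEECCs} then yields the $\sqrt{q}$-ary EAQECC $\mathcal{Q}=[[n,k-l,n-k+1;n-k-l]]_{\sqrt{q}}$ together with its companion $\mathcal{Q}'=[[n,n-k-l,k+1;k-l]]_{\sqrt{q}}$, where the dual distance $k+1$ is exactly the value just established.

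It then remains to certify that $\mathcal{Q}'$ is MDS, which I would do via Remark \ref{rem14.EA-quantum Singleton bound}$(2)$: the underlying code $\C_l^{\bot_{\frac{h}{2}}}$ is MDS with minimum distance $k+1$, and the condition $2(k+1)\leq n+2$ holds precisely because $2k\leq n$, so the EA-quantum Singleton bound of Lemma \ref{lem2.2} is met with equality ($n-k-l = n+(k-l)-2k$). I would also note, for contrast, that running the same check on $\mathcal{Q}$ requires $2(n-k+1)\leq n+2$, i.e.\ $n\leq 2k$, which fails unless $\C$ is Hermitian self-dual; this is exactly why only $\mathcal{Q}'$ is asserted to be MDS.

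For part $(2)$ the route is shorter: starting again from $\C$ and assuming $q>4$, Theorem \ref{th.Hermitian} extends $\C$ to $[n+i,k,\widetilde{d}_i]_q$ Hermitian self-orthogonal codes for every $i\geq 2$ with $d\leq \widetilde{d}_i\leq n+i+1-k$, and Corollary \ref{coro.EAQECCS for Hermitian hulls}$(1)$ then supplies $\mathcal{Q}_i$ and $\mathcal{Q}_i'$ directly, so no new argument beyond this instantiation is needed. The main obstacle is therefore not the construction but the bookkeeping in part $(1)$: pinning down the dual distance $k+1$ and verifying the Singleton-equality condition $2(k+1)\leq n+2$ so that the MDS label on $\mathcal{Q}'$ is justified, while recognizing that the parallel computation for $\mathcal{Q}$ violates $2d\leq n+2$ and hence does not certify $\mathcal{Q}$ as MDS.
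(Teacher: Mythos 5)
Your proposal is correct and follows essentially the same route as the paper, which derives this theorem immediately by feeding the Hermitian self-orthogonal MDS codes of Theorem \ref{th.Con_Hermitian self-orthogonal GRS codes} into Lemma \ref{coro.all_Galois hull from Galois self-orthogonal code}, Lemma \ref{lem.Hermitian EAEECCs} (equivalently Corollary \ref{coro.EAQECCS for Hermitian hulls}), and the MDS criterion of Remark \ref{rem14.EA-quantum Singleton bound}(2). Your explicit verifications --- that the Hermitian dual of an MDS code is MDS, that $2(k+1)\leq n+2$ follows from $2k\leq n$, and that the parallel check fails for $\mathcal{Q}$ --- are exactly the bookkeeping the paper leaves implicit.
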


\begin{theorem}\label{th.GRS OUR NEW EAQECCS FROM TWO NEW HERMITIAN SELF-ORTHOGONAL CODES222}
    Let $q=p^h$ be odd. Assume that $(q-1)\mid {\rm lcm}(x_1,x_2)$ and $\gcd(x_2,q-1)\mid x_1(\sqrt{q}-1)$, where $x_1$ and $x_2$ are two integers. 
    Let $n=n_1n_2$, where $1\leq n_1\leq \frac{q-1}{\gcd(q-1,x_1)}$ and $n_2=\frac{q-1}{\gcd(q-1,x_2)}$. 
    Then for any $1\leq k\leq \lfloor \frac{\sqrt{q}+n-n_2}{\sqrt{q}+1} \rfloor$, the following statements hold. 
    \begin{itemize}
        \item [\rm (1)] There exists 
        an $[[n,k-l,n-k+1;n-k-l]]_{\sqrt{q}}$ EAQECC $\mathcal{Q}$ and 
        an $[[n,n-k-l,k+1;k-l]]_{\sqrt{q}}$ MDS EAQECC $\mathcal{Q}'$ for $0\leq l\leq k$.
        
        \item [\rm (2)] If $q>4$, there exists 
        an $[[n+i,k-l,\widetilde{d}_{i};n+i-k-l]]_{\sqrt{q}}$ EAQECC $\mathcal{Q}_{i}$ and 
        an $[[n+i,n+i-k-l,\widetilde{d}_{i}^{\bot_\frac{h}{2}};k-l]]_{\sqrt{q}}$ EAQECC $\mathcal{Q}_{i}'$ 
        for $i\geq 2$ and $0\leq l\leq k$, where $d\leq \widetilde{d}_{i}\leq n+i+1-k$.

    \end{itemize}
\end{theorem}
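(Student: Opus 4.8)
The plan is to assemble the statement from ingredients already established: the second construction of Hermitian self-orthogonal MDS codes (Theorem~\ref{th.Con_Hermitian self-orthogonal GRS codes222}), the hull-dimension tuning of Chen (Lemma~\ref{coro.all_Galois hull from Galois self-orthogonal code}), the Hermitian EAQECC construction of Lemma~\ref{lem.Hermitian EAEECCs} together with its dual formulation in Corollary~\ref{coro_Galois EAQECC}, and the length-extension of Theorem~\ref{th.Hermitian}. First I would apply Theorem~\ref{th.Con_Hermitian self-orthogonal GRS codes222} directly: under the stated divisibility hypotheses on $x_1,x_2$ it yields an $[n,k,n-k+1]_q$ Hermitian self-orthogonal MDS code $\C$ for every $1\le k\le \lfloor \frac{\sqrt{q}+n_2(n_1-1)}{\sqrt{q}+1}\rfloor$. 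Since $n=n_1n_2$, one has $n_2(n_1-1)=n-n_2$, so this dimension range is exactly $1\le k\le \lfloor \frac{\sqrt{q}+n-n_2}{\sqrt{q}+1}\rfloor$, as claimed. Because $\C$ is MDS, its Hermitian dual $\C^{\bot_{\frac{h}{2}}}$ is an $[n,n-k,k+1]_q$ MDS code.

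For part (1) (the case $i=0$), the plan is as follows. Since $q=p^h$ is odd we have $q\ge 3$, so Lemma~\ref{coro.all_Galois hull from Galois self-orthogonal code} applies to the Hermitian self-orthogonal code $\C$ and produces, for each $0\le l\le k$, an $[n,k,n-k+1]_q$ code whose Hermitian hull has dimension $l$. Feeding this code into Corollary~\ref{coro_Galois EAQECC} (equivalently Lemma~\ref{lem.Hermitian EAEECCs}) gives the $\sqrt{q}$-ary EAQECC $\mathcal{Q}=[[n,k-l,n-k+1;n-k-l]]_{\sqrt{q}}$. For the dual code, Corollary~\ref{coro.C and C dual Hull equals} guarantees $\dim(\Hull_{\frac{h}{2}}(\C^{\bot_{\frac{h}{2}}}))=\dim(\Hull_{\frac{h}{2}}(\C))=l$, so the same lemma yields $\mathcal{Q}'=[[n,n-k-l,k+1;k-l]]_{\sqrt{q}}$, using $d^{\bot_{\frac{h}{2}}}=k+1$.

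It then remains to confirm that $\mathcal{Q}'$ is MDS. Here I would use that $\C$ is self-orthogonal, which forces $2k\le n$, hence $2(k+1)\le n+2$; substituting the parameters of $\mathcal{Q}'$ into the EA-quantum Singleton bound of Lemma~\ref{lem2.2} then produces an equality, and by Remark~\ref{rem14.EA-quantum Singleton bound}~(2) this certifies $\mathcal{Q}'$ as an MDS EAQECC. For part (2), when $q>4$ I would first invoke Theorem~\ref{th.Hermitian} to lengthen $\C$: for every $i\ge 2$ it yields an $[n+i,k,\widetilde{d}_i]_q$ Hermitian self-orthogonal code with $d\le \widetilde{d}_i\le n+i+1-k$. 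Re-running the hull-tuning and EAQECC construction from part (1) on these longer codes delivers the asserted pairs $\mathcal{Q}_i$ and $\mathcal{Q}_i'$; this is precisely Corollary~\ref{coro.EAQECCS for Hermitian hulls}~(1) applied to the specific MDS codes above and restricted to $i\ge 2$ (the excluded value $i=1$ being covered by Theorem~\ref{th.arbitrary hull codess with length n+1}, and $i=0$ by part (1)).

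Most of the work is routine parameter bookkeeping; the one place requiring a moment of care is the MDS claim for $\mathcal{Q}'$, which hinges on the self-orthogonality inequality $2k\le n$ to guarantee the hypothesis $2d\le n+2$ of the EA-quantum Singleton bound, so that equality—and hence MDS-ness—can legitimately be asserted.
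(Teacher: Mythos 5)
Your proposal is correct and follows essentially the same route as the paper, which obtains this theorem as an immediate application of Theorem \ref{th.Con_Hermitian self-orthogonal GRS codes222} combined with Corollary \ref{coro.EAQECCS for Hermitian hulls} (itself built from Lemma \ref{coro.all_Galois hull from Galois self-orthogonal code}, Theorem \ref{th.Hermitian} and Lemma \ref{lem.Hermitian EAEECCs}), with MDS-ness of $\mathcal{Q}'$ checked via Remark \ref{rem14.EA-quantum Singleton bound}~(2). You merely make explicit the bookkeeping the paper glosses over --- the identity $n_2(n_1-1)=n-n_2$ matching the dimension range, and the self-orthogonality inequality $2k\leq n$ guaranteeing the hypothesis $2d\leq n+2$ of Lemma \ref{lem2.2} --- both of which are verified correctly.
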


\begin{remark}\label{rem17.new EAQECCs} $\quad$
    \begin{enumerate}
        \item [\rm (1)] From Remarks \ref{rem11.th38.Hermitian self-orthogonal}-\ref{rem13.th40.Hermitian self-orthogonal222} and \ref{rem16.propagation comparisons}, 
        we can know that many EAQECCs and MDS EAQECCs obtained from Theorems 
        \ref{th.GRS OUR NEW EAQECCS FROM TWO NEW HERMITIAN SELF-ORTHOGONAL CODES} and 
        \ref{th.GRS OUR NEW EAQECCS FROM TWO NEW HERMITIAN SELF-ORTHOGONAL CODES222} should be new. 

        \item [\rm (2)] According to Corollaries \ref{coro.EAQECCS for Hermitian hulls} $(3)$ and $(4)$, one can also obtain many EAQECCs and MDS EAQECCs of 
        length $n+1$. In Theorems \ref{th.GRS OUR NEW EAQECCS FROM TWO NEW HERMITIAN SELF-ORTHOGONAL CODES} and 
        \ref{th.GRS OUR NEW EAQECCS FROM TWO NEW HERMITIAN SELF-ORTHOGONAL CODES222}, we omit these EAQECCs and MDS EAQECCs because MDS codes of these lengths with 
        Hermitian hulls of  arbitrary dimensions have been constructed in \cite{RefJ-5} and \cite{RefJ-12}.    

        \item [\rm (3)] Corollary \ref{coro.EAQECCS for Hermitian hulls} $(2)$ does not work when $q = 4$ due to specific restrictions. 
        Notably, due to availability of MDS EAQECCs, in Theorems \ref{th.GRS OUR NEW EAQECCS FROM TWO NEW HERMITIAN SELF-ORTHOGONAL CODES} and 
        \ref{th.GRS OUR NEW EAQECCS FROM TWO NEW HERMITIAN SELF-ORTHOGONAL CODES222}, we have divided Corollary \ref{coro.EAQECCS for Hermitian hulls} $(1)$ into two parts.

    \end{enumerate}
\end{remark}

In the following, we give some examples on new EAQECCs and MDS EAQECCs. 
For many $\sqrt{q}$-ary MDS EAQECCs, their lengths are greater than $\sqrt{q}+1$, and their minimum distances are greater than or equal to $\lceil \frac{\sqrt{q}}{2} \rceil$.

\begin{example}\label{example.QECCs}
    In this example, we give some $\sqrt[]{q}$-ary MDS EAQECCs with $c=0$ derived from Theorems 
    \ref{th.GRS OUR NEW EAQECCS FROM TWO NEW HERMITIAN SELF-ORTHOGONAL CODES} (1) and 
    \ref{th.GRS OUR NEW EAQECCS FROM TWO NEW HERMITIAN SELF-ORTHOGONAL CODES222} (1) in 
    Tables \ref{tab:QECCs1} and \ref{tab:QECCs2}, respectively. Readers can easily notice that these 
    $\sqrt{q}$-ary MDS EAQECCs have lengths larger than $\sqrt{q}+1$ and their minimum distances are 
    greater than or equal to $\lceil \frac{\sqrt{q}}{2} \rceil$. 
    Also, we recall that an EAQECC becomes a stabilizer QECC if $c=0$. 

    \begin{table}[H]
        \centering
        \caption{Some $9$-ary and $11$-ary MDS EAQECCs with $c=0$ from Theorem \ref{th.GRS OUR NEW EAQECCS FROM TWO NEW HERMITIAN SELF-ORTHOGONAL CODES} $(1)$}
        \label{tab:QECCs1}       
            \begin{tabular}{cc||cc}
                \hline
               $(q,n',t)$ & $\sqrt{q}$-ary MDS EAQECCs & $(q,n',t)$ & $\sqrt{q}$-ary MDS EAQECCs \\
                \hline
                $(3^4,5,8)$ & $[[40,32,5;0]]_{9}$ & $(3^4,10,5)$ & $[[50,42,5;0]]_{9}$ \\

                $(3^4,10,6)$ & $[[60,52,5;0]]_{9}$ & $(3^4,10,6)$ & $[[60,50,6;0]]_{9}$ \\

                $(3^4,10,7)$ & $[[70,62,5;0]]_{9}$ & $(3^4,10,7)$ & $[[70,60,6;0]]_{9}$ \\

                $(3^4,10,7)$ & $[[70,58,7;0]]_{9}$ & $(11^2,6,10)$ & $[[60,50,6;0]]_{11}$ \\
               
                $(11^2,12,6)$ & $[[72,62,6;0]]_{11}$ & $(11^2,12,7)$ & $[[84,74,6;0]]_{11}$ \\ 
                
                $(11^2,12,7)$ & $[[84,74,6;0]]_{11}$ & $(11^2,12,8)$ & $[[96,86,6;0]]_{11}$  \\ 
                
                $(11^2,12,8)$ & $[[96,84,7;0]]_{11}$ & $(11^2,12,8)$ & $[[96,82,8;0]]_{11}$ \\ 
                
                $(11^2,12,9)$ & $[[108,98,6;0]]_{11}$ & $(11^2,12,9)$ & $[[108,96,7;0]]_{11}$ \\ 
                
                $(11^2,12,9)$ & $[[108,94,8;0]]_{11}$ & $(11^2,12,9)$ & $[[108,92,9;0]]_{11}$ \\ 
                \hline
            \end{tabular}
    \end{table}

    \begin{table}[H]
        \centering
        \caption{Some $19$-ary and $23$-ary MDS EAQECCs with $c=0$ from Theorem \ref{th.GRS OUR NEW EAQECCS FROM TWO NEW HERMITIAN SELF-ORTHOGONAL CODES222} $(1)$}
        \label{tab:QECCs2}       
        \begin{tabular}{cc||cc}
            \hline
           $(q,x_1,x_2,n_1,n_2)$ & $\sqrt{q}$-ary MDS EAQECCs & $(q,x_1,x_2,n_1,n_2)$ & $\sqrt{q}$-ary MDS EAQECCs \\
            \hline
            $(19^2,40,9,6,40)$ & $[[240,222,10;0]]_{19}$ & $(19^2,40,9,6,40)$ & $[[240,220,11;0]]_{19}$ \\ 

            $(19^2,40,9,7,40)$ & $[[280,262,10;0]]_{19}$ & $(19^2,40,9,7,40)$ & $[[280,260,11;0]]_{19}$ \\ 
            $(19^2,40,9,7,40)$ & $[[280,258,12;0]]_{19}$ & $(19^2,40,9,7,40)$ & $[[280,256,13;0]]_{19}$ \\ 

            $(19^2,40,9,8,40)$ & $[[320,302,10;0]]_{19}$ & $(19^2,40,9,8,40)$ & $[[320,300,11;0]]_{19}$ \\ 
            $(19^2,40,9,8,40)$ & $[[320,298,12;0]]_{19}$ & $(19^2,40,9,8,40)$ & $[[320,296,13;0]]_{19}$ \\ 
            $(19^2,40,9,8,40)$ & $[[320,294,14;0]]_{19}$ & $(19^2,40,9,8,40)$ & $[[320,292,15;0]]_{19}$ \\

            $(23^2,48,11,7,48)$ & $[[336,314,12;0]]_{23}$ & $(23^2,48,11,7,48)$ & $[[336,312,13;0]]_{23}$ \\ 

            $(23^2,48,11,8,48)$ & $[[384,362,12;0]]_{23}$ & $(23^2,48,11,8,48)$ & $[[384,356,15;0]]_{23}$ \\ 

            $(23^2,48,11,9,48)$ & $[[432,402,16;0]]_{23}$ & $(23^2,48,11,9,48)$ & $[[432,400,17;0]]_{23}$ \\ 

            $(23^2,48,11,10,48)$ & $[[480,454,14;0]]_{23}$ & $(23^2,48,11,10,48)$ & $[[480,452,15;0]]_{23}$ \\ 
            $(23^2,48,11,10,48)$ & $[[480,450,16;0]]_{23}$ & $(23^2,48,11,10,48)$ & $[[480,448,17;0]]_{23}$ \\ 
            $(23^2,48,11,10,48)$ & $[[480,446,18;0]]_{23}$ & $(23^2,48,11,10,48)$ & $[[480,444,19;0]]_{23}$ \\

            \hline
        \end{tabular}
    \end{table}

\end{example}

\begin{example}\label{exam.EAQECCs3}
    As Remark \ref{rem16.propagation comparisons} demonstrates, our propagation rules are new, and thus, some EAQECCs with good and new 
    parameters can also be obtained from our propagation rules.  We give some concrete examples as follows. 

    \begin{enumerate}
        \item [\rm (1)] Consider the $[[4,0,2;0]]_2$ EAQECC constructed from the $[4,2,2]_4$ Hermitian self-orthogonal code listed in Example \ref{example2} $(1)$.  
        From Example \ref{example2} $(1)$, we can get a $[6,2,4]_4$ Hermitian self-orthogonal code and its Hermitian dual code has parameters $[6,4,2]_4$. 
        Applying Corollary \ref{coro.EAQECCS for Hermitian hulls} $(2)$, we can derive 
        $[[6,0,4;2]]_2$, $[[6,1,4;3]]_2$, $[[6,2,4;4]]_2$, $[[6,4,2;2]]_2$, $[[6,3,2;1]]_2$ and $[[6,2,2;0]]_2$ EAQECCs. 
        Compared with EAQECCs listed in \cite{RefJ-10}, we know that $[[6,2,2;0]]_2$ is optimal and $[[6,2,4;4]]_2$ is best-known. 
        In addition, one can easily check that optimal $[[8,4,2;0]]_2$, $[[10,6,2;0]]_2$, $[[12,8,2;0]]_2$, $[[14,10,2;0]]_2$, $\cdots$, EAQECCs can be similarly obtained. 

        \item [\rm (2)] Consider the $[[13,0,11;9]]_3$ EAQECC constructed from the $[13,2,11]_9$ Hermitian self-orthogonal code with the Herimitian dual code $[13,11,2]_9$ listed in Example \ref{example2} $(3)$.  
        From Example \ref{example2} $(3)$, we can get a $[15,2,13]_9$ Hermitian self-orthogonal code and its Hermitian dual code has parameters $[15,13,2]_9$. 
        Applying Corollary \ref{coro.EAQECCS for Hermitian hulls} $(1)$, we can derive 
        $[[13,0,11;9]]_3$, $[[13,1,11;10]]_3$, $[[13,2,11;11]]_3$, $[[13,11,2;2]]_3$, $[[13,10,2;1]]_3$, $[[13,9,2;0]]_3$,  
        $[[15,0,13;11]]_3$, $[[15,1,13;12]]_3$, $[[15,2,13;13]]_3$, $[[15,13,2;2]]_3$, $[[15,12,2;1]]_3$ and $[[15,11,2;0]]_3$ EAQECCs. 
        Compared with EAQECCs listed in \cite{RefJ-10}, we know that $[[13,0,11;9]]_3$, $[[13,2,11;11]]_3$, $[[15,0,13;11]]_3$ and $[[15,2,13;13]]_3$ are new,  
        $[[13,9,2;0]]_3$ and $[[15,11,2;0]]_3$ are optimal, and $[[13,1,11;10]]_3$ and $[[15,1,13;12]]_3$ are best-known. 
        For a more visual presentation, we only list new $3$-ary EAQECCs in Table \ref{tab:3-ary EAQECCs}.

    \begin{table}[H]
        \centering
        \caption{New $3$-ary $[[n,k,d;c]]_3$ EAQECCs from Example \ref{exam.EAQECCs3} $(2)$}
        \label{tab:3-ary EAQECCs}       
            \begin{tabular}{c|c||c|c}
                \hline
                \tabincell{c}{The known EAQECCs in \cite{RefJ-10}}  & \tabincell{c}{New EAQECCs} & \tabincell{c}{The known EAQECCs in \cite{RefJ-10}}  & \tabincell{c}{New EAQECCs}  \\
                \hline
                \tabincell{l}{$[[13,0,6;0]]_3$\\ $[[13,0,7;3]]_3$\\ $[[13,0,8;5]]_3$\\ $[[13,0,10;9]]_3$\\ $[[13,0,13;11]]_3$\\ $[[13,0,14;13]]_3$} & $\mathbf{[[13,0,11;9]]_3}$ & 
                \tabincell{l}{$[[13,2,5;0]]_3$\\ $[[13,2,7;2]]_3$\\ $[[13,2,8;6]]_3$\\ $[[13,2,9;8]]_3$\\ $[[13,2,10;9]]_3$} & $\mathbf{[[13,2,11;11]]_3}$  \\ \hline
    
                \tabincell{l}{$[[15,0,6;0]]_3$\\ $[[15,0,7;1]]_3$\\ $[[15,0,8;3]]_3$\\ $[[15,0,9;7]]_3$\\ $[[15,0,12;9]]_3$\\ $[[15,0,15;13]]_3$\\ $[[15,0,16;15]]_3$} & $\mathbf{[[15,0,13;11]]_3}$ & 
                \tabincell{l}{$[[15,2,5;0]]_3$\\ $[[13,2,6;1]]_3$\\ $[[15,2,9;2]]_3$\\ $[[15,2,10;8]]_3$\\ $[[15,2,11;10]]_3$\\ $[[15,2,12;11]]_3$} & $\mathbf{[[15,2,13;13]]_3}$  \\
    
                \hline
            \end{tabular}
    \end{table}
    \end{enumerate}
\end{example}

\begin{remark}\label{rem}
    The facts in Example \ref{exam.EAQECCs3} show intuitively that once an initial self-orthogonal code or an EAQECC with good parameters is given, 
    many good EAQECCs can be easily obtain by applying our propagation rules. 
    Naturally, taking similar calculations and more searches by the Magma software package \cite{BCP1997}, we believe that more new and good EAQECCs can be obtained.
\end{remark}

\section{Summary and concluding remarks}\label{sec7}

As a generalization of the Euclidean inner product and the Hermitian inner product, the Galois inner product has drawn 
a tremendous amount of attention from researchers recently. Due to the applications of Euclidean and Hermitian self-orthogonal 
codes in constructing classical QECCs and EAQECCs, one would also like to explore possible similar applications on Galois 
self-orthogonal codes. Some outstanding studies have given positive answers to this question. Therefore, the construction 
of Galois self-orthogonal codes over finite fields and the construction of codes with Galois hulls of arbitrary dimensions have 
become two very meaningful topics.  

In this paper, we first present some general properties of the dimension of the Galois hulls of linear codes. 
Then we systematically investigate how to construct Galois self-orthogonal codes of larger length from known Galois self-orthogonal codes. 
Automatically, codes of larger length with Galois hulls of arbitrary dimensions are derived. 
Two new families of Hermitian self-orthogonal MDS codes are constructed and are used to construct new EAQECCs and MDS EAQECCs. 
In particular, in our constructions, optimal EAQECCs, best-known EAQECCs, and EAQECCs with rates $\rho\geq \frac{1}{2}$ and net rates $\bar{\rho}>0$ exist. 
As a further application, catalytic codes can be obtained. Some interesting self-orthogonal codes and $m$-MDS codes are also given as examples.

We remark two interesting problems for the future work. 
The first one is to narrow down the range of the minimum distances of self-orthogonal codes of larger length in this paper. 
And the second one is to classify these Galois self-orthogonal codes.  

\section*{Acknowledgments}
The authors thank Prof. Markus Grassl for insightful discussions. His comments on an earlier draft of this work 
led us to further compare known propagation rules with ours and to consider the application of the Galois 
inner product to the construction of EAQECCs.
This research is supported by the National Natural Science Foundation of China (Nos.U21A20428 and 12171134). 
All the examples in this paper were computed with the Magma software package.

\section*{}

\end{sloppypar}
\end{document}